\newtheorem{theorem}{Theorem}[section]
\newtheorem*{theorem*}{Theorem} 
\newtheorem{lemma}[theorem]{Lemma}
\newtheorem*{lemma*}{Lemma} 
\newtheorem{corollary}[theorem]{Corollary}
\newcommand{\be}{\begin{equation}}
\newcommand{\ee}{\end{equation}}
\newcommand{\F}{\mathcal{F}}
\newcommand{\R}{\mathbb{R}}
\newcommand{\Q}{\mathbb{Q}}
\newcommand{\Z}{\mathbb{Z}}
\newcommand{\T}{\mathbb{T}}
\newcommand{\st}{\text{s.t.}}
\newcommand{\prob}{\text{Prob}}
\newcommand{\one}{\textbf{1}}
\begin{document}

\title{Reconstruction of Binary Functions and Shapes from Incomplete Frequency Information}

\author{Yu Mao

Institute for Mathematics and Its Applications, University of Minnesota.}

\maketitle

\begin{abstract}
The characterization of a binary function by partial frequency information is considered. We show that it is possible to reconstruct binary signals from incomplete frequency measurements via the solution of a simple linear optimization problem. We further prove that if a binary function is spatially structured (e.g.  a general black-white image or an indicator function of a shape), then it can be recovered from very few low frequency measurements in general. These results would lead to efficient methods of sensing, characterizing and recovering a binary signal or a shape as well as other applications like deconvolution of binary functions blurred by a low-pass filter. Numerical results are provided to demonstrate the theoretical arguments.
\end{abstract}

\section{Introduction}

This paper discusses the reconstruction of the binary signals. Binary signals appear in a variety of applications like shape processing, bar code and handwriting recognition, obstacle detection, image segmentation; see e.g.  \cite{Litman:1998p7217,Osher:2001p7372,Ishikawa:2003p7731, Esedoglu:2004p7220, Alajlan:2008p7326,  Pock:2010p7690} and many others. 

One of the major difficulties in the reconstruction of binary functions is that the binary constraint is non-convex.  Optimization with a binary constraint is often approached by means of the double-well potential or other nonlinear schemes. In this paper we demonstrate that binary functions can be reconstructed exactly via a simple convex optimization when only partial frequency information is available (e.g.  when the signal is blurred by a low-pass filter). 

\subsection{Main results}

Let $u_0$ be a binary function, i.e. $u_0(x)\in\{0,1\}$, $\forall x$. Let $\F$ be the Fourier transform and $S$ the selecting operator corresponding to the incomplete  measurements $b$.  Our goal is to recover $u_0$ from $b=S\F u_0$, which is an underdetermined problem. The main contribution of this work is showing that under certain conditions, $u_0$ can be exactly reconstructed by solving the convex relaxed optimization problem
\[
\text{find}\quad u \quad \st\quad S \F u=b,\quad  0\leq u\leq 1.
\]
At first glance, this may seem a little bit surprising, as it is not even obvious that the solution of the problem is unique. However, in this work we prove that in many cases, the solution is unique and is equal to $u_0$.
\begin{itemize}
\item When a binary signal is spatially structured, i.e. the 1s and 0s are clustered (e.g.  in a binary image or as an indicator function of a shape), with very few low frequency measurements taken, the solution of this convex optimization problem is deterministically unique and equals to the original binary signal. For a detailed statement, see theorem \ref{lowfreq1D}.
\item If a binary signal has no spatial structure, for example if the 1s and 0s appear randomly, we show that this relaxation works with overwhelming probability when the number of the measurements is more than a half of the size of the signal, and the probability tends to 1 as the size of the signal increases to infinity. For a detailed statement, see theorem \ref{randombinary}.
\item We also propose a very efficient algorithm designed for this convex problem (see algorithm \ref{OverallAlgorithm}). Numerical experiments are presented in section \ref{numerical}. 
\end{itemize}

\subsection{Related works}

The idea that under certain circumstances, the binary constraint can be automatically satisfied by imposing a convex relaxation, in particular the box constraint $0\leq u\leq 1$, is not new. For example, when solving the image segmentation, multi-label and many other problems based on the total variation model (see e.g.  \cite{Chan:2006p2646, Bresson:2007p7744, Pock:2010p7690}), people have noticed that although the original problem is non-convex, the global minimizer can be obtained by solving the relaxed convex problem and the solution will be automatically (almost) binary. However, this approach works only because of the special structure of the variational model. The theoretical analysis strongly depends on the coarea formula for total variation. 

In the contexts of regression and approximation, it is has been known for a long time that in an $L_\infty$ regression (sometimes referred to as Chebyshev or minimax regression where the penalty function is given by the $L_\infty$ norm) or a deadzone-linear penalty regression (where the penalty function is given by the deadzone function $(|\cdot|-a)^+$), the distribution of the residual of the regression will concentrate at the boundary of the feasible domain (see e.g.  \cite{Boyd:2004p1827}, Chapter 6). When the feasible domain is interval $[0,1]$, a function with many values right at the boundary is nothing but a binary signal. In fact, in section \ref{theorygen} we will show that our convex relaxation of the problem can be equivalently reformulated as an $L_\infty$ or deadzone penalty minimization problem. 

The idea of recovering a signal from the partial frequency measurements is often used for compressed sensing \cite{Candes:2006p625, Candes:2006p1807, Donoho:2006p668}, which takes advantage of the prior assumption on the sparsity of the signal and reconstructs the signal via $L_1$ minimization. However, the current work is substantially different from compressed sensing. Although structured binary functions are  a special case of piecewise constant functions whose derivative is sparse, the condition of being binary is actually stronger than simply being piecewise constant, therefore stronger results can be expected. Indeed, none of the results given in this work can be deduced from the standard compressed sensing theory directly, and some of them are of a very different nature. For example, in compressed sensing the frequency measurements should be taken randomly to guarantee the restricted isometry property \cite{Candes:2006p625}, while in the reconstruction of the structured binary function, the low frequency measurements actually play a more important role than the high frequency measurements as discussed below. On the other hand, many major results given in this paper are deterministic, while results in the compressed sensing literature are often intrinsically stochastic. 

The present research is also related to a seminal work on the reconstruction of signals from partial frequency information \cite{Donoho:1989p3875}, where the spatial structures and patterns in both time and frequency domain are used to guarantee the uniqueness of the signal reconstruction. In the very recent research \cite{Donoho:2010p5288,Donoho:2010p6951}, the authors showed that a random binary signal can be recovered with certain probability by means of the relaxed box constraint. In their work the major mathematical tool is the delicate geometric face-counting of random polytopes. We also get a basically similar result in section \ref{theoryrandom}, but from a different approach.  In \cite{Vetterli:2002p7065}, the authors defined the degrees of freedom contained in a sparse or piecewise polynomial signal as the \emph{rate of innovation} of the signal. Then they showed that the quantity of the  samplings needed to recover the signal equals to the rate of innovation. However, the mathematics behind their theory is substantially different than ours. Moreover, their method requires a certain pattern of sampling and the reconstruction involves a factorization of polynomial. In \cite{Bruckstein:2008p2703} the authors showed that if an underdetermined system admits a very sparse nonnegative solution and the matrix has a row-span intersecting the positive orthant, the solution is actually unique. In \cite{Fuchs:2005p8032} the author proved that a sparse nonnegative can be reconstructed as the unique solution of a linear programming problem, where the corresponding matrix is the submatrix of a Fourier matrix consisting of its top rows. We will further discuss the relationship between these two works and ours in section \ref{discussion}.
\subsection{Notations and conventions}

In this paper all signals are assumed to have periodic boundary condition. The $h$-dimensional discrete signals are defined on $\{1,\ldots,N\}^h$ where $N$ is assumed to be even. Here for simplicity we assume that the domain is equilong along each dimension. The $h$-dimensional continuous signals are defined on $\T^h=[0,1]^h$ where the two endpoints $0$ and $1$ are identified due to the periodic boundary condition. 

We use $\F$ to denote the Fourier transform, both in the discrete and continuous periodic cases. 

When we talk about the discrete Fourier transform, we use the following convention:
\[
a_k=\sum_{x\in[1,N]^h} u(x)e^{-2\pi i \langle k,\frac{x}{N}\rangle}
\]
\[
u(x)=\frac{1}{N^h}\sum_{k\in[-\frac{N}{2},\frac{N}{2}-1]^h} a_ke^{2\pi i \langle k,\frac{x}{N}\rangle}
\]
where $\{a_k\}$ are the Fourier coefficients defined on a symmetric support ($N/2$ is treated as same as $-N/2$). A smaller $|k|$ corresponds to a lower frequency. Since $u$ is always real, $\{a_k\}$ satisfies $a_{-k}=\overline{a_k}$. Notice that $\F^\top=N^h\F^{-1}$.  

We use $S$ to denote the selecting operator. $S$ is a diagonal matrix where the selected positions have value $1$ and others are $0$.

\subsection{Contents}
The paper is organized as follows. Section \ref{theory} discusses the theoretical results. In section \ref{solving} an algorithm to solve the convex problem is proposed. Numerical experiments are shown in section \ref{numerical} and conclusion is given in section \ref{conclusion}. To make the main text more concise, we put all proofs into Appendix except those theorems and corollaries immediately deduced from the discussion in the context. 

\section{Theory}\label{theory}

\subsection{General reconstruction theory}\label{theorygen}

Suppose $u_0$ is a discrete binary signal, i.e. $u_0(x)\in\{0,1\},\forall x$. Consider a linear system $Au_0=b$ where $A=S \F$, $\F$ is the Fourier transform and $S$ is the selecting operator. The meaning of this system is clear: some partial frequency information of the binary signal is given, and we want to reconstruct $u_0$ from the incomplete measurements. This leads to the following problem ($P_0$):
\be\label{P_0}
P_0:\quad\text{find}\quad u \quad \st\quad Au=b,\quad  u(x)\in\{0,1\}.
\ee
The problem $(P_0)$ is non-convex due to the binary condition, and the following convex problem is the tight relaxation of $(P_0)$:
\be\label{P_1}
P_1:\quad\text{find}\quad u \quad \st\quad Au=b,\quad  0\leq u\leq 1.
\ee
We want to show that $(P_1)$ can be used to recover $u_0$ under certain conditions. The following theorem specifies the conditions guaranteeing that this relaxation is exact.

\begin{theorem}\label{theorem1}
Assume $u_0$ is a binary solution of $Au_0=b$. There exists no nonzero $v\in\{Av=0\}$ such that 
\be\label{orthant}
\begin{cases}v(x)\leq0, &\text{when}\quad u_0(x)=1\\
v(x)\geq0, &\text{when}\quad u_0(x)=0
\end{cases}
\ee
if and only if $u_0$ is the unique solution of $(P_1)$, i.e. solving $(P_1)$ recovers $u_0$.
\end{theorem}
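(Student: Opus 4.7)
The plan is to establish the equivalence by directly translating the non-uniqueness of $(P_1)$ into the existence of a null-space vector with the prescribed sign pattern, and vice versa. Since $(P_1)$ is a pure feasibility problem, two distinct feasible points $u_0$ and $u$ correspond to a nonzero difference $v = u - u_0$ lying in the null space of $A$, and the binary/box constraints force $v$ to have the sign pattern in \eqref{orthant}. Conversely, any such $v$ can be scaled to produce a new feasible point of $(P_1)$ distinct from $u_0$.

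First I would handle the ``only if'' direction. Suppose there is another solution $u \neq u_0$ of $(P_1)$. Set $v = u - u_0 \neq 0$. Then $Av = Au - Au_0 = b - b = 0$, so $v$ lies in the null space of $A$. For each $x$ with $u_0(x) = 1$, the box constraint gives $u(x) \leq 1 = u_0(x)$, so $v(x) \leq 0$; for each $x$ with $u_0(x) = 0$, we get $u(x) \geq 0 = u_0(x)$, so $v(x) \geq 0$. This is exactly the sign condition \eqref{orthant}, so such a $v$ exists.

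For the ``if'' direction, suppose a nonzero $v$ in the null space of $A$ satisfies \eqref{orthant}. I would set $u = u_0 + \varepsilon v$ for a small $\varepsilon > 0$ to be chosen. Clearly $Au = Au_0 + \varepsilon A v = b$. To verify $0 \leq u \leq 1$, split into the two cases of \eqref{orthant}: when $u_0(x) = 1$, then $v(x) \leq 0$ gives $u(x) \leq 1$ automatically, and the lower bound $u(x) \geq 0$ holds provided $\varepsilon \leq 1/\max_{x : u_0(x)=1}|v(x)|$; symmetrically when $u_0(x) = 0$. Choosing $\varepsilon$ smaller than both thresholds (and nonzero, which is possible because $v$ has only finitely many coordinates), the resulting $u$ is feasible and, since $v \neq 0$, distinct from $u_0$. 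Hence $u_0$ is not the unique solution of $(P_1)$.

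I do not foresee a significant obstacle: the whole argument is a direct unpacking of the convex feasibility constraints against the binary values of $u_0$. The only mild care is ensuring the $\varepsilon$ in the backward direction is well-defined, which is immediate in the discrete finite-dimensional setting treated here. No convex duality or separation theorem is needed; the result is essentially combinatorial.
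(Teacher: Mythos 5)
Your proof is correct and takes essentially the same approach as the paper: the forward direction is the identical difference-of-solutions argument ($v = u - u_0$ inherits the sign pattern from the box constraints), and your $\varepsilon$-scaled perturbation $u_0 + \varepsilon v$ is exactly the construction behind the converse that the paper dismisses as ``similar.'' The only point needing a word of care is that a threshold like $1/\max_{x}|v(x)|$ may be vacuous when that maximum is zero, but choosing, say, $\varepsilon \le 1/(1+\|v\|_\infty)$ sidesteps this entirely.
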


If the size of the signal is $N$, then the criteria \eqref{orthant} on $v$ determines an orthant in $\R^N$ depending on $u_0$. We denote this orthant as $\mathbb{O}_{u_0}$. The theorem tells us that as long as the kernel space of $A$ intersects $\mathbb{O}_{u_0}$ at nowhere but the origin, solving $(P_1)$ is enough to recover $u_0$. 

This condition is `negative', i.e. it requires the nonexistence of such a vector $v$. The following statement, sometimes referred to as the Gordan-Stiemke theorem of the alternative, will lead to a `positive' criteria. 

\begin{lemma}[Alternative Theorem, see e.g.  \cite{Boyd:2004p1827}]\label{alternative}
One and only one of the two following problems is feasible: (1). Find $0\neq v\geq 0$ s.t. $Av=0$; (2). Find $v=A^\top \eta$ s.t. $v>0$. 
\end{lemma}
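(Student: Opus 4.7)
The plan is to prove this classical alternative theorem in two halves: first that (1) and (2) cannot simultaneously hold, and second that if (1) is infeasible then (2) must be feasible.

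The incompatibility direction is immediate from the transpose identity. If $v$ satisfies (1) and $w = A^\top \eta$ satisfies (2), then $\langle v, w\rangle = \langle v, A^\top \eta\rangle = \langle Av,\eta\rangle = 0$, while $v\geq 0$, $v\neq 0$, together with $w>0$ coordinatewise, forces $\langle v,w\rangle > 0$, a contradiction.

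For the completeness direction, I would argue by strict separation. Assume (1) is infeasible and set $C = \{Av : v\geq 0,\ \one^\top v = 1\} \subset \R^m$; this is nonempty, convex, and compact as the continuous image of the standard simplex under the linear map $A$. Infeasibility of (1) is exactly the statement that $0\notin C$, since any nontrivial nonnegative element of $\ker A$ can be rescaled so that $\one^\top v = 1$. Strict separation between the compact convex set $C$ and the disjoint point $\{0\}$ then yields some $\eta \in \R^m$ and $\alpha > 0$ with $\eta^\top y \geq \alpha$ for every $y\in C$. Specializing $v$ to each simplex vertex $e_i$ gives $(A^\top \eta)_i = \eta^\top A e_i \geq \alpha > 0$ for each coordinate $i$, so $A^\top \eta > 0$ and (2) holds.

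The main obstacle is really just justifying the strict separation, which depends on $C$ being closed — automatic here because $C$ is the image of a compact simplex under a linear map. A more mechanical alternative would be to cast the feasibility of (1) as the linear program of minimizing the zero objective subject to $Av = 0$, $\one^\top v = 1$, $v\geq 0$, and apply LP duality directly: the dual always has the trivial feasible point $(\eta,\mu) = (0,0)$, so infeasibility of the primal forces the dual to be unbounded, which produces an $\eta$ with $A^\top \eta$ strictly positive. Either way, the underlying content is Farkas' lemma in finite dimensions, so no serious technical difficulty is hidden.
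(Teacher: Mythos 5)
Your proof is correct. Note that the paper does not prove this lemma at all -- it is quoted as the classical Gordan--Stiemke theorem of the alternative with a citation to Boyd and Vandenberghe -- so there is no in-paper argument to match; your two halves (the orthogonality contradiction via $\langle v, A^\top\eta\rangle = \langle Av,\eta\rangle = 0$, and the strict separation of $0$ from the compact convex image of the simplex, evaluated at the vertices $e_i$) constitute the standard textbook proof of Gordan's theorem and are airtight: compactness of $C$ makes the closedness issue you flag a non-issue, and the equivalence ``(1) infeasible $\Leftrightarrow$ $0\notin C$'' is correctly justified by the rescaling $\one^\top v=1$. The LP-duality route you sketch as an alternative is also the spirit of the cited reference, so either version would serve as a self-contained justification of the lemma as used in Theorems \ref{theorem1p}, \ref{sparse1D} and \ref{robustness}.
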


Geometrically, this theorem says that if $P$ is a subspace in $\R^N$, $\mathbb{O}$ is the first orthant, then either $P\bigcap \mathbb{O}=\{0\}$ or $P^\perp\bigcap \text{int}(\mathbb{O})=\emptyset$, but not both. The statement considers the first orthant only, but obviously it is true for any other given orthant. Apply this lemma on theorem \ref{theorem1}, we immediately get the `positive' version of the criteria:

\begin{theorem}\label{theorem1p}
Assume $u_0$ is a binary solution of $Au_0=b$.There exists $v=A^\top\eta$ such that 
\be
\begin{cases}v(x)<0, &\text{when}\quad u_0(x)=1\\
v(x)>0, &\text{when}\quad u_0(x)=0
\end{cases}
\ee
if and only if $u_0$ is the unique solution of $(P_1)$, i.e. solving $(P_1)$ recovers $u_0$.
\end{theorem}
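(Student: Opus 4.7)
The plan is to obtain this statement essentially for free by combining Theorem~\ref{theorem1} with the alternative theorem (Lemma~\ref{alternative}). Theorem~\ref{theorem1} already says that $u_0$ is the unique solution of $(P_1)$ if and only if $\ker(A)$ meets the orthant $\mathbb{O}_{u_0}$ only at the origin, so the job reduces to showing that the second condition in Theorem~\ref{theorem1p} is a reformulation (via Lemma~\ref{alternative}) of that emptiness-of-intersection condition. Since Lemma~\ref{alternative} is stated for the nonnegative orthant, the only real work is a change of coordinates that turns $\mathbb{O}_{u_0}$ into the first orthant.

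Concretely, I would introduce the diagonal sign-flipping matrix $D$ with
\[
D_{xx}=\begin{cases}-1,& u_0(x)=1,\\ +1,& u_0(x)=0,\end{cases}
\]
so that $D=D^\top=D^{-1}$ and the map $v\mapsto w=Dv$ is a bijection from $\mathbb{O}_{u_0}$ onto the first orthant. Under this substitution, $Av=0$ becomes $ADw=0$, i.e.\ $w\in\ker(AD)$, and the strict interior conditions transform the same way. Thus the first alternative of Lemma~\ref{alternative} applied to the matrix $AD$ asserts the existence of a nonzero $w\ge 0$ with $ADw=0$, which after undoing the substitution is exactly a nonzero vector in $\ker(A)\cap\mathbb{O}_{u_0}$.

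The second alternative of the lemma applied to $AD$ asserts the existence of $\eta$ with $(AD)^\top\eta=DA^\top\eta>0$ componentwise. Writing $v=A^\top\eta$, the inequality $Dv>0$ unwinds to $v(x)<0$ when $u_0(x)=1$ and $v(x)>0$ when $u_0(x)=0$, which is precisely the condition in the statement of Theorem~\ref{theorem1p}. By Lemma~\ref{alternative} exactly one of these two alternatives holds, so the nonexistence of a bad $v\in\ker(A)\cap\mathbb{O}_{u_0}$ is equivalent to the existence of $v=A^\top\eta$ with the prescribed strict sign pattern; combined with Theorem~\ref{theorem1}, this finishes the proof.

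There is no real obstacle here: the only point to be careful about is that the substitution $w=Dv$ is a linear bijection sending $\mathbb{O}_{u_0}$ to the first orthant and intertwining $\ker(A)$ with $\ker(AD)$ while preserving the image of the adjoint up to the same sign flip, so Lemma~\ref{alternative} can be applied cleanly to $AD$ without any dimension or genericity assumption on $A$ or $u_0$.
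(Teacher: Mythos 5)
Your proposal is correct and follows essentially the same route as the paper: the author treats Theorem~\ref{theorem1p} as immediately deduced from Theorem~\ref{theorem1} together with Lemma~\ref{alternative}, remarking only that the first-orthant statement ``obviously'' holds for any orthant. Your diagonal sign-flip matrix $D$ simply makes that orthant reduction explicit, which is a fine (and slightly more careful) way of writing out the same argument.
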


Unfortunately, there is no explicit formula to determine if an arbitrary subspace passes through a given orthant. Indeed, it is equivalent with any general linear programming feasibility problem and thus has no closed-form solution. However, for some special cases we can still give deterministic or stochastic results, as we will explain in the following subsections.

We want to remark that there are many alternative linear programming problems that can recover the signal as well. In fact, assume $J(u)$ is a convex function on $u$ satisfying  
\be\label{P_2condition}
J(u)<J(v),\forall u\in[0,1]^N, v\notin[0,1]^N.
\ee
It is easy to see that if $u_0$ is a binary solution of $Au_0=b$, then $u_0$ is a unique solution of $(P_1)$ implies that $u_0$ is a unique solution of the following convex problem:
\be\label{P_2}
\quad\min_u J(u) \quad \st\quad Au=b.
\ee
Therefore solving \eqref{P_2} can also recover $u_0$ under the condition in theorem \ref{theorem1} or \ref{theorem1p}. There are many simple functions satisfying \eqref{P_2condition}. One of the simplest examples is $J(u)=\|2u-1\|_\infty$. Another example is the deadzone penalty $J(u)=\tilde J((|2u-1|-1)^+)$ where $\tilde J$ is any convex function with $\tilde J(0)=0$ and $\tilde J(v)> 0$ for $v\neq 0$, e.g.  $\tilde J(v)=\|v\|_p$ for $p\geq 1$. 

\subsection{Reconstruction of the 1D binary signals from the low frequency measurements}\label{theory1D}

So far the discussion has used only the fact that the signal to be reconstructed is binary. In most practical applications, the signal is often not only binary, but also structured, i.e. the 1s and 0s are spatially clustered. This property could help us reconstruct the signal.

Let us consider the 1D case first. Assume $u_0(x)$ is a periodic discrete binary signal defined on $\{1,\ldots,N\}$. Since we are considering the structured signal, $u_0(x)$ consists of many intervals with constant value $1$ or $0$. If the first and last intervals are with the same value, we treat them as one merged interval under the periodic boundary condition. Therefore, the total number of intervals is always even, thus $u_0(x)$ can be represented as 
\be\label{blockwise}
u_0=\sum_{j=1}^{2d} \xi_j\one_{I_j},\quad \xi_j\in\{0,1\}.
\ee
$\{I_j\}$ is a partition of $\{1,\ldots,N\}$ where each $I_j$ is a consecutive interval.

From theorem \ref{theorem1p}, $u_0$ can be recovered from $(P_1)$  if and only if there exists $v=A^\top\eta$ such that
\be\label{blockfeasibility}
\begin{cases}v(x)<0\ \text{in}\ I_j & \text{if}\ \xi_j=1\\
v(x)>0\ \text{in}\ I_j & \text{if}\ \xi_j=0
\end{cases}
\ee
We want to show that this condition is always satisfied for certain types of $A$. Recall that when partial frequency information is given, $A=S\F$ where $S$ is a sampling operator that corresponds to the known frequencies. If $v=A^\top\eta=\F^\top(S\eta)$, then $v$ is a band-limit signal whose spectrum can be represented as $S\eta$, i.e. it is located inside the known frequencies.  Therefore, the above condition means that the relaxation method is valid as long as we can use only those known frequencies to construct a band-limit signal that satisfies \eqref{blockfeasibility}. Since \eqref{blockfeasibility} describes the zero-crossing position of $v$, it imposes a constraint on the spectrum of $v$, and therefore on $S$. 

The relationship between the zero-crossings of a signal and its spectrum information is not a new problem in signal processing; readers are referred to \cite{Logan:1977p7105, Requicha:1980p7107, Kedem:1986p7053, Ulanovskii:2006p7050} for some classic theories. The following result is natural from the perspective of trigonometric interpolation:

\begin{lemma}\label{trigonointerp}
Let $\T=[0,1]$ where $0$ and $1$ are identified, i.e. $\T\cong S^1=\{z:|z|=1\}$. Given $2n$ points on $\T$ who define $2n$ intervals on $\T$, there exists a real trigonometric polynomial, whose spectrum is limited in $[-n,n]$, vanishing only at those points and changes signs alternatively on those intervals.
\end{lemma}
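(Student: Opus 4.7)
The plan is to write down an explicit trigonometric polynomial and verify the three required properties (real, spectrum contained in $[-n,n]$, and the prescribed zero/sign pattern). Labelling the $2n$ given points $t_1,\ldots,t_{2n}\in\T$, I would define
\[
p(t) \;=\; \prod_{j=1}^{2n}\sin\bigl(\pi(t-t_j)\bigr).
\]
Each factor is real for real $t$, so $p$ is real. On $\T$, $\sin(\pi(t-t_j))$ vanishes only at $t=t_j$, with nonzero derivative $\pi$ there, so the zero is simple. Hence $p$ vanishes exactly on $\{t_1,\ldots,t_{2n}\}$ with a simple sign change at each $t_j$, which forces the signs on the $2n$ consecutive intervals to alternate.

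The main step is checking the spectral localization. I would expand each factor via Euler's formula,
\[
\sin(\pi(t-t_j)) \;=\; \tfrac{1}{2i}\bigl(e^{i\pi(t-t_j)}-e^{-i\pi(t-t_j)}\bigr),
\]
and multiply out:
\[
p(t) \;=\; \frac{1}{(2i)^{2n}}\sum_{\epsilon\in\{\pm1\}^{2n}}\Bigl(\prod_{j=1}^{2n}\epsilon_j\Bigr)\,\exp\!\Bigl(i\pi t\sum_{j=1}^{2n}\epsilon_j\Bigr)\,\exp\!\Bigl(-i\pi\sum_{j=1}^{2n}\epsilon_j t_j\Bigr).
\]
The key arithmetic observation is that $\sum_j\epsilon_j$ has the same parity as $2n$, so it is always an even integer $2k$ with $k\in\{-n,\ldots,n\}$. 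Collecting terms with a common value of $k$ yields $p(t)=\sum_{k=-n}^{n}c_k e^{2\pi i kt}$; since $p$ is real, the coefficients satisfy $c_{-k}=\overline{c_k}$, so this is a genuine real trigonometric polynomial with spectrum in $[-n,n]$.

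The most delicate point—and essentially the only obstacle—is the parity argument: it is crucial that the number of interpolation points be even, which is exactly the hypothesis. The remaining verification (that $p\not\equiv 0$) is immediate since the prefactor $(2i)^{-2n}=(-1)^n 2^{-2n}$ is nonzero and $p$ has only finitely many zeros on $\T$. This construction is the standard trigonometric analogue of the product-of-linear-factors interpolant, and it delivers everything needed: a real trigonometric polynomial of degree at most $n$ whose zero set on $\T$ is precisely the prescribed $2n$ points and whose sign alternates on the induced intervals.
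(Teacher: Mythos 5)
Your proof is correct and is essentially the paper's own argument in different notation: with $z=e^{2\pi i t}$ and $c_j=e^{2\pi i t_j}$, the paper's function $C z^{-n}\prod_{j=1}^{2n}(z-c_j)$ equals $(-4)^n\prod_{j=1}^{2n}\sin\bigl(\pi(t-t_j)\bigr)$, i.e.\ your $p$ up to a nonzero real constant. Your Euler-expansion/parity check of the spectrum (and of $1$-periodicity, which needs the evenness of $2n$) just replaces the paper's observation that a degree-$2n$ polynomial divided by $z^n$ has exponents in $[-n,n]$ and is real on $S^1$ by conjugation symmetry.
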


This conclusion, combined with theorem \ref{theorem1p}, leads to the following deterministic result which states that the number of low frequency measurements we need to reconstruct the binary function is basically the number of the jumps contained in the signal, no matter how large the signal is. 

\begin{theorem}\label{lowfreq1D}
If $u_0(x)$ is a 1-D binary signal that can be represented as in \eqref{blockwise} with $2d$ consecutive intervals of ones and zeros, then by knowing the Fourier coefficients $\{a_k\}$ for $|k|\leq d$, we can recover $u_0$ through the convex problem $(P_1)$. (Notice that $u_0(x)\in\R, \forall x$ implies $a_k=\overline{a_{-k}}, \forall k$, so essentially we only need to know $\{a_k\}$ for $0\leq k\leq d$.) This result is optimal, i.e. precise reconstruction via solving $(P_1)$ is impossible if knowing even less.
\end{theorem}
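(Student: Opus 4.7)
The plan is to apply the positive criterion of Theorem \ref{theorem1p} and interpret it geometrically. With $A=S\F$ and $S$ selecting the frequencies $|k|\le d$, any candidate vector $v=A^\top\eta$ is, up to the normalization $\F^\top=N\F^{-1}$, the restriction to the grid $\{x/N\}$ of a real trigonometric polynomial of degree at most $d$ (one can always symmetrize $\eta$ so that $S\eta$ is Hermitian and $v$ is real). Because the blocks $\xi_j$ in \eqref{blockwise} alternate between $0$ and $1$, condition \eqref{blockfeasibility} reads: find such a polynomial whose sign alternates across the $2d$ consecutive arcs $I_j$ in the prescribed pattern.

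First I would invoke Lemma \ref{trigonointerp} with $n=d$, placing the $2d$ required zeros at the midpoints $(r_j+\tfrac{1}{2})/N$, where $r_j$ denotes the last integer of $I_j$; this way the continuous zero-crossings lie strictly between grid points. The lemma yields a real trigonometric polynomial $p$ of spectral width $d$ vanishing exactly at these midpoints and alternating sign on the surrounding arcs, and negating $p$ if necessary matches the prescribed parity. Setting $v(x):=p(x/N)$ produces the desired element of the range of $A^\top$, strictly satisfying \eqref{blockfeasibility} at every integer sample. Theorem \ref{theorem1p} then delivers $u_0$ as the unique solution of $(P_1)$.

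For the optimality claim I would argue by counting sign changes. If only the coefficients $\{a_k\}_{|k|\le d-1}$ were known, every $v=A^\top\eta$ would be a real trigonometric polynomial of degree at most $d-1$. Regarded as a Laurent polynomial on the unit circle, such a polynomial has at most $2(d-1)$ zeros, hence at most $2(d-1)<2d$ sign changes on $\T$; yet \eqref{blockfeasibility} on $2d$ alternating intervals demands $2d$ sign changes. No such $v$ exists, so by Theorem \ref{theorem1} the kernel of $A$ meets the orthant $\mathbb{O}_{u_0}$ nontrivially, and $u_0$ fails to be the unique solution of $(P_1)$.

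The main obstacle I anticipate is the continuous-to-discrete passage in Lemma \ref{trigonointerp}: the theorem requires \emph{strict} inequalities at the discrete integer samples, not merely a sign change on the surrounding continuous arc. The half-integer placement of the zeros handles this cleanly, since a nonzero trigonometric polynomial of degree $\le d$ cannot carry more than $2d$ zeros on $\T$, so the $2d$ midpoints exhaust all of its zeros and in particular the polynomial avoids every integer sample point.
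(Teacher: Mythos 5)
Your proposal is correct and follows essentially the same route as the paper's own proof: it invokes Theorem \ref{theorem1p}, constructs the strictly sign-alternating dual certificate via Lemma \ref{trigonointerp} with zeros placed at the half-grid midpoints between consecutive intervals (the paper uses the points $\frac{1}{N}(s_i-\frac{1}{2})$, which are the same midpoints), and proves optimality by the same zero-count bound for trigonometric polynomials of degree below $d$. Your added remarks on the continuous-to-discrete passage and on the converse direction of Theorem \ref{theorem1p} only make explicit what the paper leaves implicit.
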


Although theorem \ref{lowfreq1D}  only holds when the lowest frequency information is given, it is still very useful, because in many practical problems the low frequency measurements are far easier to obtain than the high frequency measurements. The deconvolution problem with a low-pass filter kernel, for example, can be treated as reconstruction from the lowest frequency information. 

Heuristically, theorem \ref{lowfreq1D} can be understood as follows: if the low frequency measurements are given, then the permitted perturbation can be with higher frequencies only and thus strongly oscillating around zero. Therefore, by controlling the lower and upper bounds of the signal as in $(P_1)$, the oscillating perturbation would be eliminated, and thus the solution is uniquely determined.

\subsection{Discussions and generalizations}\label{discussion}

First, it is easy to see that theorem \ref{lowfreq1D} can be directly extended to the cosine transform as well, due to the fact that the cosine transform of a signal is nothing but the Fourier transform of the even extension of the signal. 

\begin{corollary}\label{lowfreqDCT1D}
If $u_0(x)$ is a 1-D binary signal that can be represented as in \eqref{blockwise} with $2d$ consecutive intervals of ones and zeros, then by knowing the discrete cosine transform coefficients $\{a_k\}$ for $0\leq k\leq 2d$, we can recover $u_0$ through the convex relaxation $(P_1)$. 
\end{corollary}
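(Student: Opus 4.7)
The plan is to reduce the DCT statement to the Fourier statement of Theorem \ref{lowfreq1D} via the even-extension trick mentioned in the preceding paragraph. Let $\tilde u_0$ denote the even extension of $u_0$, that is, the periodic binary signal on a domain of length $2N$ obtained by reflecting $u_0$ about its two endpoints. Two preliminary facts need to be recorded. First, if $u_0$ has $2d$ constant intervals in the form \eqref{blockwise}, then $\tilde u_0$ has at most $2(2d)=4d$ such intervals, since reflecting about each endpoint at most doubles the block count (the only possible reduction being when blocks meet at a reflection point and merge there). Second, with the matching normalization, the DCT coefficients $\{a_k\}_{0\le k\le 2d}$ of $u_0$ are in bijective correspondence with the DFT coefficients of $\tilde u_0$ at the symmetric frequency window $|k|\le 2d$.

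With these two facts in hand, Theorem \ref{lowfreq1D} applies directly to $\tilde u_0$ on the doubled domain: since $\tilde u_0$ has at most $2(2d)=4d$ consecutive intervals of ones and zeros, knowledge of its Fourier coefficients for $|k|\le 2d$ already suffices to recover it as the unique solution of $(P_1)$ on the length-$2N$ circle.

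The final step is to transfer this uniqueness back to the original DCT problem. Given any feasible $u$ for the DCT version of $(P_1)$, namely $u\in[0,1]^N$ whose DCT agrees with that of $u_0$ on $0\le k\le 2d$, its even extension $\tilde u$ lies in $[0,1]^{2N}$ and has DFT agreeing with that of $\tilde u_0$ on $|k|\le 2d$. By the uniqueness established in the previous paragraph, $\tilde u=\tilde u_0$, and restricting to the original half-domain gives $u=u_0$.

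The main (and essentially only) obstacle is bookkeeping: one must pick the even-extension convention (half-sample versus whole-sample symmetry at each endpoint) that is consistent with the particular DCT flavor being used, and verify that the correspondence between the first $2d+1$ DCT coefficients of $u_0$ and the Fourier coefficients of $\tilde u_0$ on $|k|\le 2d$ is indeed exact. Once that pairing is fixed, the argument is a direct invocation of Theorem \ref{lowfreq1D} on the extension, with no new analysis required.
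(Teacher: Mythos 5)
Your proposal is correct and follows essentially the same route as the paper, which deduces the corollary directly from the remark that the cosine transform of a signal is the Fourier transform of its even extension: the extension has at most $4d$ constant intervals, so Theorem \ref{lowfreq1D} with the window $|k|\le 2d$ applies, and uniqueness on the doubled domain pulls back to the original problem. Your extra care with the interval count and the transfer-of-uniqueness step simply fills in details the paper leaves implicit.
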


We also mention that if the signal is only bounded from one side, i.e. instead of knowing that the signal is binary, we know the signal is nonnegative, then a similar argument would lead to a theorem concerning the reconstruction of the sparse nonnegative signals. Indeed, using theorem \ref{theorem1} and lemma \ref{alternative} we can obtain the following theorem (the proof is similar hence omitted):

\begin{theorem}\label{sparse1D}
If $u_0\geq 0$ is supported on $K=\{x:u_0(x)\neq 0\}$, then $u_0$ is the unique solution of $Au=b$, $u\geq 0$ if and only if there exists $v=A^\top \eta$ such that $v|_K=0$, $v|_{K^c}>0$. 
\end{theorem}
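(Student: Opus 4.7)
The approach would follow the two-step pattern used to derive Theorem \ref{theorem1p} from Theorem \ref{theorem1}, adapted to the nonnegativity constraint. First I would establish a negative characterization: $u_0$ is the unique nonnegative solution of $Au=b$ if and only if there is no nonzero $v \in \ker A$ with $v|_{K^c} \geq 0$. The key observation is that because $u_0$ is strictly positive on $K$, small perturbations on $K$ do not affect feasibility, so the tangent cone of $\{u \geq 0\}$ at $u_0$ is $C := \{v : v|_{K^c} \geq 0\}$ with $v|_K$ unconstrained. The "only if" direction follows by scaling any such $v$ to $tv$ with small $t > 0$ to produce a distinct feasible solution $u_0 + tv$; the "if" direction takes $v = u_1 - u_0$ for a competing nonnegative solution $u_1$ and uses $v|_{K^c} = u_1|_{K^c} \geq 0$.

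Second, I would flip this to the advertised positive criterion via an alternative theorem. The polar of $C$ is $C^* = \{w : w|_K = 0,\, w|_{K^c} \geq 0\}$, whose relative interior consists of vectors vanishing on $K$ and strictly positive on $K^c$; a Gordan--Stiemke-type statement then equates $\ker A \cap C = \{0\}$ with the existence of $w \in \text{range}(A^\top) \cap \text{rel int}(C^*)$, which is precisely the existence of $\eta$ with $(A^\top \eta)|_K = 0$ and $(A^\top \eta)|_{K^c} > 0$. To reduce this to Lemma \ref{alternative} as literally stated, I would eliminate the free coordinates by the orthogonal projection $P$ onto $\text{range}(A|_K)^\perp$ and apply the lemma to the reduced matrix $P A|_{K^c}$ acting on $\R^{K^c}$: the dual vector $\bar\eta$ it produces lifts to $\eta = P^\top \bar\eta \in \ker(A|_K^\top)$, which satisfies $(A^\top \eta)|_K = 0$ by construction and $(A^\top \eta)|_{K^c} > 0$ by the lemma.

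The main technical obstacle, worth flagging explicitly, is the degenerate case in which $A|_K$ itself has a nontrivial kernel. Then vectors of the form $(w,0)$ with $w \in \ker(A|_K) \setminus \{0\}$ automatically lie in $\ker A \cap C$, so nonuniqueness is forced regardless of what happens on $K^c$, and the projection-based reduction must be set up so that the vanishing requirement $(A^\top \eta)|_K = 0$ correctly encodes this possibility. The cleanest resolution is either to invoke a general cone-valued version of the alternative theorem directly, or to separate out the $v|_{K^c} = 0$ case and show that injectivity of $A|_K$ is itself a necessary consequence of uniqueness before appealing to Lemma \ref{alternative}.
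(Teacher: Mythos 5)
The paper offers no written proof of theorem \ref{sparse1D} (it is introduced with ``the proof is similar hence omitted''), so there is nothing to compare line by line; your two-step plan is exactly the reduction the paper gestures at. Your tangent-cone characterization is correct: since $u_0$ is strictly positive on $K$, uniqueness of $u_0$ among nonnegative solutions is equivalent to $\ker A\cap C=\{0\}$ with $C=\{w:\ w|_{K^c}\geq 0\}$. Your projection device is also sound for the forward implication: if $u_0$ is unique, then problem (1) of lemma \ref{alternative} for the reduced matrix $PA|_{K^c}$ is infeasible (a nonzero $z\geq 0$ with $PA|_{K^c}z=0$ would lift to a nonzero element of $\ker A$ that is nonnegative on $K^c$), so problem (2) supplies $\bar\eta$, and $\eta=P\bar\eta$ indeed gives $(A^\top\eta)|_K=0$ and $(A^\top\eta)|_{K^c}>0$.

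The genuine gap is in the converse, and you have put your finger on it without closing it. The equivalence you invoke, $\ker A\cap C=\{0\}$ iff $\mathrm{range}(A^\top)\cap\operatorname{relint}(C^*)\neq\emptyset$, is false: $C$ has the lineality space $\{w:\ w|_{K^c}=0\}$, and a certificate that vanishes on $K$ cannot see it. A cone-valued Gordan--Stiemke theorem only yields ``$\ker A\cap C$ is contained in that lineality space'', and your other suggested repair (deriving injectivity of $A|_K$ from uniqueness) is available only in the direction where uniqueness is the \emph{hypothesis}. In fact the ``if'' direction of the theorem as stated is false without an extra assumption: take the $1\times 2$ matrix $A=(0\ \ 1)$ and $u_0=(1,0)^\top$, so $K=\{1\}$ and $b=0$; then $\eta=1$ gives $v=A^\top\eta=(0,1)^\top$ with $v|_K=0$, $v|_{K^c}>0$, yet every $(c,0)^\top$ with $c\geq 0$ solves $Au=b$, $u\geq 0$, so $u_0$ is not unique. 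What the certificate actually buys is that any competing nonnegative solution agrees with $u_0$ on $K^c$; uniqueness follows only if, in addition, the columns of $A$ indexed by $K$ are linearly independent. So the provable statement is ``unique iff ($A|_K$ is injective and the certificate exists)'' --- a flaw inherited from the paper's formulation rather than from your strategy, and harmless for the application in theorem \ref{lowfreqsparse1D}, where $|K|=d$ and the $(2d+1)\times d$ low-frequency submatrix is of Vandermonde type with distinct nodes, hence injective --- but the unqualified biconditional cannot be proved, and your write-up should add that hypothesis or restate the equivalence accordingly.
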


Let $A=S\F$ and $S$ also select the low frequency measurements, then theorem \ref{sparse1D} implies the following theorem (thanks to lemma \ref{trigonointerp} as well):

\begin{theorem}\label{lowfreqsparse1D}
If $u_0(x)$ is a 1-D nonnegative sparse signal supported on $K=\{x:u_0(x)=0\}$ with $|K|=d$, then by knowing the Fourier coefficients $\{a_k\}$ for $|k|\leq d$, we can recover $u_0$ through the convex problem $Au=b$, $u\geq 0$. 
\end{theorem}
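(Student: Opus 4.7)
The plan is to invoke Theorem \ref{sparse1D} and reduce the statement to an explicit construction of a dual certificate $v$ that is a low-degree trigonometric polynomial vanishing exactly on the support of $u_0$. By Theorem \ref{sparse1D}, recovery of $u_0$ is equivalent to producing some $v=A^\top\eta$ with $v|_K=0$ and $v|_{K^c}>0$, where $K$ is the (intended) support $\{x:u_0(x)\neq 0\}$ of size $d$. Since $A=S\F$ with $S$ selecting the frequencies $|k|\leq d$, a vector of the form $A^\top\eta=\F^\top(S\eta)$ is exactly a real trigonometric polynomial whose spectrum is contained in $[-d,d]$. Thus the task reduces to: build a real trigonometric polynomial of degree at most $d$ that is nonnegative on the grid, strictly positive off $K$, and vanishes on $K$.

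The natural candidate is a squared-sine product. Writing $K=\{x_1,\ldots,x_d\}\subset\{1,\ldots,N\}$, I would take
\[
v(x)=\prod_{j=1}^{d}\sin^2\!\bigl(\pi(x-x_j)/N\bigr).
\]
Each factor $\sin^2(\pi(x-x_j)/N)=\tfrac{1}{2}\bigl(1-\cos(2\pi(x-x_j)/N)\bigr)$ is a real trigonometric polynomial whose (discrete) Fourier spectrum is supported on $\{-1,0,1\}$. Since the Fourier spectrum of a product is the convolution of the spectra, the product of $d$ such factors has spectrum contained in $\{-d,\ldots,d\}$. Hence $v\in\mathrm{Range}(A^\top)$. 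Moreover $v$ is manifestly nonnegative, $v(x_j)=0$ for every $j$, and $v(x)>0$ whenever $x\notin K$, which is precisely the sign pattern required by Theorem \ref{sparse1D}.

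The only subtle point I expect is verifying the spectral/degree bookkeeping in the discrete periodic setting: I need to ensure that (i) the product indeed lives in $\{-d,\ldots,d\}$ (which follows from the convolution rule and the fact that the paper treats $-N/2$ and $N/2$ as identified, so no aliasing issue arises as long as $d<N/2$), and (ii) the factor $\sin^2(\pi(x-x_j)/N)$ vanishes only at $x=x_j$ on the integer grid (which holds because $0<|x-x_j|<N$ keeps the argument strictly inside $(0,\pi)$ up to periodicity). Granted these routine checks, the certificate $v$ exists, Theorem \ref{sparse1D} applies, and $u_0$ is the unique nonnegative solution of $Au=b$. The main obstacle, if any, is merely guaranteeing the degree count; the existence of the certificate itself is forced by the classical fact that $\prod_j\sin^2(\pi(x-x_j)/N)$ is the canonical nonnegative trigonometric polynomial with a prescribed zero set.
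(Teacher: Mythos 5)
Your proposal is correct and follows essentially the paper's own route: the paper also deduces Theorem \ref{lowfreqsparse1D} from Theorem \ref{sparse1D} by exhibiting a dual certificate that is a real trigonometric polynomial with spectrum in $[-d,d]$ vanishing exactly on the support and positive elsewhere. The only difference is that you construct the certificate explicitly as $\prod_{j=1}^{d}\sin^2\bigl(\pi(x-x_j)/N\bigr)$ rather than invoking Lemma \ref{trigonointerp}, which is a harmless (indeed slightly cleaner) substitution, since that lemma as stated produces sign-alternating polynomials at $2n$ distinct points and your squared-sine product is exactly its doubled-points variant.
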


This result is closely related with the theorems proved in \cite{Bruckstein:2008p2703} which said that a nonnegative solution of a linear system is unique if the solution is sparse enough and the matrix has a row-span intersecting the positive orthant. It is worth mentioning that the quantity of the needed low frequency measurements in this case approximately equals two times the quantity of the `spikes' of $u_0$ (other than the number of jumps of $u_0$ in theorem \ref{lowfreq1D}). This coincides the observation in \cite{Vetterli:2002p7065} that the degree of freedom of a $d$-sparse signal is $2d$ (for each spike there is one degree for position and one for amplitude), and thus $2d+1$ measurements are in principal enough. A similar observation has been given in \cite{Fuchs:2005p8032} as well.

Our result may further be generalized to bases other than the trigonometric functions. Indeed, the duality of lemma \ref{trigonointerp} tells us that a signal without lower frequency components must have many sign changes, which is  some times referred to as the Sturm-Hurwitz theorem \cite{Ulanovskii:2006p7050}. This observation plays a critical role here. This property can be extended to other basis that has similar oscillating pattern, such as some wavelet bases or the eigenfunctions of the regular Sturm-Liouville problems \cite{Galaktionov:2005p7108}. However, generalization along this line is beyond the scope of this paper.

\subsection{Reconstruction of the 2D binary signals from the low frequency measurements}\label{theory2D}

The multidimensional case is more complicated than the 1D case due to the following several reasons. There is no fundamental algebraic theorem for multivariable polynomials.  Moreover, the Sturm-Hurwitz theorem that describes the zero-crossings of function with a spectrum gap does not exist in higher dimensions. Finally,  in 1D the complexity of a binary function can be simply characterized by the number of jumps as in theorem \ref{lowfreq1D}, while in higher dimensions, a binary function may have only one connected component but still have a very complicated jump set. 

Since theorem \ref{theorem1p} is still valid in the multidimensional case, a multidimensional binary signal can be reconstructed by the lower frequency measurements as long as the jump set of the binary signal is the zero levelset of a low frequency function. Unfortunately, to the author's knowledge, no criteria has been known to determine if a given shape can be realized as the zero levelset of a function with only lower frequency components. In \cite{Curtis:1985p7206,CURTIS:1987p7208,Rotem:1986p7178,Sanz:1989p7201,Sanz:1989p7122,Hummel:1989p7209,Zakhor:1990p7196}, some results concerning the relationship between the levelset of a function and its Fourier transform are shown. In \cite{Nashold:1989p7110} it has been proved that using the continuous Fourier transform, a function with a given levelset curve can be approximated to any degree of accuracy by a band-limited function with given spectrum support. However, this result is barely useful in practice because it requires virtually infinitely high resolution in the frequency domain. 

Heuristically, if a function has only lower frequency components, we can imagine that its levelset would not be too complicated. Here we give a way to measure this complexity.  The basic idea is that since in 1D case the complexity of a binary signal is determined by the number of jumps inside the signal, in 2D case we can define an `average number of zero-crossings' as illustrated in Fig. \ref{grating}. The following discussion can be naturally extended to higher dimensional cases.

\begin{figure}[b]
\begin{center}
\includegraphics[width=1.5in]{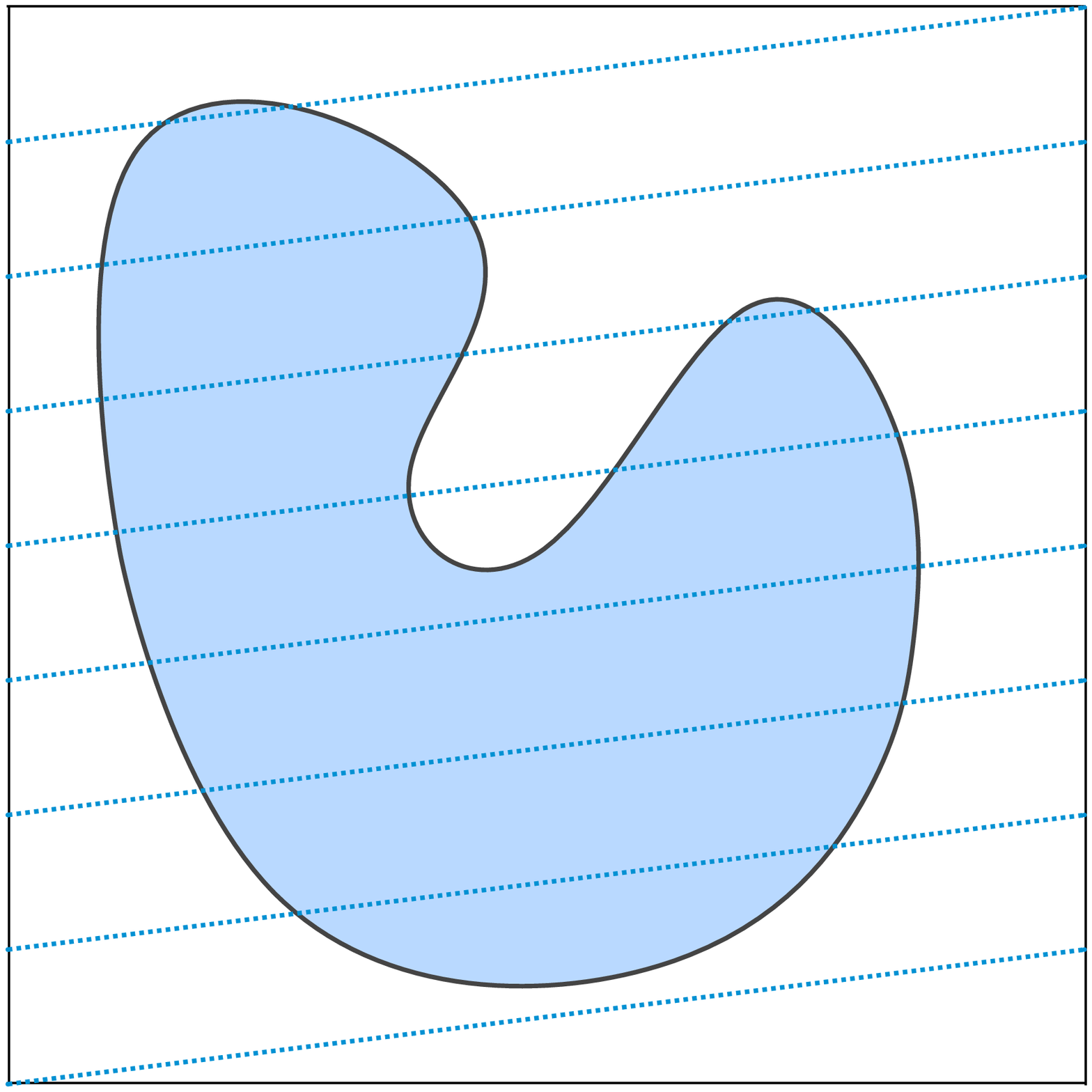}
\includegraphics[width=1.5in]{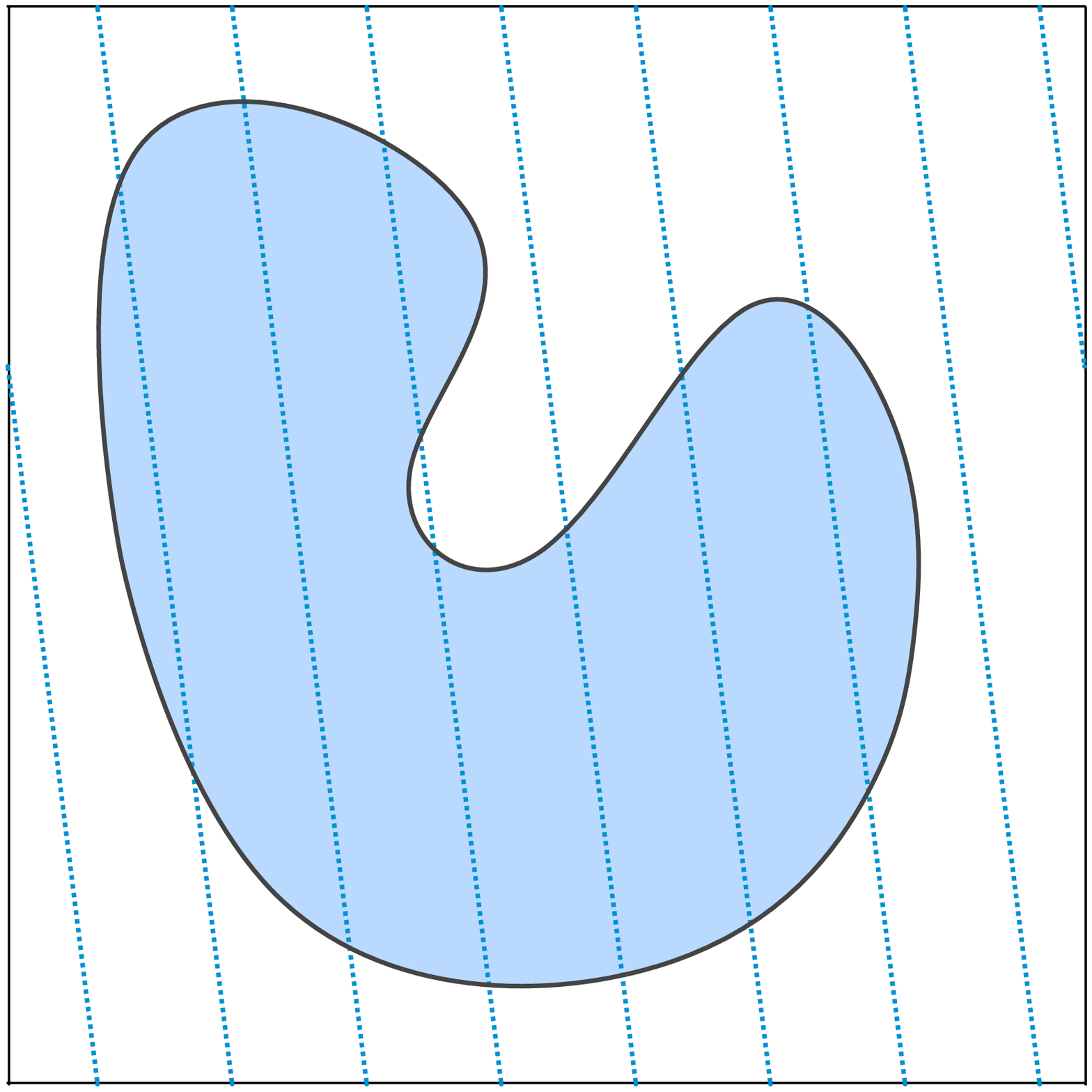}
\end{center}
\caption{Two gratings on $\T^2$ with a binary function.}\label{grating}
\end{figure}

Define $\T^2=[0,1]\times[0,1]$ with opposite boundaries identified, that is,  $\T^2\cong\R^2/\Z^2$. For  $\theta\in(-\pi/4,\pi/4]$, define 
\[
L_{s,\theta}(t)=(t, s+t\tan\theta)\mod 1, \quad s\in[0,1], t\in[0,1]
\]
to be a \emph{grating} along angle $\theta$ starting from the left edge of the square (see Fig. \ref{grating} left). For  $\theta\in(\pi/4,3\pi/4]$, similarly define 
\[
L_{s,\theta}(t)=(s+t\cot\theta, t)\mod 1, \quad s\in[0,1], t\in[0,1]
\]
to be the grating along angle $\theta$ starting from the bottom edge of the square (see Fig. \ref{grating} right). Assume $u$ is a binary function whose jump set consists of analytic curves. For given $\theta$ and $s$ the line segment $L_{s,\theta}$ will intersect the jump set of $u$ finite times. We denote this number as $\#L_{s,\theta}$ and define the average of $\#L_{s,\theta}$ over $s$ as
\[
K_\theta=\cos\theta\int_0^1 \#L_{s,\theta}ds
\]
for $\theta\in(-\pi/4,\pi/4]$ and
\[
K_\theta=\sin\theta\int_0^1 \#L_{s,\theta}ds
\]
for $\theta\in(\pi/4,3\pi/4]$. The presence of the multiplier $\cos\theta$ and $\sin\theta$ is due to the fact that for gratings with different angle $\theta$, $s$ is not an equilong variable. The distance between $L_{s,\theta}$ and the origin is more intrinsic  which equals $s\cos\theta$ for $\theta\in(-\pi/4,\pi/4]$ and $s\sin\theta$ for $\theta\in(\pi/4,3\pi/4]$. $K_\theta$ is called \emph{the average directional number of zero-crossings} in this paper which essentially describes the average quantity of sign changes along the direction $\theta$. It is easy to see the connection of this quantity and the number of jumps for a 1D binary signal. Indeed, by the Cauchy-Crofton formula, $\int K_\theta d\theta$ is nothing but two times the perimeter of the shape, i.e. the total variation of $u_0$, while in the 1D case the number of jumps also equals to the total variation of the  binary signal. In the next theorem we will show that $K_\theta$ in some sense characterizes the complexity of the shape.

\begin{theorem}\label{levelsetcondition}
Assume $u(x,y)$ is a 2D binary function with analytic jump curve and the average directional number of zero-crossings of $u$ along the angle $\theta$ is denoted as  $K_\theta$. If there exists a band-limited real function $v(x,y)=\sum_{(j,k)\in\Omega} a_{jk}e^{2\pi i(jx+ky)}$ defined on $\T^2$, where $\Omega=\{(j,k):\sqrt{j^2+k^2}\leq d\}$, such that the jump set of $u(x,y)$ corresponds to the zero levelset of $v(x,y)$, then  $K_\theta\leq 2d,\forall \theta$.
\end{theorem}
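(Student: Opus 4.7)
The plan is to restrict the band-limited $v$ to closed geodesics of rational slope on $\T^2$ and apply the classical bound that a real trigonometric polynomial of degree $D$ has at most $2D$ zeros on $\T^1$. I will first treat $\theta\in(-\pi/4,\pi/4]$ with $\tan\theta=p/q$ rational (in lowest terms, $q>0$); irrational slopes follow by a continuity argument, and the range $\theta\in(\pi/4,3\pi/4]$ follows by the $x\leftrightarrow y$ symmetry built into the definition of $K_\theta$.

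Fix such a rational $\theta$. The grating $L_{s,\theta}$ extended to $t\in[0,q]$ closes up into a single geodesic on $\T^2$. Reparametrizing by $u=t/q$,
\[
v\bigl(L_{s,\theta}(qu)\bigr)=\sum_{(j,k)\in\Omega}a_{jk}\,e^{2\pi i ks}\,e^{2\pi i u(jq+kp)},
\]
which is a real $1$-periodic trigonometric polynomial in $u$. Cauchy--Schwarz applied to the vectors $(j,k)$ and $(q,p)$ gives $|jq+kp|\le\sqrt{(j^2+k^2)(p^2+q^2)}\le d\sqrt{p^2+q^2}$, so this polynomial has degree at most $d\sqrt{p^2+q^2}$ and therefore at most $2d\sqrt{p^2+q^2}$ zeros on the closed geodesic.

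Next I split this count into the contributions of $q$ consecutive unit sub-arcs. From the identity $L_{s,\theta}(t+1)=L_{s+\tan\theta\bmod 1,\theta}(t)$ one derives
\[
\#L_{s,\theta}\bigl(t\in[0,q)\bigr)=\sum_{k=0}^{q-1}\#L_{s+kp/q\bmod 1,\theta}\bigl(t\in[0,1)\bigr).
\]
Integrating in $s\in[0,1]$ and translating the variable in each summand turns the left-hand side into $q\int_0^1\#L_{s,\theta}\,ds$, while the pointwise closed-geodesic bound gives
\[
\int_0^1\#L_{s,\theta}\,ds\;\le\;\frac{2d\sqrt{p^2+q^2}}{q}\;=\;2d\sec\theta,
\]
so $K_\theta=\cos\theta\int_0^1\#L_{s,\theta}\,ds\le 2d$. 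This bound is sharp at $\theta=\pi/4$, confirming the estimate is of the right order.

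For irrational $\tan\theta$ I invoke the coarea identity
\[
K_\theta=\int_\Gamma\bigl|\sin(\theta-\beta(p))\bigr|\,d\mathcal{H}^1(p),
\]
where $\Gamma=\{v=0\}$ is the jump curve and $\beta$ is its tangent angle; this follows from the area-preserving shear $(t,s)\mapsto(t,s+t\tan\theta)\bmod 1$, which rewrites the line-integral of $\#L_{s,\theta}$ as a projected length of $\Gamma$. For a fixed rectifiable $\Gamma$ the right-hand side is Lipschitz in $\theta$, so the bound $K_\theta\le 2d$ for rational slopes extends to all $\theta$ by density. The main technical obstacle is handling non-generic $s$ at which the grating is tangent to $\Gamma$ or passes through a sub-arc endpoint, which could inflate the pointwise count; analyticity of $\Gamma$ restricts these degeneracies to a measure-zero set of $s$, so they are harmless after integration.
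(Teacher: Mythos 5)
Your proposal is correct and follows essentially the same route as the paper: reduce to rational slopes $\tan\theta=p/q$, close the grating into a period-$q$ geodesic, restrict $v$ to it to obtain a real trigonometric polynomial of degree at most $d\sqrt{p^2+q^2}$ with at most twice that many zeros, split the count into the $q$ unit sub-arcs, and let the $\cos\theta=q/\sqrt{p^2+q^2}$ factor give exactly $2d$. The only difference is cosmetic: where the paper dismisses irrational slopes with a brief piecewise-constancy/density remark, you justify the passage to all $\theta$ via the Crofton-type identity $K_\theta=\int_\Gamma|\sin(\theta-\beta)|\,d\mathcal{H}^1$ and Lipschitz continuity in $\theta$, which is a slightly more careful version of the same density step.
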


\begin{corollary}\label{lowfreq2D}
Assume $u_0(x,y)$ is a discrete 2D binary function defined on $\{1,\ldots,N\}^2$ and $u(x,y)$ is a binary function  defined on the continuous domain $\T^2$ with analytic jump curves such that $u(x/N,y/N)=u_0(x,y)$ for $(x,y)\in\{1,\ldots,N\}^2$, and denote the average directional number of zero-crossings of $u$ along the angle $\theta$ by $K_\theta$. If the reconstruction of $u_0$ by linear programming problem $(P_1)$ from low frequency measurements in $\Omega=\{(j,k):\sqrt{j^2+k^2}\leq d\}$ is exact, then $d\geq\frac{1}{2}\max_\theta K_\theta$.
\end{corollary}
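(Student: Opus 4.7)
The plan is to chain together theorem \ref{theorem1p} and theorem \ref{levelsetcondition}: the former supplies a band-limited dual certificate that separates the regions $\{u_0=1\}$ and $\{u_0=0\}$ on the grid, and the latter converts such a certificate into the desired bound on $K_\theta$.

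First I would invoke theorem \ref{theorem1p}. Since the reconstruction of $u_0$ by $(P_1)$ is exact, there exists $\eta$ such that $v=A^\top\eta=\F^\top(S\eta)$ is strictly negative at grid points where $u_0=1$ and strictly positive where $u_0=0$. Because $S$ picks out the frequencies in $\Omega$, the spectrum of $v$ is contained in $\Omega$, and $v$ is the restriction to the grid $\{1,\dots,N\}^2/N$ of the continuous trigonometric polynomial
\[
V(\xi_1,\xi_2)=\sum_{(j,k)\in\Omega}\eta_{jk}e^{2\pi i(j\xi_1+k\xi_2)},\quad \xi\in\T^2,
\]
a real band-limited function on the torus with Fourier support in $\Omega$.

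Next I would argue that $V$, after at most an arbitrarily small perturbation within $\Omega$, has its zero level set equal to the jump curves of $u$. Since $V(x/N,y/N)=v(x,y)$ and $u(x/N,y/N)=u_0(x,y)$, the function $V$ is negative at every grid point lying in $\{u=1\}$ and positive at every grid point lying in $\{u=0\}$. The jump set of $u$ is real-analytic by hypothesis and the zero set of $V$ is real-analytic because band-limited functions are real-analytic; using this regularity together with the freedom to perturb $\eta$ inside $\Omega$ (the defining inequalities of the certificate are open, so perturbations preserve the sign pattern), one can arrange for the two analytic curve systems to coincide, separating the same regions on $\T^2$. Once this matching is in place, theorem \ref{levelsetcondition} applied to $V$ on $\Omega=\{(j,k):\sqrt{j^2+k^2}\leq d\}$ gives $K_\theta\leq 2d$ for every $\theta$, and hence $d\geq\tfrac{1}{2}\max_\theta K_\theta$.

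The main obstacle is the middle step: one must promote the purely discrete sign control of $V$ on the grid to the continuous identification of the zero level set of $V$ with the jump curves of $u$ that theorem \ref{levelsetcondition} requires as a hypothesis. The analyticity of the jump curves and the real-analyticity of band-limited functions are the tools one should lean on, together with a perturbation of $\eta$ inside $\Omega$ to eliminate accidental tangencies or extraneous components of the zero set of $V$, but the bookkeeping needed to keep the perturbation supported on $\Omega$ while preserving all the grid sign constraints is the delicate point of the argument.
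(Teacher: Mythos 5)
Your overall chain---theorem \ref{theorem1p} to extract a dual certificate $v=A^\top\eta$ with spectrum in $\Omega$, then theorem \ref{levelsetcondition} to turn a band-limited certificate into the bound $K_\theta\le 2d$---is exactly the deduction the paper intends (the corollary carries no separate proof and is meant to follow from those two results). The genuine problem is the middle step, the one you yourself flag as delicate: the claim that after an arbitrarily small perturbation of $\eta$ inside $\Omega$ the zero level set of $V$ can be made to \emph{coincide} with the jump curves of $u$. Smallness of the perturbation only preserves the open conditions, namely the strict signs of $V$ at the $N^2$ grid points; it provides no mechanism for forcing the zero set of $V$ onto a prescribed analytic curve. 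Coincidence of the zero set with a given curve is a rigid condition (it requires $V$ to vanish identically along that curve), and the real trigonometric polynomials with spectrum in $\Omega$ form a fixed finite-dimensional space, so for a generic jump curve of $u$ no member of that space vanishes on it at all; this is precisely the difficulty acknowledged in section \ref{theory2D}, where it is pointed out that a prescribed level-set curve can in general only be \emph{approximated} by band-limited functions, and even that requires arbitrarily fine resolution in the frequency domain. Nor does the grid information rescue the step: the certificate controls $V$ only at the sample points, so between grid points the zero set of $V$ may have extraneous components or may partition $\T^2$ differently from the jump set of $u$, and no perturbation supported on $\Omega$ can be argued to remove this.

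What your argument does yield is the bound $K_\theta\le 2d$ for the continuous binary function $\tilde u=\one_{\{V<0\}}$ induced by the certificate (which agrees with $u_0$ at the grid points, provided its zero set is regular enough for theorem \ref{levelsetcondition} to apply), not for the interpolation $u$ fixed in the statement of corollary \ref{lowfreq2D}. To conclude for the given $u$ one needs an additional argument identifying, or at least comparing, the crossing counts of $u$'s jump set with those of $V$'s zero set, or extra hypotheses relating the geometry of the jump curves to the grid resolution. Be aware that the paper itself does not supply this bridge---the corollary is presented as immediate from the surrounding discussion---so the obstacle you ran into is real and is not resolved by the source; but as written, your perturbation step is an assertion rather than a proof, and it is the point at which the argument fails.
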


The meaning of theorem \ref{levelsetcondition} and corollary \ref{lowfreq2D} is clear: the average directional number of zero-crossings of the levelset of a band-limited function is bounded by the diameter of the support of the spectrum. If we denote the jump set of $u(x,y)$ as $\Gamma$ and the perimeter $|\Gamma|$, then by Cauchy-Crofton formula, the condition $d\geq\frac{1}{2}\max_\theta K_\theta$ further implies $d\geq\frac{1}{\pi}|\Gamma|$, which can be seen as a natural generalization of the 1D case (see lemma \ref{trigonointerp} and theorem \ref{lowfreq1D}). However, unlike the 1D case, this theorem just gives the necessary condition, not a sufficient one.

\subsection{Reconstruction of binary signal from arbitrary frequency measurements}\label{theoryarbifreq}

If $S$ is an arbitrary frequency selector,  not necessarily selecting the lowest frequencies, it is not easy to give a sufficient and necessary condition to determine if the reconstruction is possible since there is no way to quantify the zero-crossings simply from the irregular support of the spectrum. In \cite{Kozma:2002p7049} the authors show that given the support of the spectrum of a trigonometric polynomial, the size of the largest non-zero circular region of the polynomial is bounded. They proved the following theorem:

\begin{theorem}\label{arbitraryfreqthm}
Let $0\notin S\subset \Z^d$ be a finite set s.t. $S=-S$. Let $v(x)=\sum_{k\in S}c_ke^{2\pi i\langle k,x\rangle}$ be a real valued trigonometric polynomial on $\T^d$, then $v(x)$ has at least one zero in any closed ball of diameter $\sum_{k\in S}\frac{1}{4\|k\|}$.
\end{theorem}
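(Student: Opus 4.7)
The plan is to prove the contrapositive. Suppose $v$ has no zero on a closed ball $B=B(x_0,r)$ of diameter $D=2r$. By continuity $v$ has constant sign on $B$, and after possibly replacing $v$ by $-v$ I may assume $v>0$ on $B$. The goal is then to establish $D\leq\sum_{k\in S}\frac{1}{4\|k\|}$.

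A first observation is that $v$ has zero mean on $\T^d$, because $0\notin S$, so $v$ must change sign on $\T^d$; the theorem is a quantitative refinement of this qualitative fact. Using $S=-S$ and the reality constraint $c_{-k}=\overline{c_k}$, I would decompose
\[
v(x)=\sum_{\{k,-k\}\subset S}\!\bigl(c_k\,e^{2\pi i\langle k,x\rangle}+\overline{c_k}\,e^{-2\pi i\langle k,x\rangle}\bigr),
\]
a sum of plane waves in which the $\{k,-k\}$ component is a cosine of wavelength $1/\|k\|$ in direction $k/\|k\|$; its quarter-wavelength $1/(4\|k\|)$ already appears in the desired bound.

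The core of the argument is a one-dimensional lemma: for a real-valued trigonometric sum $V(t)=\sum_\lambda a_\lambda e^{2\pi i\lambda t}$ with real, nonzero frequencies $\{\lambda\}$, if $V>0$ on an interval of length $L$ then $L\leq\sum_\lambda\frac{1}{4|\lambda|}$. A natural approach is to consider the antiderivative $\Phi(t)=\sum_\lambda\frac{a_\lambda}{2\pi i\lambda}\,e^{2\pi i\lambda t}$, which is strictly increasing on the positivity interval and satisfies the crude uniform bound $\|\Phi\|_\infty\leq\sum_\lambda\frac{|a_\lambda|}{2\pi|\lambda|}$; comparing $\Phi(b)-\Phi(a)=\int_a^b V\,dt$ with this supremum bound already gives a non-sharp estimate. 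Squeezing out the sharp constant $\tfrac14$ rather than $\tfrac{1}{2\pi}$ requires a more delicate Chebyshev-type extremal argument, and I expect this to be the main technical obstacle.

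Passing from the 1D lemma to $d$ dimensions is not immediate: restricting $v$ to a single diameter of $B$ along a unit direction $e$ only yields the inequality $D\leq\sum_k\frac{1}{4|\langle k,e\rangle|}$, which is strictly weaker than the claim because $|\langle k,e\rangle|\leq\|k\|$. The right way is to exploit that each plane wave $\alpha_k$ depends only on the scalar coordinate $\langle k/\|k\|,x\rangle$, so that the 1D positivity constraint is most efficient along direction $k/\|k\|$, charging every frequency $k$ its intrinsic budget $\tfrac{1}{4\|k\|}$. The subtle step is to combine these directional constraints across all $k\in S$ without losing a constant factor; this is the genuinely $d$-dimensional content of the Kozma--Nitzan argument and where a naive slicing argument would fail.
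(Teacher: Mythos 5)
You have correctly identified the shape of the result (a quarter-wavelength budget per frequency pair), but as it stands your proposal is a plan, not a proof: the two steps you yourself flag as ``the main technical obstacle'' and ``the subtle step'' are exactly where the content of the theorem lies, and neither is carried out. (For what it is worth, the paper does not prove this statement either; it quotes it from the cited reference, so there is no internal argument to match your outline against.) Concretely: (i) your 1D lemma is only asserted, and the antiderivative sketch does not even deliver a non-sharp version of it, because bounding $\Phi(b)-\Phi(a)=\int_a^b V\,dt$ by $2\|\Phi\|_\infty\leq\sum_\lambda\frac{|a_\lambda|}{\pi|\lambda|}$ controls the \emph{integral} of $V$, not the length $b-a$: $V$ may be strictly positive yet arbitrarily small on the interval, so no inequality between $L$ and $\sum_\lambda\frac{1}{4|\lambda|}$ follows without a further normalization or extremal argument that you have not supplied. (ii) The passage from $\T^d$ to one dimension is left entirely open; your own diagnosis shows the obvious slicing fails (it replaces $\|k\|$ by $|\langle k,e\rangle|$), and it is in fact worse than you say, since frequencies $k$ orthogonal to $e$ restrict to a nonzero constant term, so the restricted polynomial need not even satisfy the hypothesis $0\notin S$ of your lemma. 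No mechanism is proposed for charging each $k$ its budget $\frac{1}{4\|k\|}$, so the argument currently proves nothing beyond the qualitative fact that $v$ changes sign.

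Both gaps can be closed at once by a half-wavelength shift-and-add induction run directly on $\T^d$, with no extremal analysis. Suppose $v$ has no zero on a closed ball $B$ of diameter $D$; by continuity assume $v>0$ on $B$. Pick $k\in S$, set $\tau=k/(2\|k\|^2)$, so that $\|\tau\|=\frac{1}{2\|k\|}$ and $e^{2\pi i\langle\pm k,\tau\rangle}=-1$, and put $w(x)=v(x)+v(x+\tau)$. Then $w$ is a real trigonometric polynomial whose spectrum (the frequency $j$ acquires the factor $1+e^{2\pi i\langle j,\tau\rangle}$) is contained in $S\setminus\{k,-k\}$, still symmetric and still omitting $0$, and $w>0$ on $B\cap(B-\tau)$, which contains a closed ball of diameter $D-\frac{1}{2\|k\|}$. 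Removing the pairs $\{k,-k\}$ one after another costs in total $\sum_{k\in S}\frac{1}{4\|k\|}$ of diameter and terminates with the zero polynomial; if $D\geq\sum_{k\in S}\frac{1}{4\|k\|}$, the terminal step asserts that the zero function is strictly positive on a nonempty closed set, a contradiction, which is precisely the statement of the theorem. The same one-variable instance of this trick also proves your 1D lemma with the sharp constant $\frac14$, but even with that lemma in hand you would still owe the genuinely multidimensional step, which is why the shift argument is best run in $\T^d$ from the start.
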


This theorem indicated that if the known frequencies have an arbitrary support, then the binary functions can be recovered from $(P_1)$ if it contains a constant block large enough. However, the bound given in theorem \ref{arbitraryfreqthm} is rather loose. 

It is worth mentioning that the conclusion of theorem \ref{arbitraryfreqthm} tells us the `importance' of each frequency band is roughly determined by the reciprocal of the frequency. That is to say, knowing lower frequency measurements is more important for reconstruction of the binary signals than knowing the high frequency measurements. This coincides with the intuition we learn from theorem \ref{lowfreq1D} and differs from the case of sparse reconstruction as in the compressed sensing problems, where the measurements should be spread out in the frequency domain as much as possible.

\subsection{Reconstruction of random binary signal}\label{theoryrandom}

If a binary function is random, i.e. the orthant $\mathbb{O}_{u_0}$ is randomly chosen, there is no deterministic way to guarantee if a certain subspace passes through it, but the probability can be estimated. From now on we denote the kernel of $A$, the image of $A^\top$ by $I_A$ and  the rank of $A$ by $K_A$, $I_A$ and $r$ respectively, then $\dim(K_A)=N-r$, $\dim(I_A)=r$. We say an $r$-dimensional linear subspace is \emph{in general position} if the projections of any $r$ axes of $\R^N$ onto the subspace are linearly independent, and we say $A$ is \emph{in general position} if $K_A$ is in general position. The following result has been known by mathematicians at least as far back as the 1950s (see \cite{Cover:1965p7055} for a brief review).   It says that any $r$-dimensional subspace in $\R^N$ in general position will pass through a fixed number of orthants of $\R^N$:

\begin{lemma}[see e.g.  \cite{Cover:1965p7055}]\label{counting}
Any $r$-dimensional subspace in $\R^N$ in general position passes through $2\sum_{i=0}^{r-1}{{N-1}\choose{i}}$ orthants of $\R^N$.
\end{lemma}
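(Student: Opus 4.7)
The plan is to recast the orthant count as a central hyperplane arrangement count on the subspace $V$ itself, then verify the claimed closed form by induction on $N$.

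\textbf{Step 1: Reformulation.} Let $V\subset\R^N$ be the $r$-dimensional subspace. An orthant of $\R^N$ is labelled by a sign pattern $\epsilon=(\epsilon_1,\dots,\epsilon_N)\in\{+,-\}^N$, and $V$ meets the orthant $\epsilon$ iff some $x\in V$ has $\sign(x_i)=\epsilon_i$ for all $i$. Restricting the $N$ coordinate hyperplanes $H_i=\{x_i=0\}$ of $\R^N$ to $V$ gives $N$ linear hyperplanes $V\cap H_i$ through the origin of $V\cong\R^r$. The achievable sign patterns are precisely the connected components (``regions'') of $V\setminus\bigcup_i (V\cap H_i)$. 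Moreover, the general position hypothesis — that the projections of any $r$ coordinate axes of $\R^N$ onto $V$ are linearly independent — is exactly the statement that any $r$ of the hyperplanes $V\cap H_i$ meet only at $0$, i.e.\ that they form a central arrangement in general position inside $V$. So it suffices to show that $N$ central hyperplanes in general position in $\R^r$ cut that space into $f(N,r):=2\sum_{i=0}^{r-1}\binom{N-1}{i}$ regions.

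\textbf{Step 2: Recursion via hyperplane deletion.} I would induct on $N$, establishing
\[
f(N,r)=f(N-1,r)+f(N-1,r-1).
\]
Delete one hyperplane, say $H_N$; by the inductive hypothesis the remaining $N-1$ leave $f(N-1,r)$ regions. Reinserting $H_N$, each region of $H_N$ (viewed as $\R^{r-1}$) in the induced arrangement corresponds to a slab of the ambient arrangement split into two by $H_N$; general position guarantees that no region of $H_N$ lies entirely inside one old region boundary, so exactly $f(N-1,r-1)$ regions get split. Hence the total becomes $f(N-1,r)+f(N-1,r-1)$. The only subtlety to nail down is the general-position book-keeping: the $N-1$ hyperplanes induced on $H_N$ are themselves in general position (as central hyperplanes in $H_N\cong\R^{r-1}$) because the full $N$-tuple was, so the inductive hypothesis applies on $H_N$.

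\textbf{Step 3: Base cases and closed form.} The base cases $f(N,1)=2$ (any central arrangement on a line) and $f(1,r)=2$ (one hyperplane splits $\R^r$ into two half-spaces) both match $2\sum_{i=0}^{r-1}\binom{N-1}{i}$. Finally, Pascal's identity $\binom{N-1}{i}=\binom{N-2}{i}+\binom{N-2}{i-1}$, applied termwise, verifies that $2\sum_{i=0}^{r-1}\binom{N-1}{i}$ satisfies the recursion of Step 2, so induction concludes the proof.

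The main obstacle I anticipate is not the algebra but making Step 2 rigorous: one must argue carefully, using the central general-position hypothesis, that (i) the $N-1$ traces on $H_N$ really are in general position as a central arrangement on $H_N$, and (ii) every region of that induced arrangement corresponds to an ambient region which $H_N$ properly bisects (as opposed to merely touching on a lower-dimensional face). Both facts follow from the hypothesis that any $r$ of the hyperplanes meet only at the origin, but spelling this out transparently is the delicate part.
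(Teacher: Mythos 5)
Your proposal is correct, and it is essentially the classical argument: the paper itself gives no proof of this lemma, deferring to the cited reference (Cover, 1965), and what you wrote is the standard Schl\"afli/Cover region count for a central arrangement of $N$ hyperplanes in general position in $\R^r$, obtained by deletion--restriction plus induction. Your translation in Step 1 is the right one: the trace of the coordinate hyperplane $\{x_i=0\}$ on $V$ has normal (within $V$) equal to the projection of $e_i$ onto $V$, so the paper's general-position hypothesis is exactly the statement that any $r$ of these traces meet only at the origin, and achievable sign patterns biject with regions of the induced central arrangement. One simplification for Step 2: the identity ``regions of $\mathcal{A}$ = regions of $\mathcal{A}\setminus\{H\}$ + regions of the restriction $\mathcal{A}^{H}$'' holds for an \emph{arbitrary} arrangement (a region of $\mathcal{A}\setminus\{H\}$ is split by $H$ precisely when it meets $H$, and $R\mapsto R\cap H$ bijects such regions with regions of $\mathcal{A}^{H}$), so general position is not needed for the bisection book-keeping you worry about; it is needed only to guarantee that the deleted and restricted arrangements are again in general position (with the degenerate case $r=2$, where all traces on $H_N$ coincide with $\{0\}$, still giving the correct count $f(N-1,1)=2$), so that the inductive closed form applies to both.
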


We denote $P_{r,N}=\sum_{i=0}^{r}{{N}\choose{i}}/2^N$, which is nothing but the cumulative distribution of the function of the standard binomial distribution with $p=\frac{1}{2}$. For a random binary signal $u_0$, we say $u_0$ has no `preference' on orthants if for any two orthants $\mathbb{O}_1$ and $\mathbb{O}_2$,
\be\label{orthantsymmetric}
\prob(\mathbb{O}_{u_0}=\pm\mathbb{O}_1)=\prob(\mathbb{O}_{u_0}=\pm\mathbb{O}_2),
\ee
then since there are $2^N$ orthants in total, we have
\be\label{binomialprob}
P\left(I_A\bigcap\text{int}\left(\mathbb{O}_{u_0}\right)\neq\emptyset\right)=\frac{2\sum_{i=0}^{r-1}{{N-1}\choose{i}}}{2^N}=P_{r-1,N-1}
\ee 
According to theorem \ref{theorem1p}, this is equivalent to saying:
\begin{theorem}
If $u_0$ is a random binary signal with size $N$ without preference on orthants, then given a matrix $A$ in general position with rank $r$, the probability that $u_0$ can be recovered from linear problem $(P_1)$ is $P_{r-1,N-1}$. 
\end{theorem}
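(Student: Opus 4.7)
By Theorem~\ref{theorem1} (equivalently Theorem~\ref{theorem1p}, via the alternative Lemma~\ref{alternative}), the event ``$(P_1)$ recovers $u_0$'' is exactly the geometric event $K_A\cap\mathbb{O}_{u_0}=\{0\}$, i.e.\ ``$K_A$ does not pass through the orthant $\mathbb{O}_{u_0}$''. So conditionally on $A$, the whole question reduces to computing the probability that the random orthant $\mathbb{O}_{u_0}$ lies in the collection of orthants avoided by $K_A$.

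To count that collection, I would apply Lemma~\ref{counting} to $K_A$ directly: since $A$ is in general position, $K_A$ is an $(N-r)$-dimensional subspace in general position, so it passes through exactly $2\sum_{i=0}^{N-r-1}\binom{N-1}{i}$ of the $2^N$ orthants. The complementary count (orthants avoided by $K_A$, equivalently by Lemma~\ref{alternative} the orthants whose interior $I_A$ meets) is therefore $2^N-2\sum_{i=0}^{N-r-1}\binom{N-1}{i}$, and using $\binom{N-1}{i}=\binom{N-1}{N-1-i}$ together with $\sum_{i=0}^{N-1}\binom{N-1}{i}=2^{N-1}$ this simplifies cleanly to $2\sum_{i=0}^{r-1}\binom{N-1}{i}$, recovering the count in~\eqref{binomialprob}.

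To turn the count into a probability I would invoke~\eqref{orthantsymmetric}. The set of orthants avoided by $K_A$ is automatically centrally symmetric (any linear subspace contains $v$ iff it contains $-v$), so it is a disjoint union of $\sum_{i=0}^{r-1}\binom{N-1}{i}$ antipodal pairs; by~\eqref{orthantsymmetric} each antipodal pair is hit with probability $2/2^N$ under the law of $\mathbb{O}_{u_0}$, and summing over pairs yields $\frac{2\sum_{i=0}^{r-1}\binom{N-1}{i}}{2^N}=P_{r-1,N-1}$, which is the claim.

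The main subtlety I expect is bookkeeping rather than depth: ``general position'' as defined in the excerpt is a statement about $K_A$ and is not literally self-dual, so applying Lemma~\ref{counting} directly to $I_A$ would demand a separate verification. The route above sidesteps this by invoking Lemma~\ref{counting} only on $K_A$ and using Lemma~\ref{alternative} orthant-by-orthant to pass to $I_A$. General position also makes the closed/open distinction harmless, since $K_A$ meets no coordinate subspace except at the origin, so ``$K_A\cap\mathbb{O}_{u_0}=\{0\}$'' and ``$K_A$ does not intersect $\mathrm{int}(\mathbb{O}_{u_0})$'' describe the same event up to a null set of matrices.
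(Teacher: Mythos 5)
Your overall architecture is sound and genuinely different in mechanics from the paper: the paper goes through Theorem~\ref{theorem1p} and applies Lemma~\ref{counting} directly to the $r$-dimensional space $I_A$, as in \eqref{binomialprob}, whereas you work with Theorem~\ref{theorem1}, count the orthants met by $K_A$, and complement; your antipodal-pair handling of \eqref{orthantsymmetric} is in fact more careful than the paper's one-line step. However, the step you yourself flagged as the main subtlety is supported by a false statement. You assert that general position makes ``$K_A\cap\mathbb{O}_{u_0}=\{0\}$'' and ``$K_A\cap\text{int}(\mathbb{O}_{u_0})=\emptyset$'' coincide because ``$K_A$ meets no coordinate subspace except at the origin.'' The paper's general position only forbids nonzero kernel vectors supported on at most $r$ coordinates; coordinate subspaces of dimension larger than $r$ meet $K_A$ nontrivially for pure dimension reasons whenever $N-r\geq 2$ (for instance $K_A\cap\{x_1=0\}$ has dimension at least $N-r-1\geq 1$). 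Moreover, ``up to a null set of matrices'' is not available here: $A$ is a fixed given matrix and the theorem must hold for every $A$ in general position. Since Lemma~\ref{counting} counts orthants whose \emph{interiors} the subspace meets, while Theorem~\ref{theorem1} concerns the closed orthant, your complementary count is only valid once this closed/open equivalence is actually proved; as written, that is a genuine gap.

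The gap is repairable, and the repaired fact is exactly what you need. Suppose $0\neq v\in K_A\cap\mathbb{O}$ and let $Z=\{i:v_i=0\}$; general position gives $|\text{supp}(v)|\geq r+1$, so $|Z|\leq N-r-1$. Choose any coordinate set $T\supseteq Z$ with $|T|=N-r$. The projection of $K_A$ onto the coordinates in $T$ is injective, because its kernel consists of kernel vectors supported on the complementary $r$ coordinates, which vanish by general position; by dimension count it is bijective. Hence there is $w\in K_A$ whose entries on $Z$ carry the strict signs prescribed by $\mathbb{O}$, and $v+\epsilon w\in\text{int}(\mathbb{O})$ for small $\epsilon>0$. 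So under general position ``touching the closed orthant at a nonzero point'' and ``meeting its interior'' do coincide, the set of orthants with $K_A\cap\mathbb{O}\neq\{0\}$ has exactly $2\sum_{i=0}^{N-r-1}\binom{N-1}{i}$ elements, and your complementation and antipodal-pair computation then give $P_{r-1,N-1}$ as claimed. Note that the same projection argument shows general position of $K_A$ is equivalent to general position of $I_A$, which is the verification the paper itself silently needs when it applies Lemma~\ref{counting} to $I_A$; either route ultimately rests on this one fact, and it is this fact---not the ``no coordinate subspace'' claim---that you should state.
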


It is well known that $P_{r,N}$ can be approximated by $\Phi\left(\frac{2r-N}{\sqrt{N}}\right)$ where 
\[\Phi(x)=\frac{1}{\sqrt{2\pi}}\int_{-\infty}^x e^{-t^2/2}dt
\] is the cumulative distribution function of the normal distribution. By Hoeffding's inequality, the tail of $P_{r,N}$ is bounded by
\be
\begin{cases}
P_{r,N}\leq \frac{1}{2}\exp\left(-\frac{(2r-N)^2}{2N}\right)&\text{when}\quad r<N/2\\
P_{r,N}\geq 1-\frac{1}{2}\exp\left(-\frac{(2r-N)^2}{2N}\right)&\text{when}\quad r>N/2
\end{cases}.
\ee
Therefore, if $r/N\to \rho$ as $N\to\infty$, then 
\be
\begin{cases}
P_{r,N}\leq\frac{1}{2}\exp\left(-\left(\rho-\frac{1}{2}\right)^2N\right)\to0&\text{when}\quad \rho<1/2\\
P_{r,N}\geq1-\frac{1}{2}\exp\left(-\left(\rho-\frac{1}{2}\right)^2N\right)\to1&\text{when}\quad \rho>1/2\\
\end{cases}.
\ee
which is illustrated in Fig. \ref{prob}.
\begin{figure}
\begin{center}
\includegraphics[width=3in]{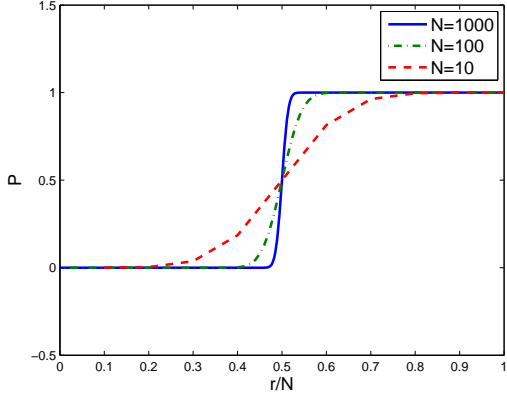}
\end{center}
\caption{$P_{r,N}$ as a function of $r/N$.}\label{prob}
\end{figure}

The discussion above can be summarized by the following theorem, which basically says that if the number of  measurements are more than a half of the size of the signal, the probability that the convex relaxation is exact will tend to $1$ as the size of the signal goes to infinity. 
\begin{theorem}\label{randombinary}
If $u_0$ is a random binary signal with size $N$ without preference on orthants, $A$ is a matrix in general position with rank $r$, when $N$ is large, the probability that $u_0$ can be recovered from linear problem $(P_1)$ can be approximated by $\Phi\left(\frac{2r-N-1}{N-1}\right)$ where $\Phi$ is the cumulative distribution function of the normal distribution. If $\frac{r-1}{N-1}\to\rho>1/2$ as $N\to\infty$, then $u_0$ can be recovered from $(P_1)$ with overwhelming probability at least $1-\frac{1}{2}e^{-c(N-1)}$ where $c=\left(\rho-\frac{1}{2}\right)^2$. 
\end{theorem}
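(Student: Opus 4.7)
The plan is to assemble the ingredients already developed in this subsection; the theorem is essentially a packaging of Theorem \ref{theorem1p}, Lemma \ref{counting}, and standard binomial tail estimates. First, by Theorem \ref{theorem1p}, recovery of $u_0$ from $(P_1)$ is equivalent to the existence of a vector $v=A^\top\eta$ whose signs match the pattern of $u_0$, which in turn is equivalent to $I_A \cap \mathrm{int}(\mathbb{O}_{u_0})\neq\emptyset$. Since $A$ has rank $r$ and is in general position, $I_A$ is an $r$-dimensional subspace of $\R^N$ in general position, so Lemma \ref{counting} applies and tells us that $I_A$ meets the interiors of exactly $2\sum_{i=0}^{r-1}\binom{N-1}{i}$ of the $2^N$ orthants of $\R^N$.

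Next, I would use the no-preference condition \eqref{orthantsymmetric} to convert this count into a probability. Because the orthants meeting $I_A$ come in antipodal pairs (if $I_A$ meets $\mathbb{O}$ it also meets $-\mathbb{O}$, since $I_A$ is a linear subspace), and because \eqref{orthantsymmetric} asserts equidistribution precisely at the level of antipodal pairs, the probability that the random orthant $\mathbb{O}_{u_0}$ lies in the collection of orthants met by $I_A$ equals the ratio of the counts. This yields
\[
\prob\bigl(I_A\cap\mathrm{int}(\mathbb{O}_{u_0})\neq\emptyset\bigr)=\frac{2\sum_{i=0}^{r-1}\binom{N-1}{i}}{2^N}=P_{r-1,N-1},
\]
which is exactly \eqref{binomialprob}, and by Theorem \ref{theorem1p} this equals the success probability of $(P_1)$.

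For the asymptotic statement I would invoke the De Moivre--Laplace theorem on the binomial cumulative distribution function with $p=1/2$: as $N\to\infty$, $P_{r-1,N-1}$ is well approximated by $\Phi\!\left((2(r-1)-(N-1))/\sqrt{N-1}\right)$, which is the quoted normal-approximation formula. For the tail statement, I would apply Hoeffding's inequality to a Binomial$(N-1,1/2)$ random variable, giving
\[
P_{r-1,N-1}\;\geq\; 1-\tfrac{1}{2}\exp\!\left(-\frac{(2(r-1)-(N-1))^2}{2(N-1)}\right)
\]
whenever $r-1>(N-1)/2$. Substituting the hypothesis $(r-1)/(N-1)\to\rho>1/2$, the exponent is at least $(\rho-1/2)^2(N-1)$ for all large $N$, which gives the advertised bound $1-\tfrac{1}{2}e^{-c(N-1)}$ with $c=(\rho-1/2)^2$.

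There is no genuine obstacle, since all the deep content (the face-counting of the hyperplane arrangement and the sign-pattern characterization of exact recovery) is inherited from the earlier results. The only point that deserves care is the probabilistic step: the no-preference hypothesis \eqref{orthantsymmetric} does \emph{not} say that $\mathbb{O}_{u_0}$ is uniformly distributed over all $2^N$ orthants, only that antipodal pairs of orthants are equally likely. The argument works because Lemma \ref{counting} also counts orthants in antipodal pairs (the leading factor of $2$), so the ratio $2\sum_{i=0}^{r-1}\binom{N-1}{i}/2^N$ is exactly the pair-level probability, and one does not need the stronger symmetry.
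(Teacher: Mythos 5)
Your argument is correct and follows essentially the same route as the paper, whose proof of this theorem is precisely the in-text discussion of Section \ref{theoryrandom}: Theorem \ref{theorem1p} together with Lemma \ref{counting} and the no-preference condition yield the probability $P_{r-1,N-1}$ in \eqref{binomialprob}, and the normal approximation plus Hoeffding's inequality give the asymptotic statements. Your treatment of the antipodal-pair symmetry is in fact slightly more careful than the paper's, and your expression $\Phi\bigl((2r-N-1)/\sqrt{N-1}\bigr)$ is the correct form of the approximation (the missing square root in the theorem statement appears to be a typo).
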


\section{Solving the Optimization Problem}\label{solving}

As stated in section \ref{theorygen}, there are many convex models that can recover the binary signals. Since the measurements might be noisy in practice, we choose to reconstruct the signals via the following optimization problem:
\be\label{P_3}
\min_u \|Au-b\|^2 \quad \st\quad 0\leq u \leq 1.
\ee
First we discuss the robustness of this model. Let $u_0$ is the true binary function and $b=Au_0$ is the clean measurement. Assume $b$ is contaminated by noise $\epsilon$. For corrupted measurement $\tilde b=b+ \epsilon $, we want to investigate if this model will still lead to the correct answer. Let 
\[
B(u)_i=\begin{cases}1,&u_i\geq 1/2\\
0,&u_i<1/2
\end{cases}
\]
be the thresholding operator that maps any function to its closest binary function. The following theorem shows that the model is robust to small perturbation. In section \ref{numerical} the numerical results will show that the more measurements are given, the more robust the reconstruction would be, which is not surprising. 

\begin{theorem}\label{robustness}
If $u_0$ is the unique solution of $(P_1)$, $\tilde b=b+ \epsilon $ is the corrupted measurement,  $\tilde u$ is the minimizer of the optimization problem \[
\min_u \|Au-\tilde b\|_2^2\quad\st\quad 0\leq u\leq 1,\]
 then when $\|\epsilon\|< h(A, \mathbb{O}_{u_0})$ where $h>0$ is a small amount depending only on $A$ and $\mathbb{O}_{u_0}$ (see details in the proof), $B(\tilde u)=u_0$. 
\end{theorem}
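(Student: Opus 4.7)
The plan is to combine the ``positive'' dual witness supplied by Theorem \ref{theorem1p} with a routine minimizer comparison. Since $u_0$ is the unique solution of $(P_1)$, Theorem \ref{theorem1p} guarantees the existence of $\eta$ such that $w:=A^\top\eta$ satisfies $w(x)<0$ whenever $u_0(x)=1$ and $w(x)>0$ whenever $u_0(x)=0$. Because the signal lives in a finite-dimensional space and $w$ has no zero entries, the quantity
\[
\delta := \min_{x} |w(x)|
\]
is strictly positive. The proposal is to use $w$ as a linear certificate that forces any admissible perturbation $v=\tilde u-u_0$ to be small in $\ell^\infty$.

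First I would record the feasibility inequality. Since $u_0$ is itself feasible for the perturbed problem and has residual $\|Au_0-\tilde b\|_2=\|\epsilon\|_2$, optimality of $\tilde u$ yields $\|A\tilde u-\tilde b\|_2\leq\|\epsilon\|_2$, and hence by the triangle inequality $\|Av\|_2\leq 2\|\epsilon\|_2$. Next I would bound the inner product $\langle w,v\rangle$ from two sides. On one hand, the box constraints $0\leq\tilde u\leq 1$ together with $u_0\in\{0,1\}^N$ put $v$ exactly in the orthant $\mathbb{O}_{u_0}$: $v(x)\leq 0$ where $u_0(x)=1$ and $v(x)\geq 0$ where $u_0(x)=0$. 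Since this sign pattern is opposite to that of $w$, every term in the sum $\langle w,v\rangle=\sum_x w(x)v(x)$ is nonpositive and bounded away from zero by $-\delta |v(x)|$, so
\[
-\langle w,v\rangle \;\geq\; \delta\,\|v\|_1.
\]
On the other hand, $-\langle w,v\rangle = -\langle \eta, Av\rangle \leq \|\eta\|_2\|Av\|_2\leq 2\|\eta\|_2\|\epsilon\|_2$. Combining the two estimates,
\[
\|v\|_\infty \;\leq\; \|v\|_1 \;\leq\; \frac{2\|\eta\|_2}{\delta}\,\|\epsilon\|_2.
\]

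Finally, to conclude $B(\tilde u)=u_0$ it suffices to guarantee $\|v\|_\infty<1/2$, since then $\tilde u(x)<1/2$ whenever $u_0(x)=0$ and $\tilde u(x)>1/2$ whenever $u_0(x)=1$, so the thresholding operator recovers $u_0$ entry-by-entry. This requirement is satisfied as soon as
\[
\|\epsilon\|_2 \;<\; h(A,\mathbb{O}_{u_0}) \;:=\; \frac{\delta}{4\|\eta\|_2},
\]
which is the quantity advertised in the statement. One may additionally optimize over all admissible certificates $\eta$ to enlarge $h$, but the existence of a single $\eta$ already suffices.

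The part of the argument that requires the most care is confirming that $\delta$ and $\|\eta\|_2$ depend only on $A$ and $\mathbb{O}_{u_0}$ (and not on $\tilde u$ or $\epsilon$): this is immediate from Theorem \ref{theorem1p}, which produces $\eta$ purely from the uniqueness hypothesis on $u_0$, and from the finite-dimensionality ensuring $\delta>0$. Everything else is a direct chain of standard inequalities, so I do not anticipate any genuinely hard step beyond assembling these pieces carefully.
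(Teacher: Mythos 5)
Your proof is correct and rests on the same core mechanism as the paper's: the strict dual certificate $w=A^\top\eta\in\mathrm{int}(\mathbb{O}_{u_0})$ supplied by Theorem \ref{theorem1p}, combined with the observation that $v=\tilde u-u_0$ lies in $\mathbb{O}_{u_0}$ while $\|Av\|$ is small, so that $\langle \eta, Av\rangle$ controls $\|v\|_1$. Where you differ is in how the smallness of $\|Av\|$ is obtained: the paper proves a separate lemma by viewing $\tilde u-u_0$ as the minimizer of $\min_w\|Aw-\epsilon\|$ over the orthant $\mathbb{O}_{u_0}$ and invoking LP duality, KKT complementary slackness and Pythagoras to get $\|Aw^*\|\le\|\epsilon\|$, whereas you use only feasibility of $u_0$, optimality of $\tilde u$ over the box, and the triangle inequality to get $\|Av\|\le 2\|\epsilon\|$. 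Your route is more elementary and sidesteps the paper's identification of $\tilde u-u_0$ with the minimizer of the orthant-constrained problem (which is a bit delicate, since the actual feasible set for $v$ is a box inside the orthant), at the cost of a factor $2$ in the admissible noise level $h$. One small slip: the certificate of Theorem \ref{theorem1p} has the \emph{same} sign pattern as $\mathbb{O}_{u_0}$ (negative where $u_0=1$, positive where $u_0=0$), so each term $w(x)v(x)$ is nonnegative and the correct inequality is $\langle w,v\rangle\ge\delta\|v\|_1$, not $-\langle w,v\rangle\ge\delta\|v\|_1$; since Cauchy--Schwarz bounds $|\langle w,v\rangle|=|\langle\eta,Av\rangle|\le\|\eta\|_2\|Av\|_2$ regardless of sign, your final estimate $\|v\|_\infty\le 2\|\eta\|_2\|\epsilon\|_2/\delta$ and the choice $h=\delta/(4\|\eta\|_2)$ stand unchanged.
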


Since \eqref{P_3} is a standard  bounded least square problem, it can be solved by many existing optimization algorithms. However, we propose an algorithm that is specifically developed for this problem. It will only utilize the discrete Fourier transform without explicitly storing and multiplying the matrix $A$, which is can be very large in practical problems and can make most out-of-the-box optimization packages very inefficient. 

Our algorithm will be based on the split Bregman method introduced in \cite{Goldstein:2009p2473} and modified in \cite{Szlam:2010p7216} for solving the non-negative least square problem. We replace \eqref{P_3} with an equivalent problem
\[
\min_u \|Au-b\|^2 \quad \st \quad u=P(d).
\]
where $P(d)$ is defined component-wisely by 
\[P(d)=
\begin{cases}1&d\geq 1\\
d&0<d<1\\
0&d\leq 0
\end{cases}.
\]
This constrained problem can be solved iteratively by
\[
\begin{cases}
(d^{k+1},u^{k+1})=\min_{d,u}\frac{\lambda}{2}\|Au-b^k\|_2^2+\|u-P(d)-v^k\|_2^2\\
v^{k+1}=v^k+P(d^{k+1})-u^{k+1}\\
b^{k+1}=b^k+b-Au^{k+1}
\end{cases}
\]
The last two lines are called Bregman steps and can be understood as the gradient ascent steps in the augmented Lagrangian method. Theory on the convergence of this method can be
found in \cite{Kontogiorgis:1998p8034, Osher:2005p632, Goldstein:2009p2473}. The first line can be solved exactly respectively on $d$ and $u$, giving rise to the following iterations:
\be
\begin{cases}
d^{k+1}=P(u^k-v^k)\\
u^{k+1}=(\lambda A^\top A+I)^{-1}(\lambda A^\top b^k+P(d^{k+1})+v^k)\\
v^{k+1}=v^k+P(d^{k+1})-u^{k+1}\\
b^{k+1}=b^k+b-Au^{k+1}
\end{cases}
\ee
Here the first, third and last lines contain only trivial computations. For the second line, we can notice that when $A=S \F$, $(\lambda A^\top A+I)^{-1}=\F^{-1} (N\lambda S+I)^{-1}\F$ where $N$ is the size of the signal. Since $N\lambda S+I$ is nothing but a diagonal matrix, the whole operator can be calculated efficiently and precisely. Therefore we get the following algorithm \ref{OverallAlgorithm}. Numerical results in the next section would show that this algorithm works very well.

If the given information is not the partial Fourier measurements of the signal but a filtered signal, i.e. $b=Au=\F^\top K\F u$ where $K$ is a filter in frequency domain, then this algorithm still works after a small modification. The only point that needs to be changed is that now $(\lambda A^\top A+I)^{-1}=\F^{-1} (N\lambda K^\top K+I)^{-1}\F$. 

In some problems, the norm $\|Au-b\|^2$ can also be preconditioned, e.g.  to prevent the effect of the noise in high frequencies. We can minimize $\|Au-b\|_M^2=(Au-b)^\top M (Au-b)$ where $M$ is a certain preconditioner in the frequency domain. Again, the algorithm still works without many modifications except the inverse operator becoming $(\lambda A^\top MA+I)^{-1}$ now.

\begin{algorithm}
\caption{The Split Bregman Algorithm for Solving \eqref{P_3}}\label{OverallAlgorithm}
\begin{algorithmic}

\STATE  Initialize: Let $b_0=b$. Start from initial guess $u=\F^{-1}b$.

\WHILE {$\|Au-b_0\|^2$ not small enough}

\STATE $d\leftarrow P(u-v)$

\STATE $u\leftarrow (\lambda A^\top A+I)^{-1}(\lambda A^\top b+P(d)+v)$

\STATE $v\leftarrow v+P(d)-u$

\STATE $b\leftarrow b+b_0-Au$

\ENDWHILE

\STATE $u \leftarrow B(u)$. 

\end{algorithmic}
\end{algorithm}

\section{Numerical Results}\label{numerical}

First we numerically show that binary signals can in general be characterized by very few frequency measurements. Fig. \ref{Fig1} shows several 1D and 2D signals:
\begin{itemize}
\item  The first one is a 1D binary signal. It contains 15 constant intervals with value 1 and 15 intervals with 0, so by theorem \ref{lowfreq1D} it can be fully determined by Fourier coefficients $a_k$ for $|k|\leq 16$, no matter how long the signal actually is (the length of the signal shown here is 400). 
\item  The second one is a binary image corresponding to a geometrical shape, the size is $200\times 200$.  The experiment shows that it is fully determined by Fourier coefficients $a_k$ for $|k|\leq 5$. The needed Fourier coefficients for characterization of the shape account for 0.2\% of the total Fourier coefficients.
\item  The third one is a barcode image, which can also be treated essentially as a 1D signal. It contains 15 black bars and 15 white bars, so the Fourier coefficients needed are $a_k$ with $|k_1|\leq 16$, where $k_1$ is the component of $k$ along the horizontal dimension, no matter how large the image actually is  (the width of the barcode shown here is 400).
\item  The last one is an image with handwritten letters,  with size $100\times 100$.  The experiment shows that it is fully determined by Fourier coefficients $a_k$ with $|k|\leq 10$. The needed Fourier coefficients for the characterization of the image account for 3.17\% of the total Fourier coefficients.

\end{itemize}

\begin{figure}[]
\begin{center}
\includegraphics[width=0.8in]{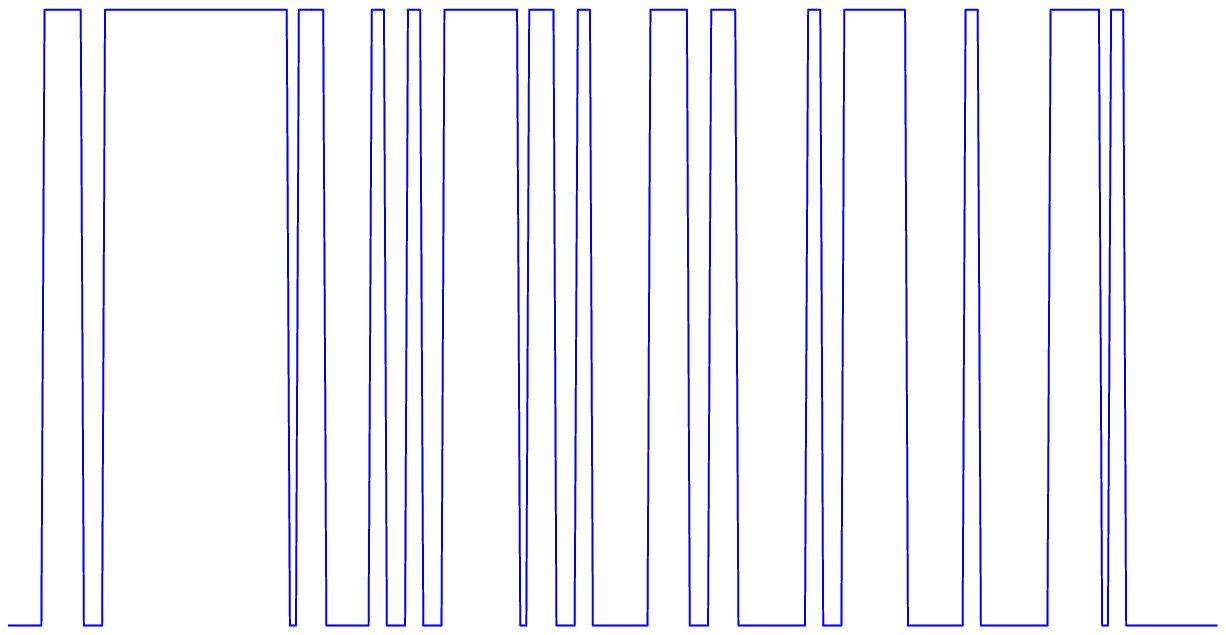}
\includegraphics[width=0.8in]{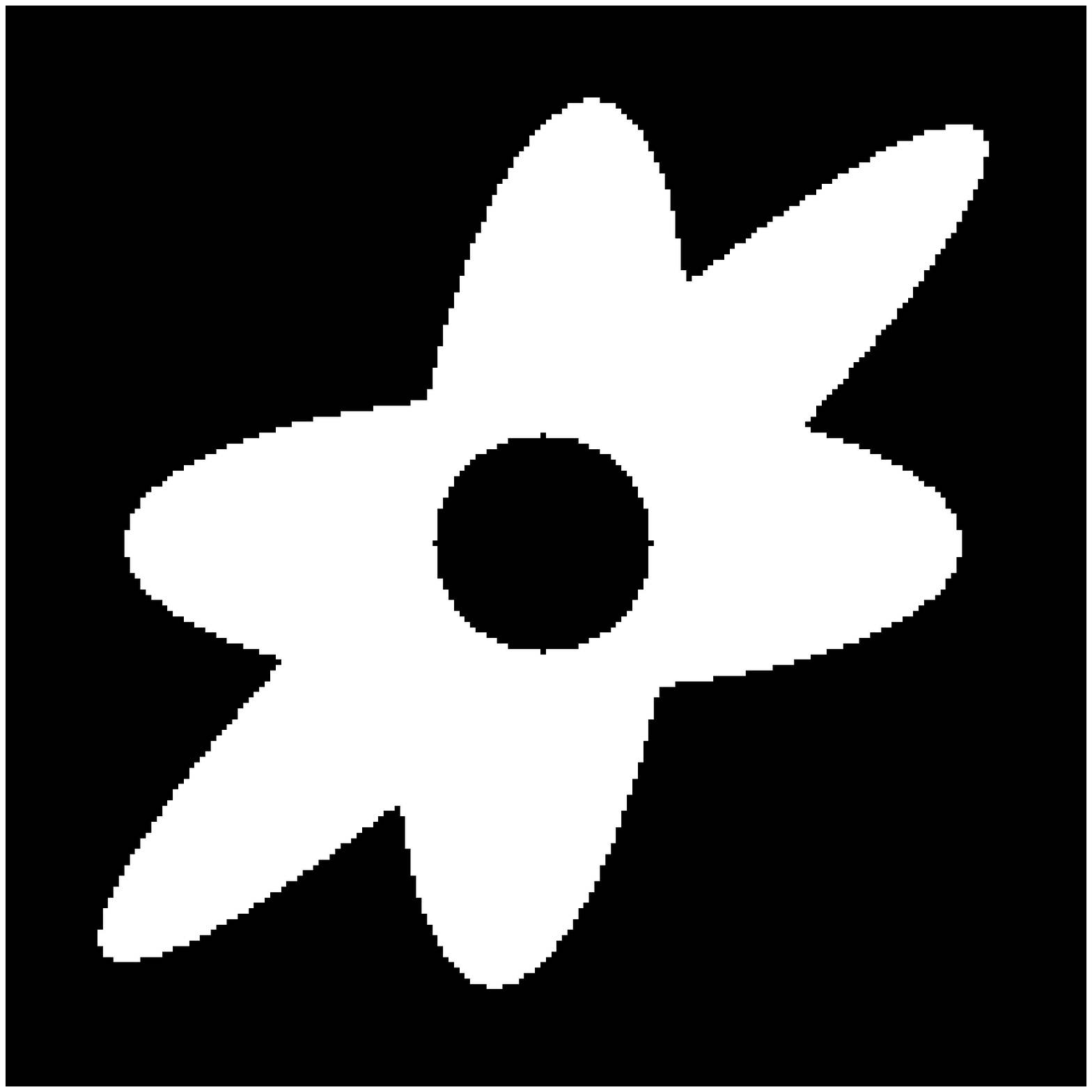}
\includegraphics[width=0.8in]{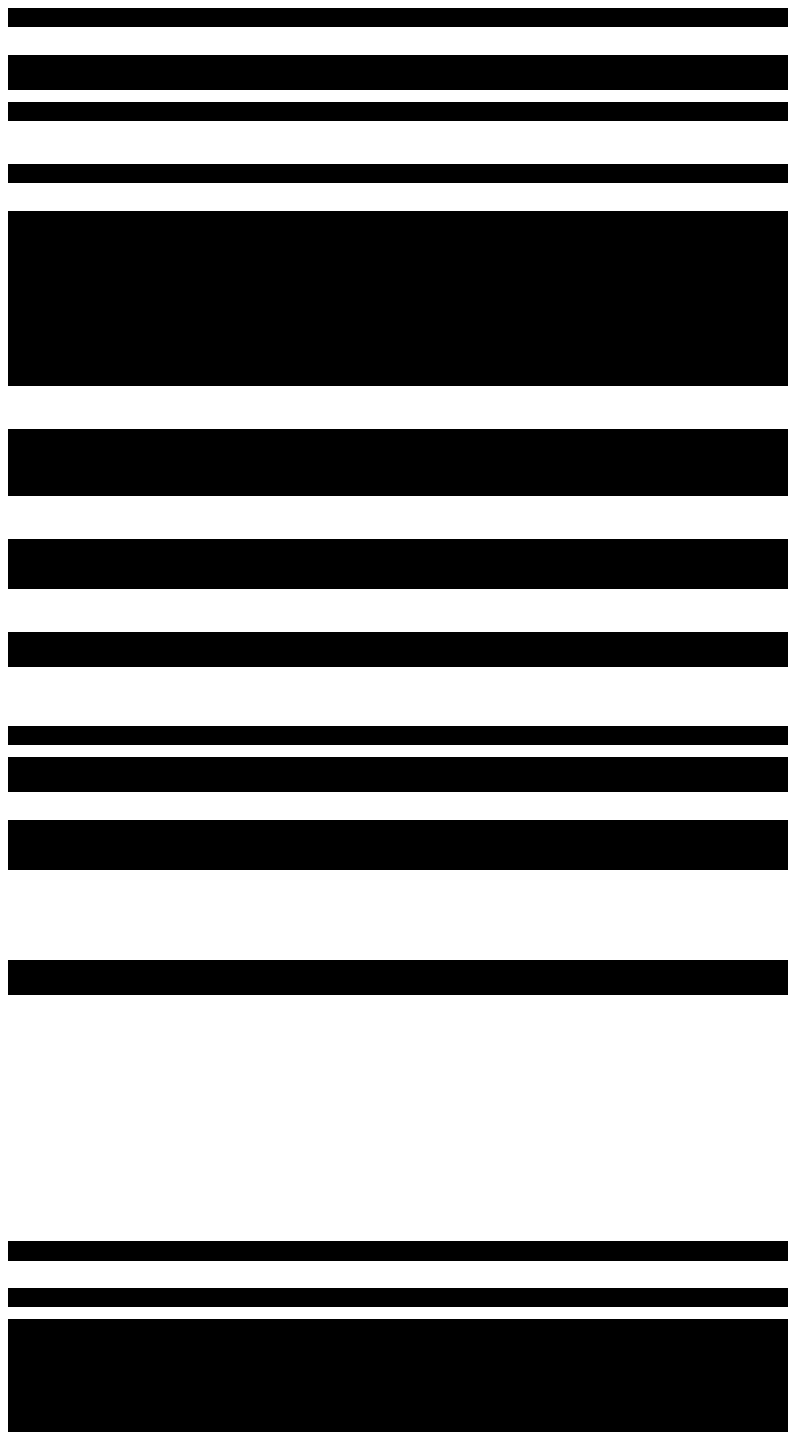}
\includegraphics[width=0.8in]{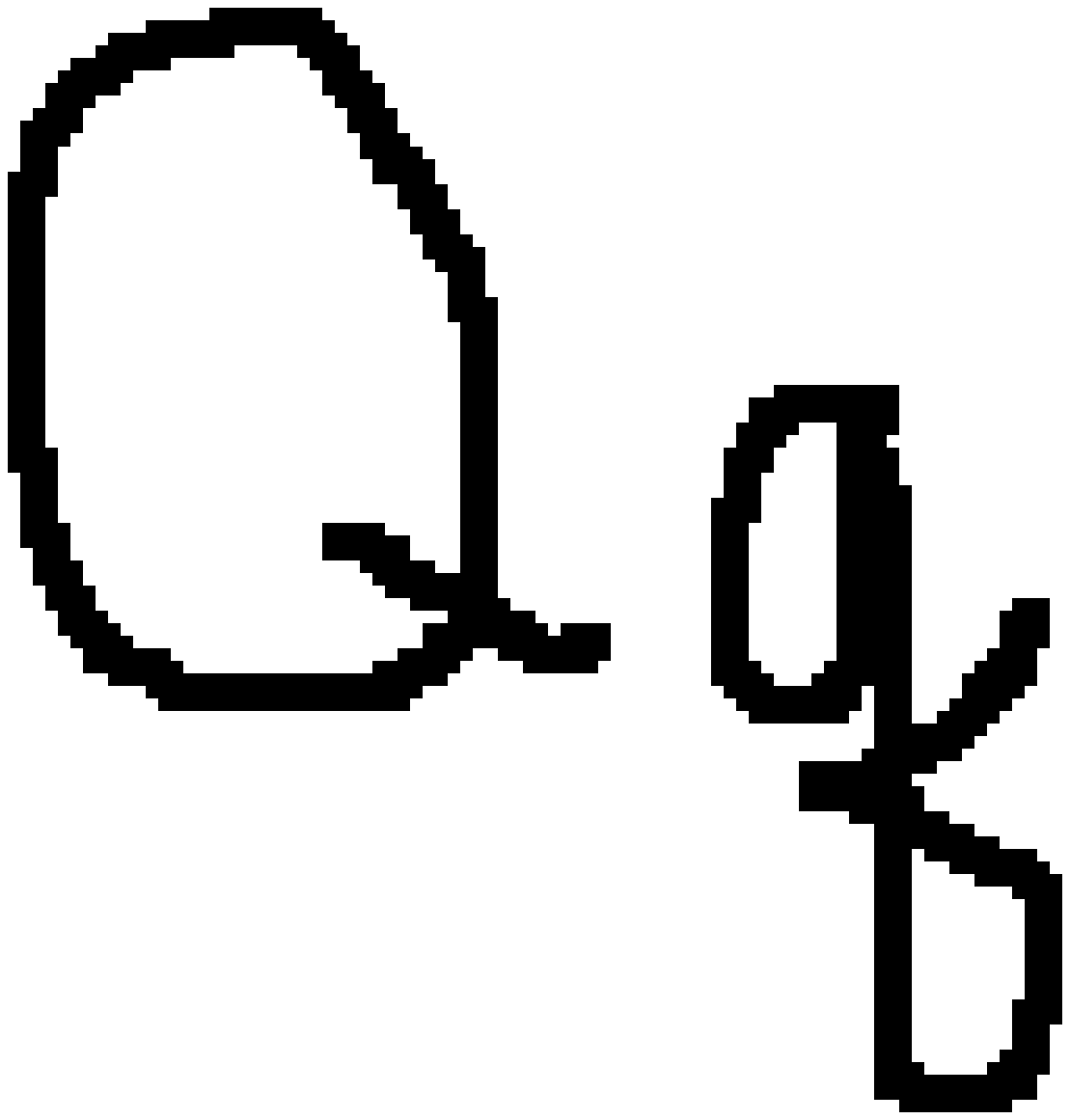}
\end{center}
\caption{Several binary signals: a random 1D binary signal, a geometrical shape, a barcode and a handwriting image.}\label{Fig1}
\end{figure}

To show that these binary signal can be recovered from the low frequency measurements, we filter them with a low-pass Gaussian kernel whose band corresponds to the partial frequencies. As long as the needed low frequency information is precisely given, an exact reconstruction would be available. However, the numerical experiments also show that the more measurements are known, the faster the reconstruction is, which means that it is easier for the algorithm to find the correct binary signal. Fig. \ref{Fig2} demonstrates the filtered signal and the reconstruction. The curves in Fig. \ref{time} show that the reconstruction time decreases when more measurements are given. The $x$-axis measures the radius of the support of the given frequency information, while the $y-$axis measures the logarithm of computational time.

\begin{figure}[]
\begin{center}
\includegraphics[width=0.8in]{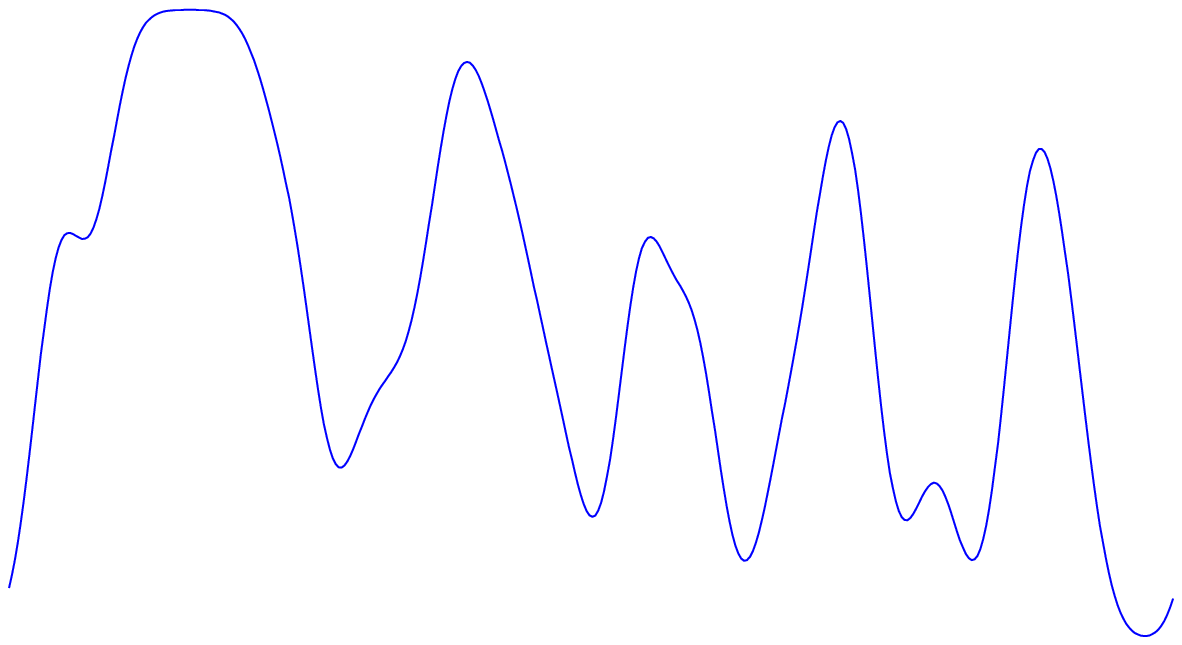}
\includegraphics[width=0.8in]{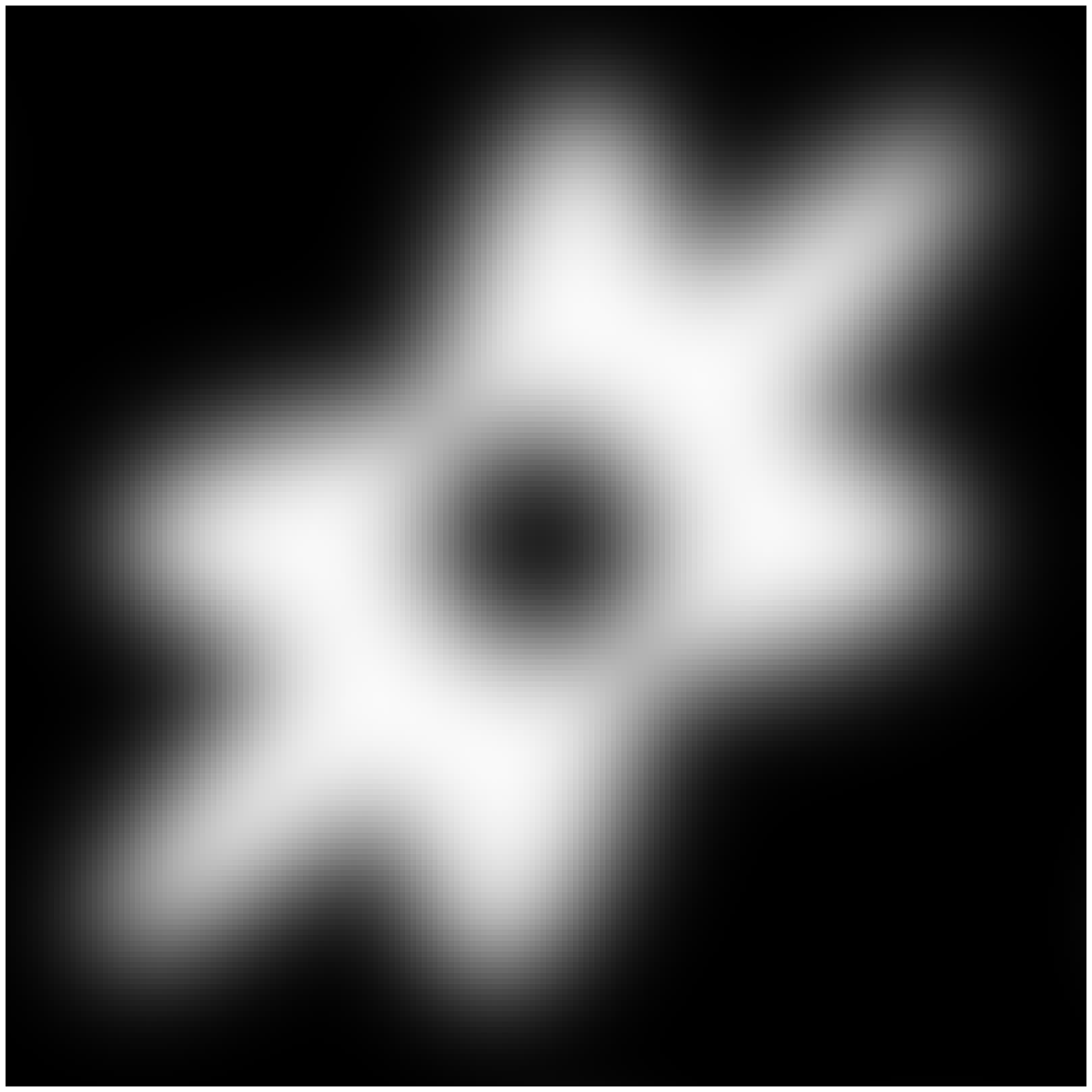}
\includegraphics[width=0.8in]{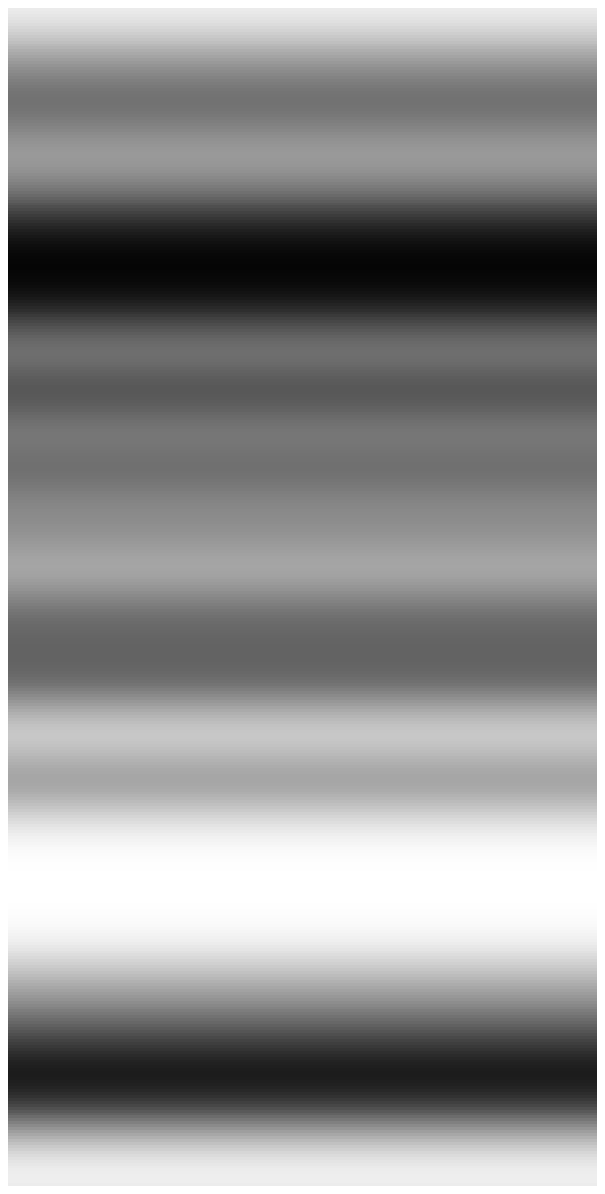}
\includegraphics[width=0.8in]{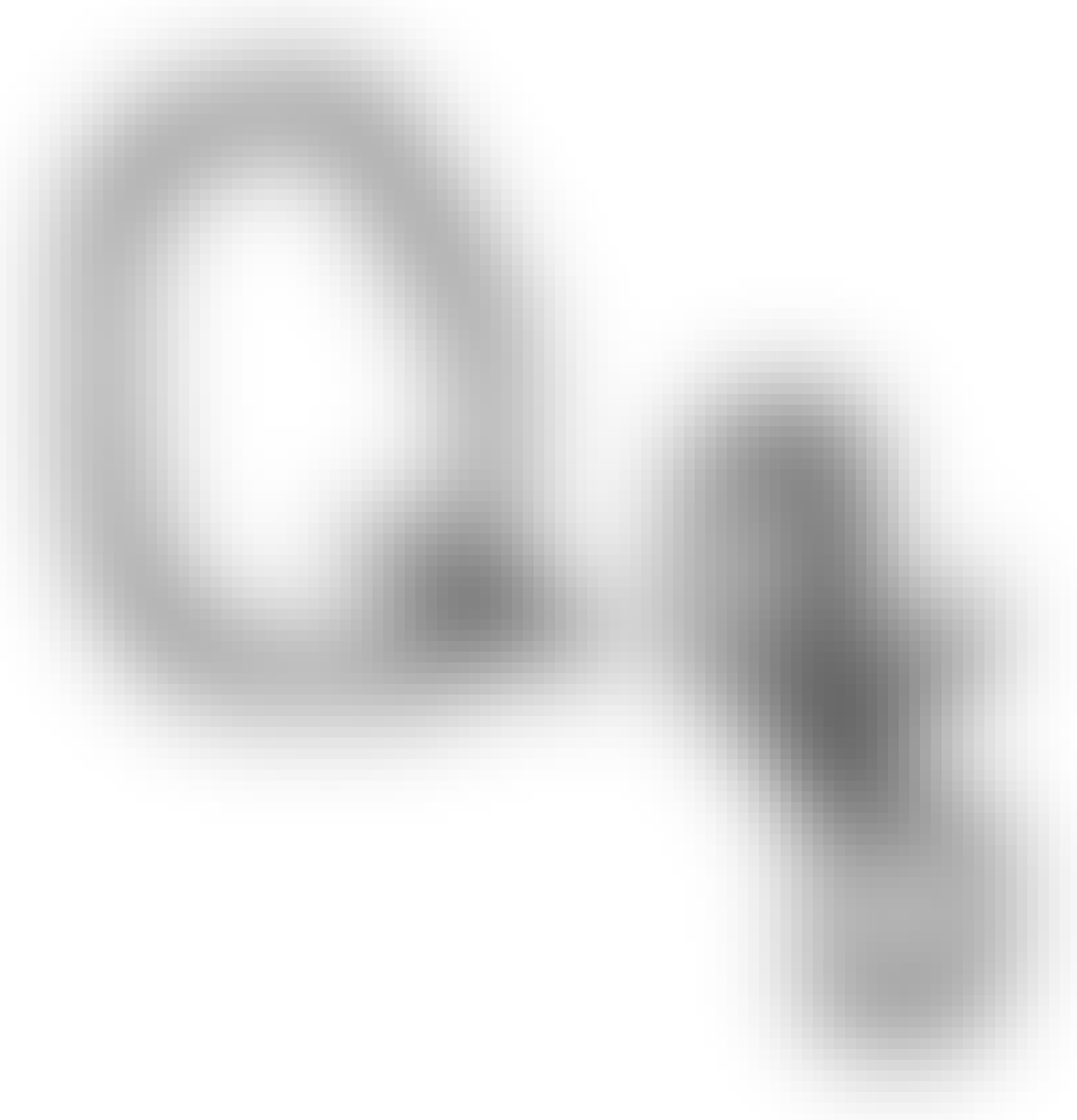}\\
\includegraphics[width=0.8in]{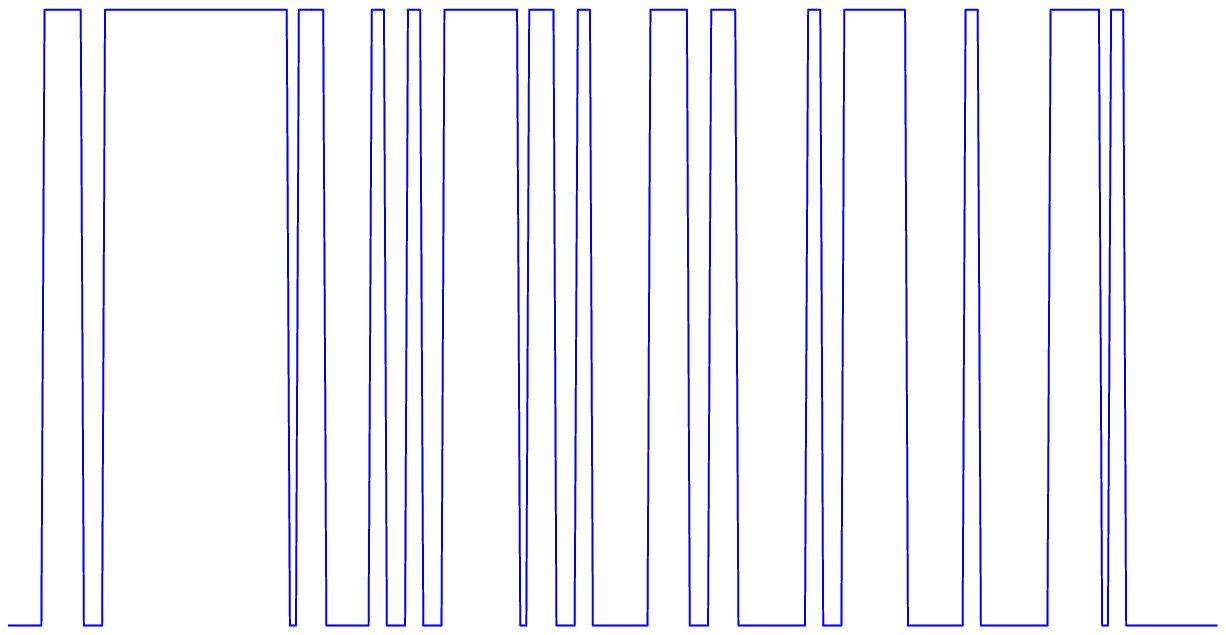}
\includegraphics[width=0.8in]{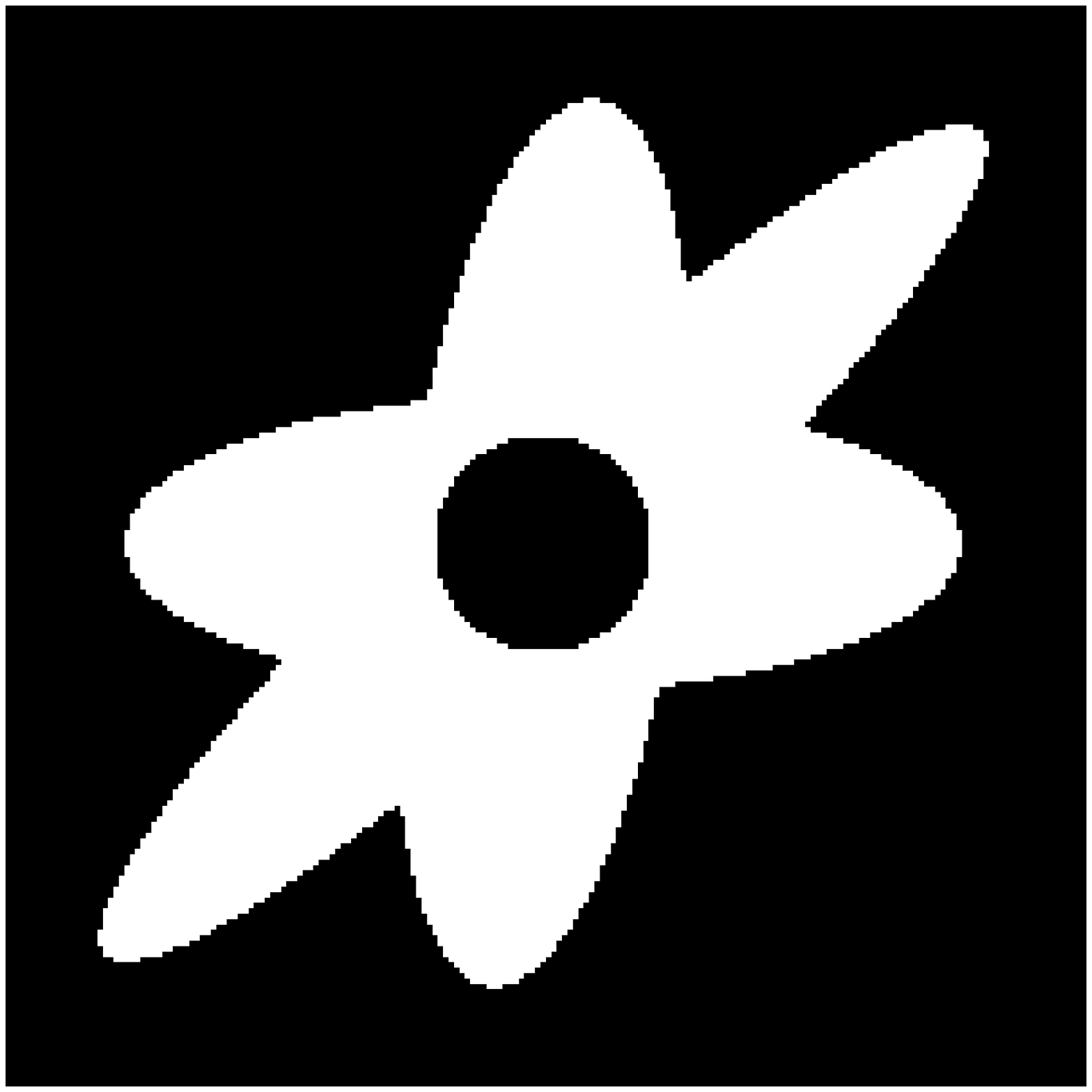}
\includegraphics[width=0.8in]{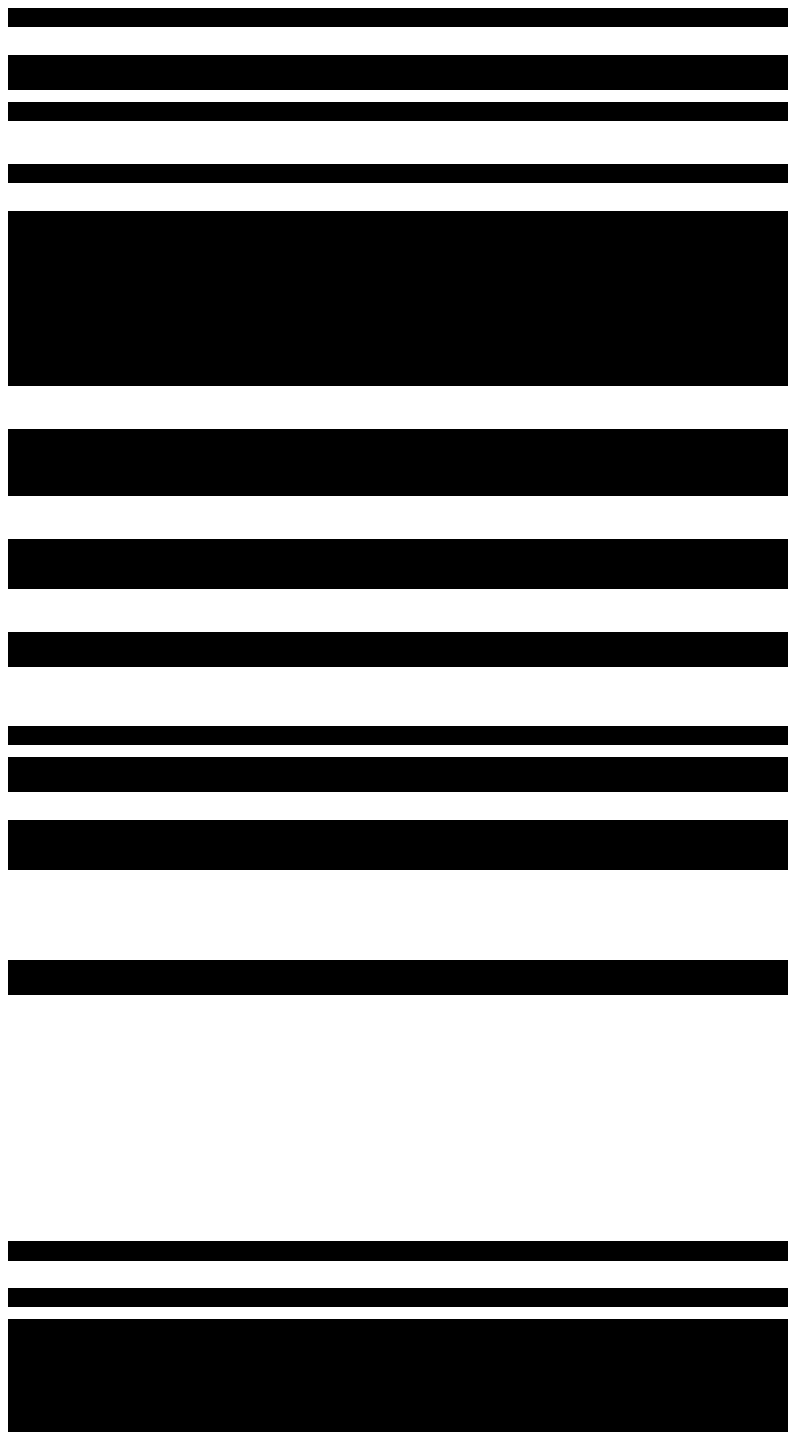}
\includegraphics[width=0.8in]{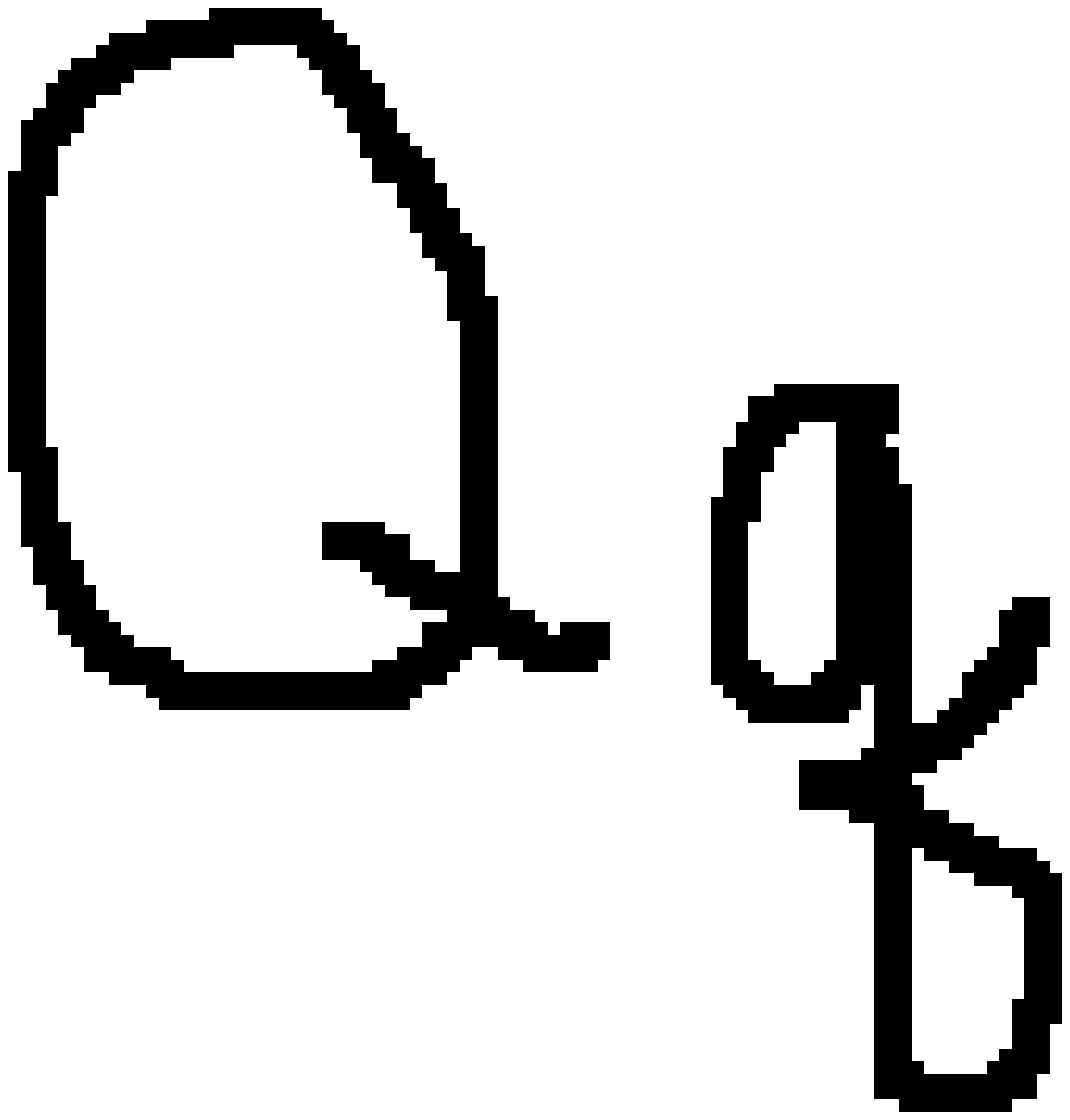}
\end{center}
\caption{The signals after low-pass filtering and the reconstructions.}\label{Fig2}
\end{figure}

\begin{figure}[]
\begin{center}
\includegraphics[width=3in]{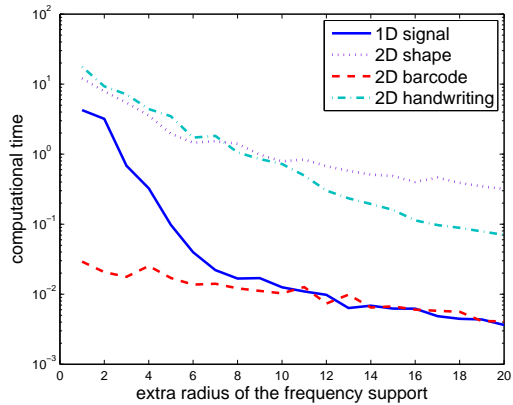}
\end{center}
\caption{Time elapsed (in seconds) for the reconstruction versus the extra radius of the support of the given frequency information.}\label{time}
\end{figure}

To demonstrate that the reconstruction is robust, we now recover the signal from the low-pass filtered measurements with noise. It is well known that deblurring with large amounts of noise present is a difficult task. Our results show that even with very large amounts of noise and strong blurring, the results are still sensible.  In Fig. \ref{Fig4}, tests on a 1D binary signal with noisy measurements are demonstrated. When the input signal is very noisy, the positions of the bars in the reconstruction are not precisely equal but close to the original signal. The minor difference between the reconstruction and the true signal are highlighted by circles. Fig. \ref{Fig5} demonstrates how the number of miss-identified 0s and 1s changes with the number of measurements and noise level. We do the experiments for different levels of noise and different numbers of measurements respectively. For any given pair of fixed noise level and number of measurements, 1000 random tests are taken to get the average number of miss-identified 0s and 1s. In Fig. \ref{Fig6}-\ref{Fig8}, each group shows a signal with a different level of noise. The parameters of the blurring kernel and the noise levels are given in the captions. The computational costs are also recorded. All computations are done in Matlab on a 2.8 GHz Intel CPU.

\begin{figure}[]
\begin{center}
\includegraphics[width=3.5in, height=2.5in]{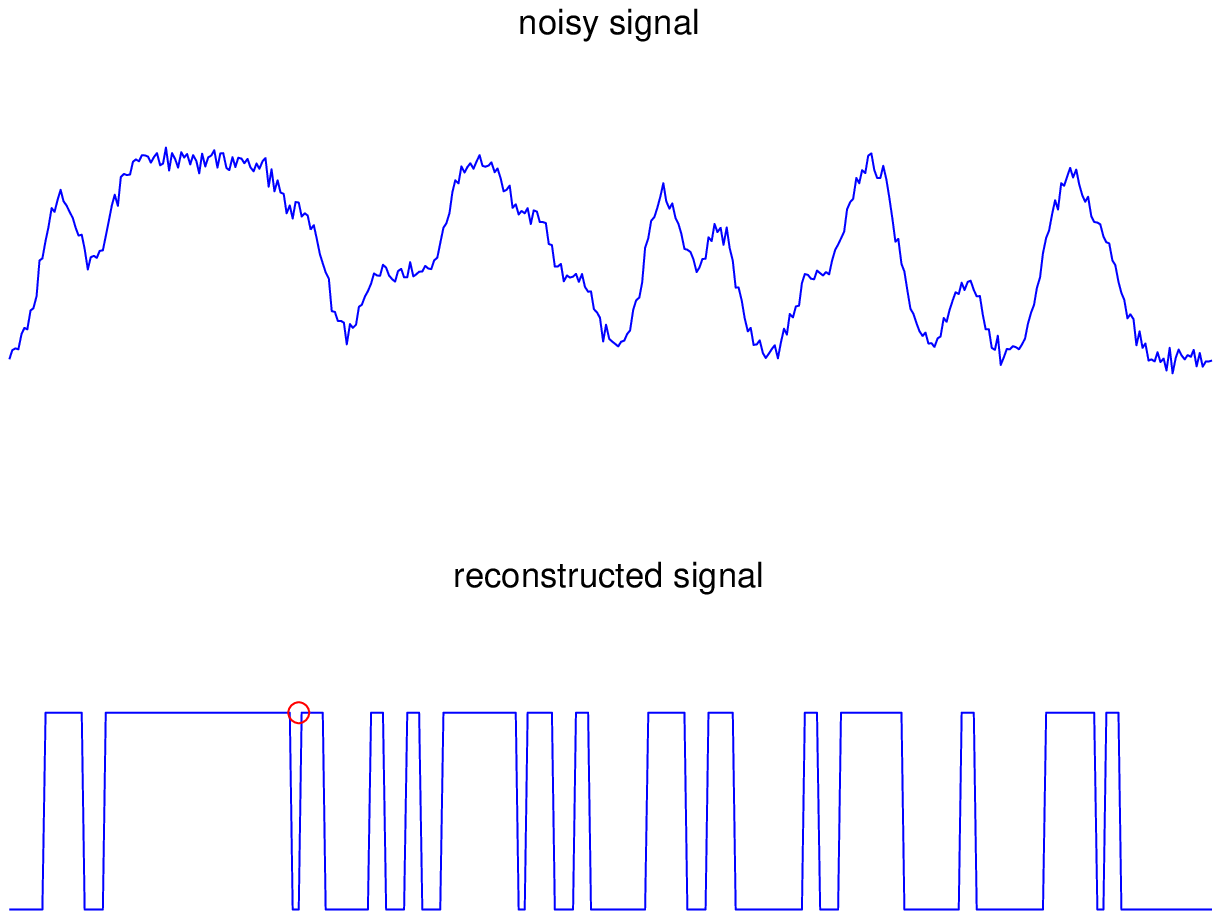}
\includegraphics[width=3.5in, height=2.5in]{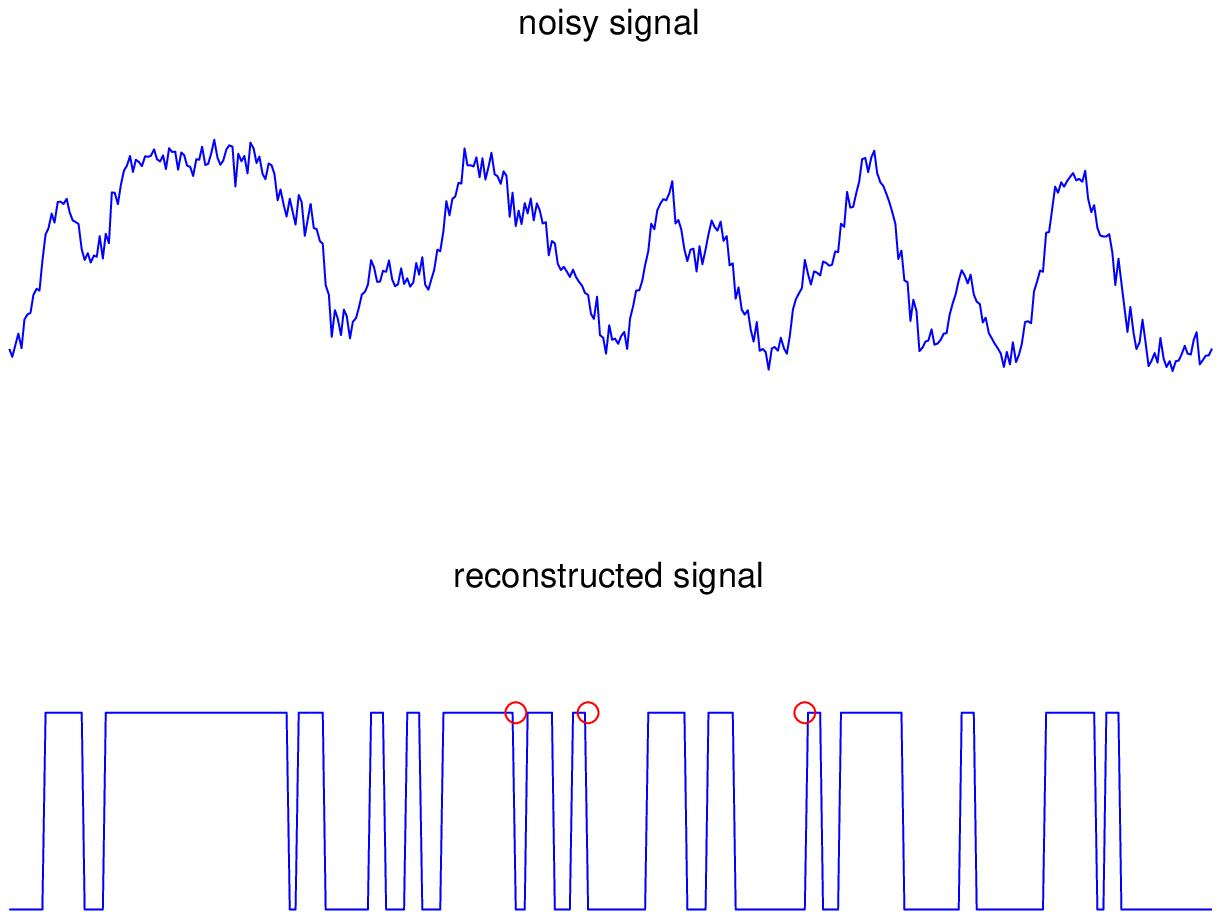}
\includegraphics[width=3.5in, height=2.5in]{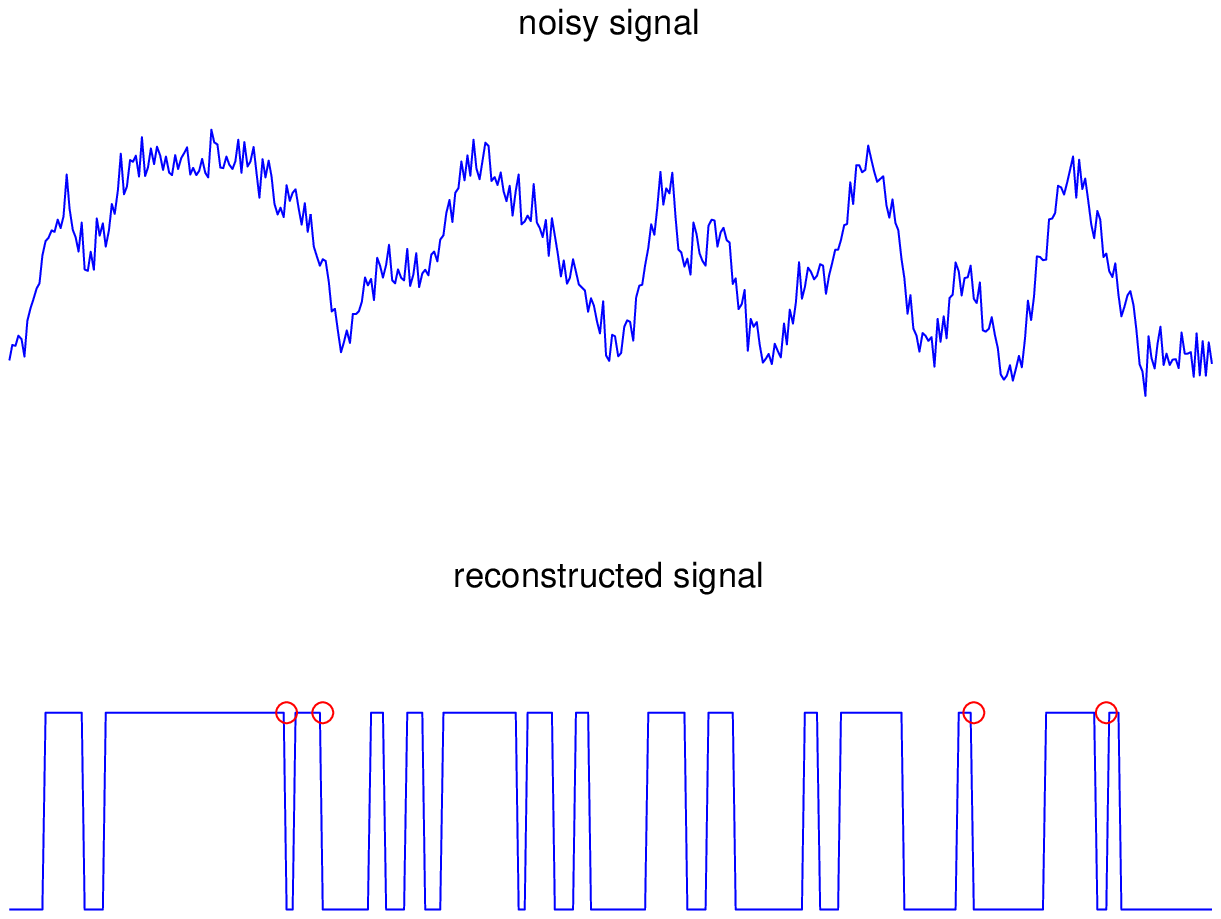}
\end{center}
\caption{The 1D signal is blurred with a Gaussian filter with $\sigma=5$ generated by Matlab command \texttt{fspecial}. The standard deviation of Gaussian noise added is respectively $0.03$, $0.05$, $0.07$, generated by Matlab command \texttt{randn}. The positions of the bars in the reconstruction are sometime not precisely equal to the original due to the present of the noise. The minor difference between the reconstruction and the true signal are highlighted by circles. The average computational time for reconstructions are respectively 0.05s, 0.04s, 0.03s.}\label{Fig4}
\end{figure}

\begin{figure}[]
\begin{center}
\includegraphics[width=3in, height=2in]{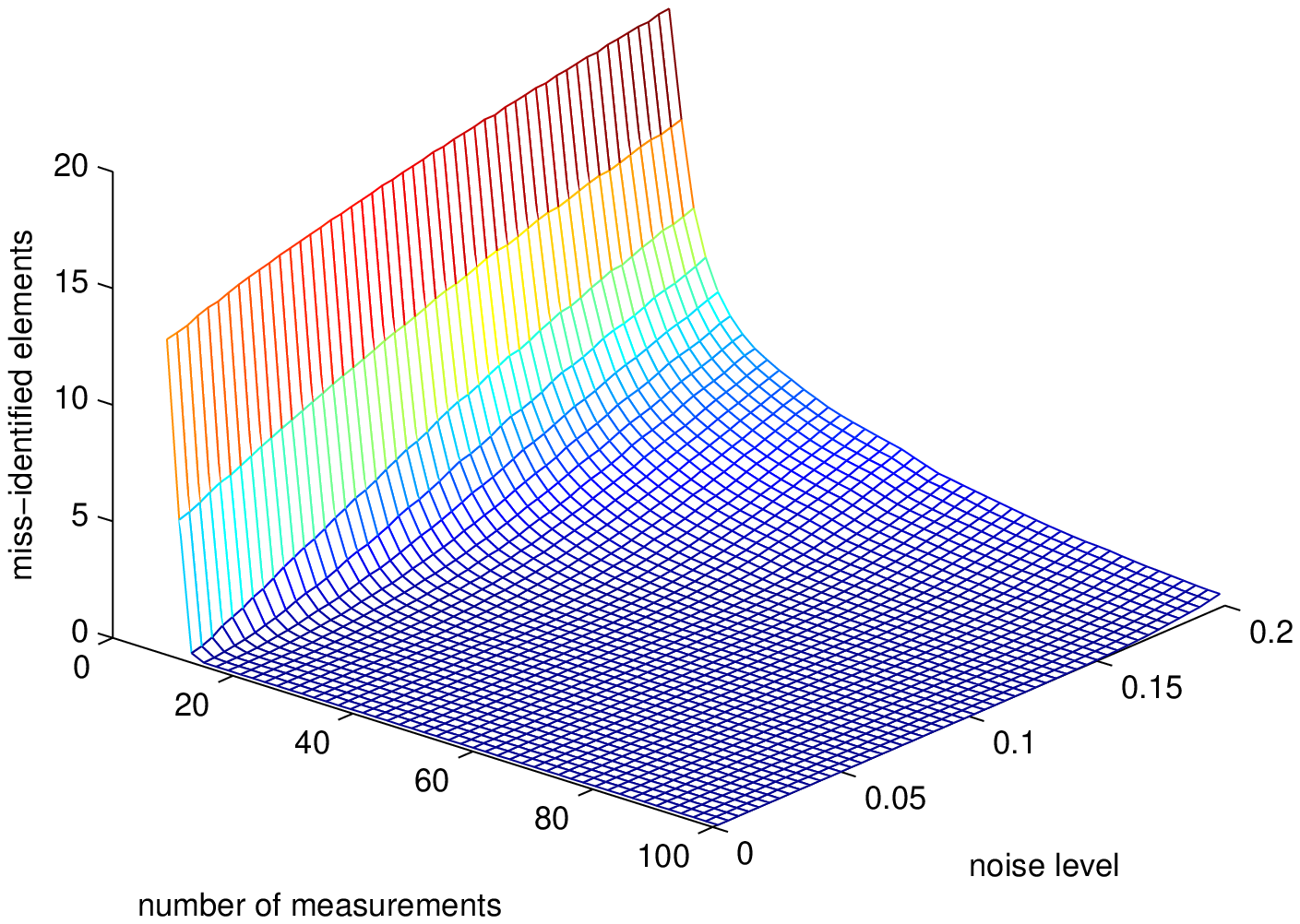}
\includegraphics[width=3in, height=2in]{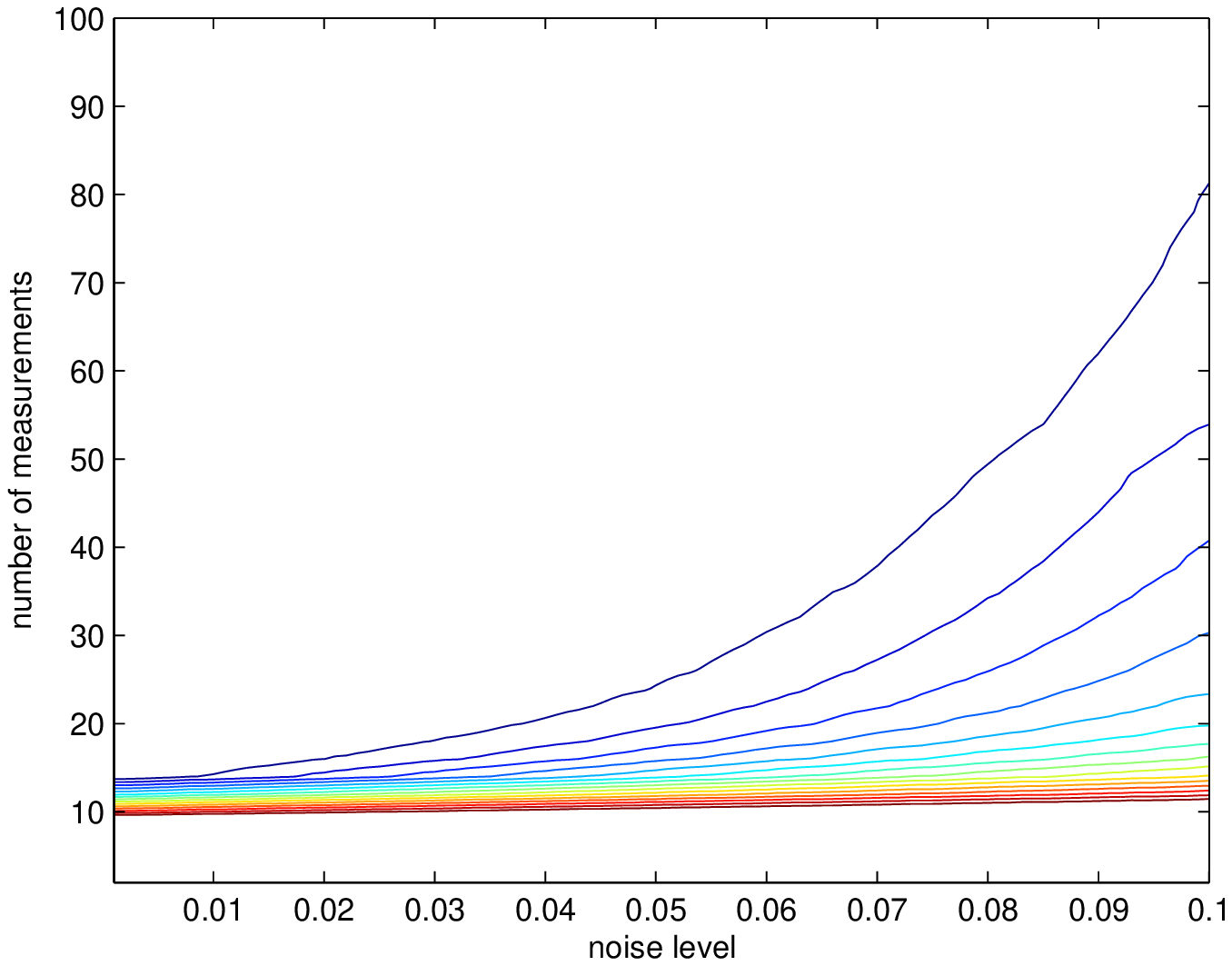}
\end{center}
\caption{The trade-off between number of measurements vs. the noise level obtained from empirical tests. The noise level is defined as the standard deviation of the gaussian noise. For each pair of fixed noise level and number of measurements, 1000 random tests are taken to get the average number of miss-identified 0s and 1s. In each test the signal length is 100 and the number of consecutive constant intervals is 10. The right figure shows the levelset curve of the left figure. Different curves correspond to different number of miss-identified 0s and 1s.}\label{Fig5}
\end{figure}

\begin{figure}[]
\begin{center}
\includegraphics[width=1in]{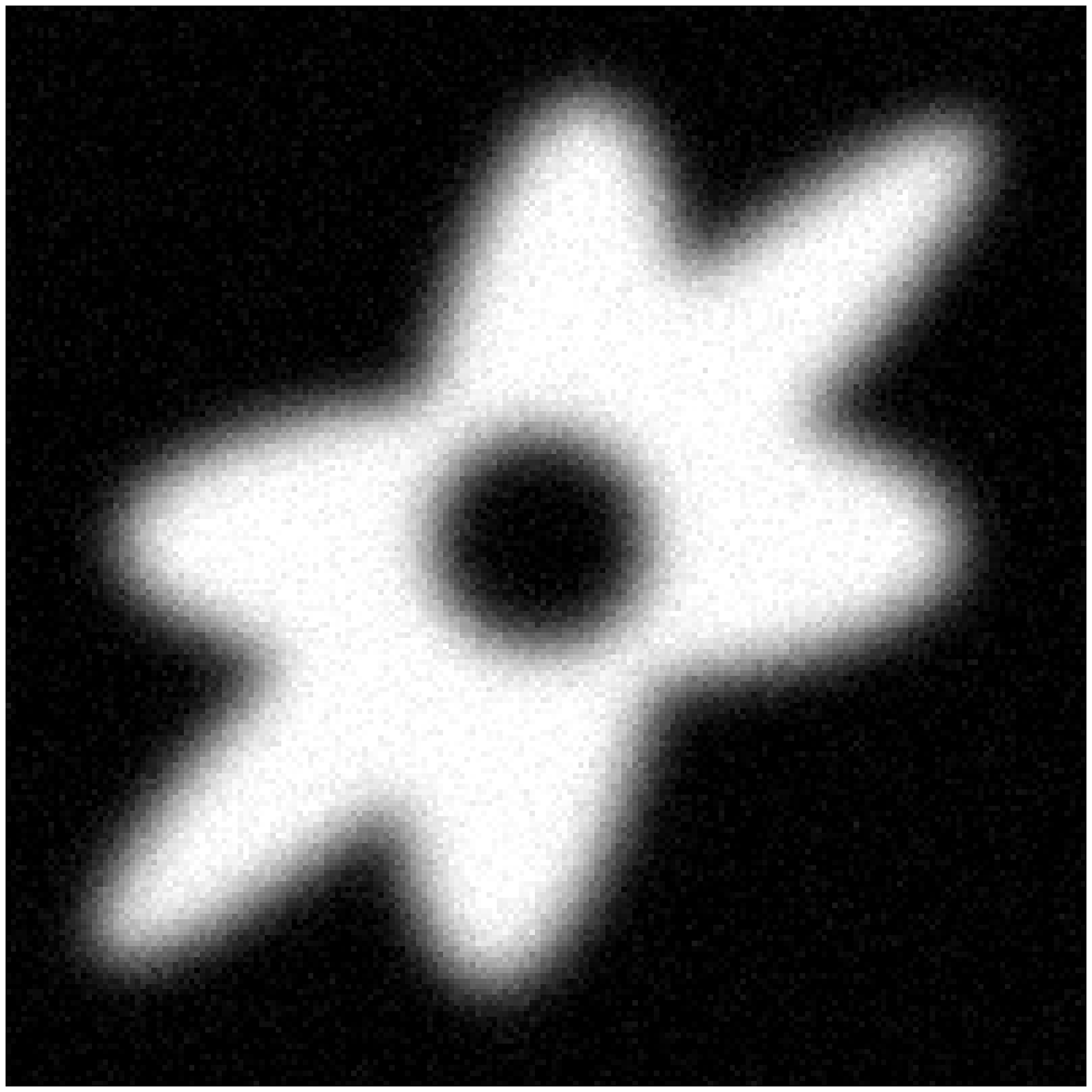}
\includegraphics[width=1in]{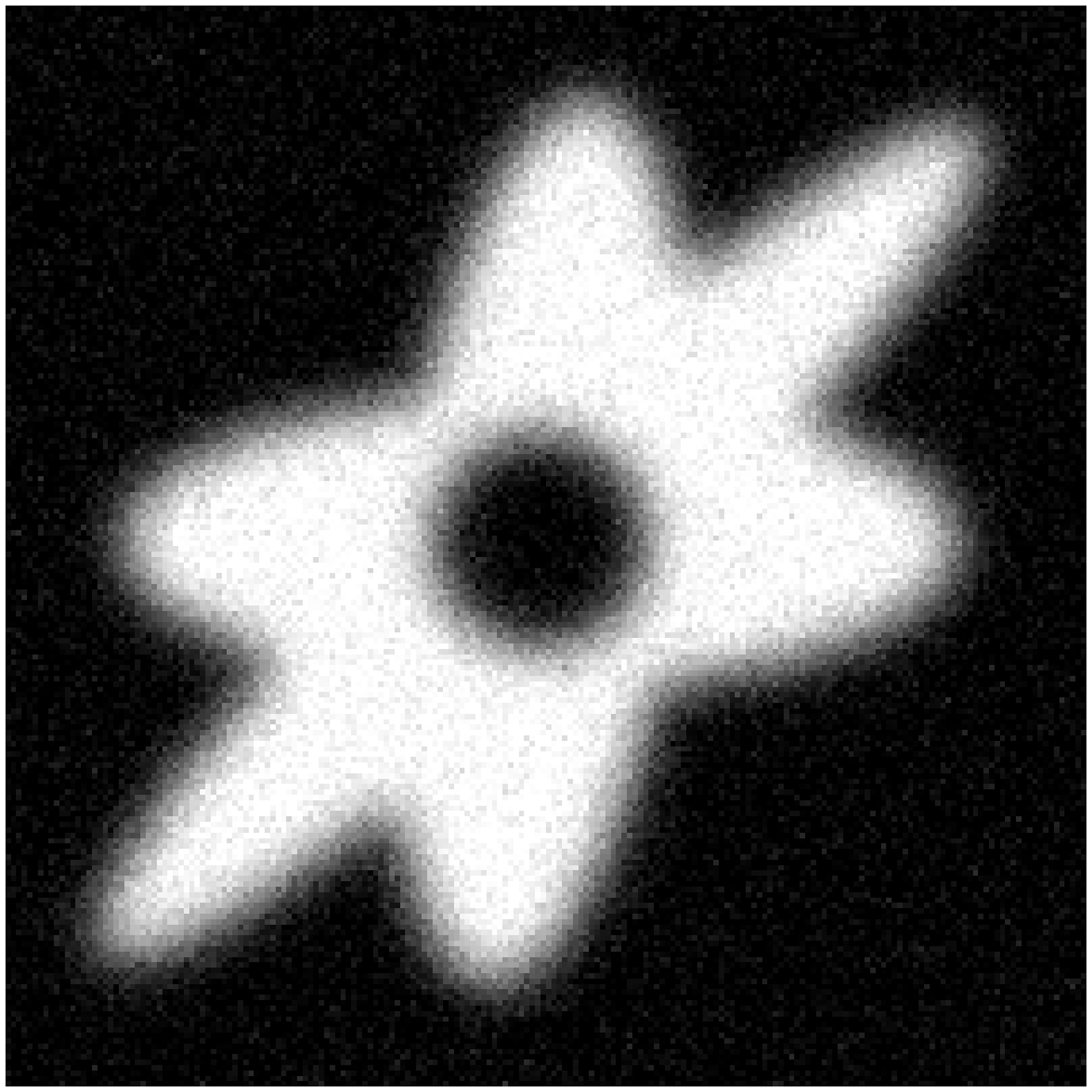}
\includegraphics[width=1in]{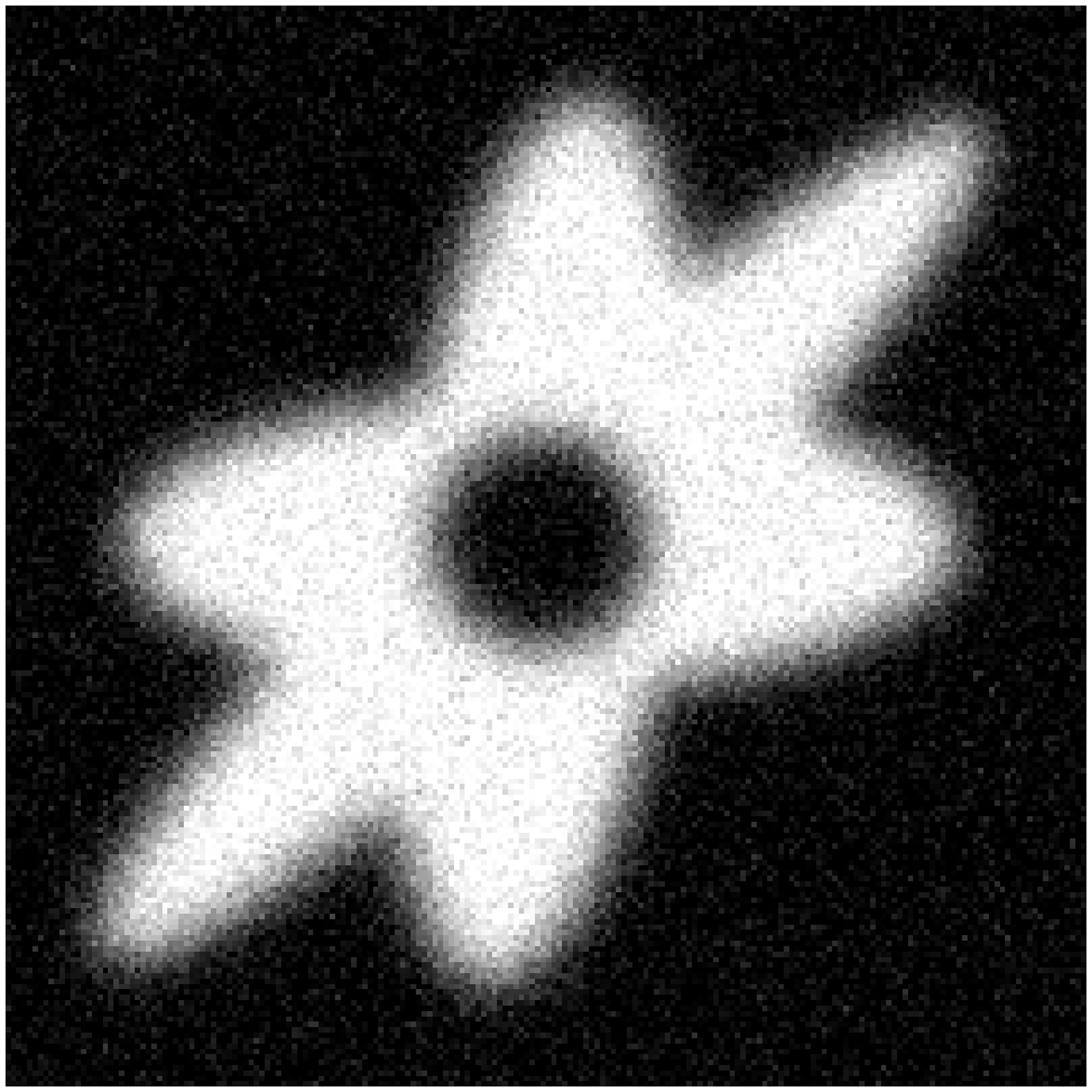}\\
\includegraphics[width=1in]{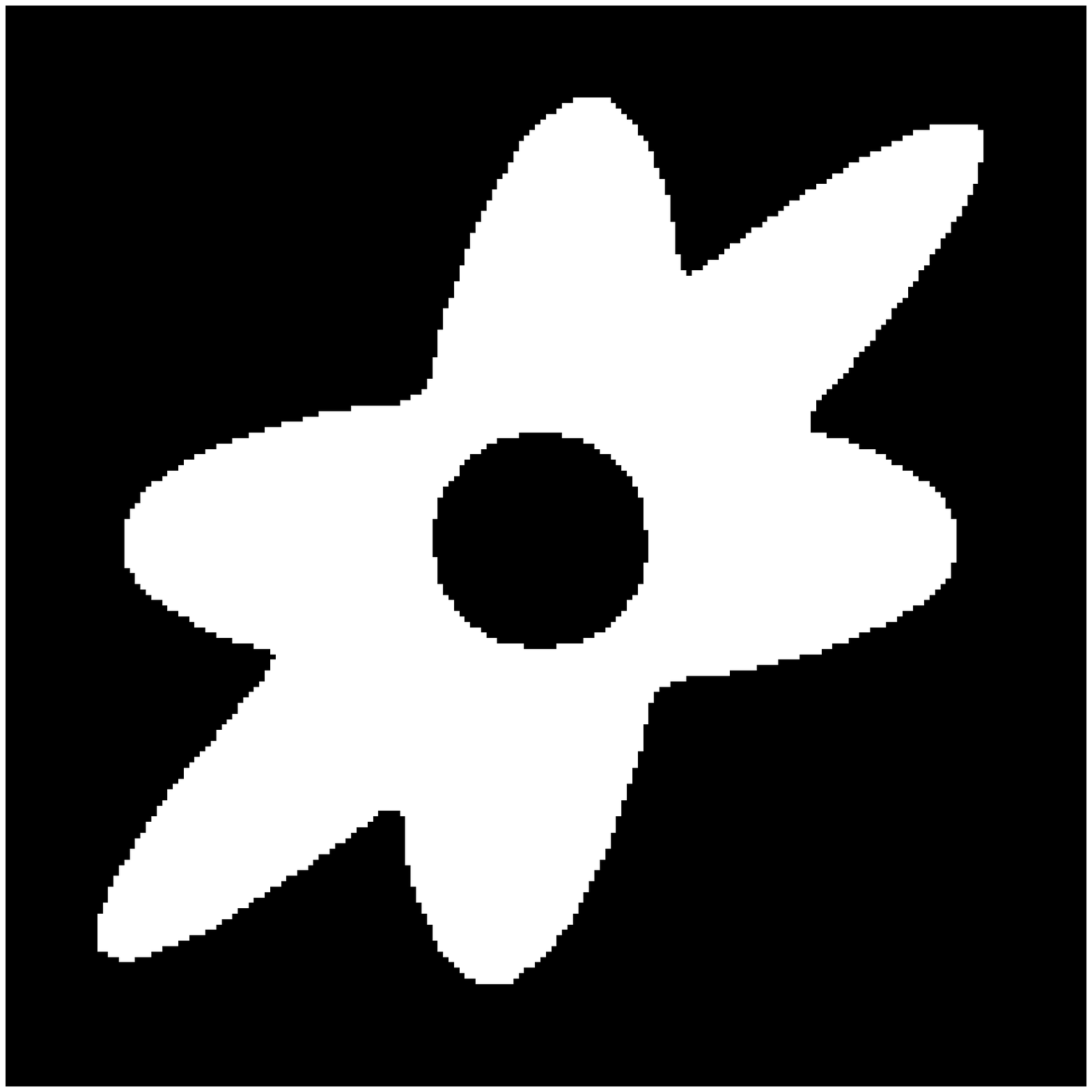}
\includegraphics[width=1in]{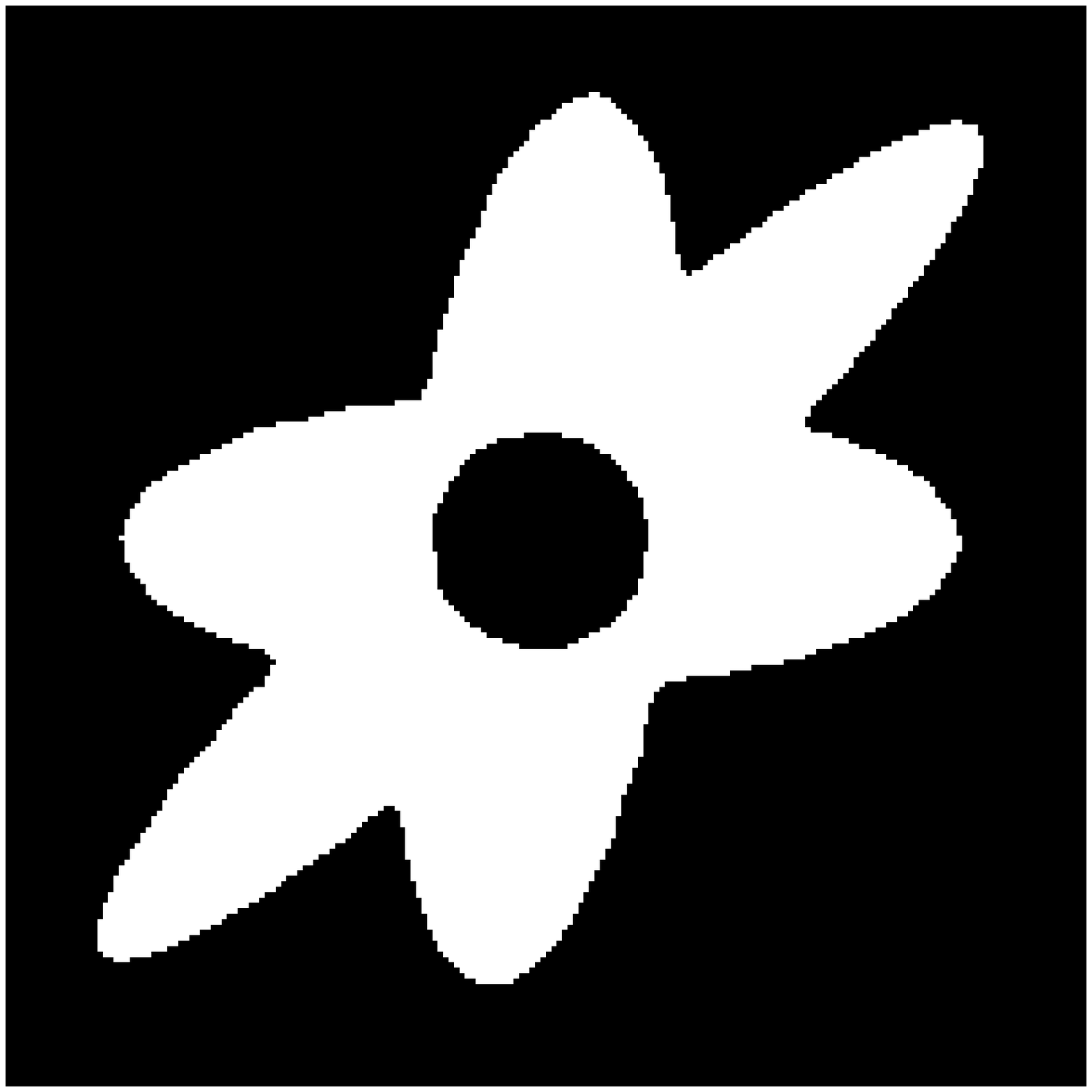}
\includegraphics[width=1in]{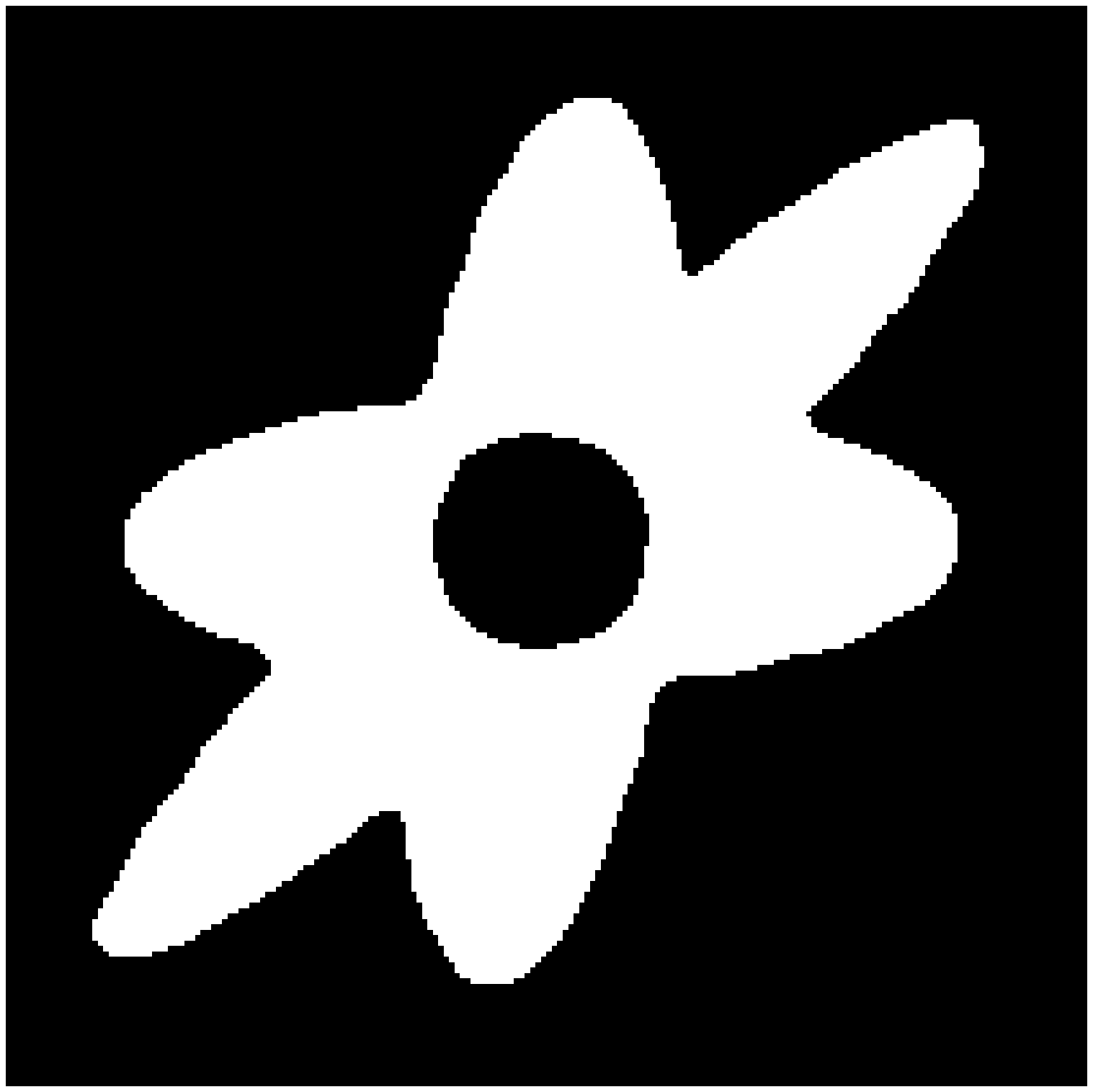}
\end{center}
\caption{The first image blurred with a Gaussian filter with $\sigma=5$,  generated by the Matlab command \texttt{fspecial}. The amplitude of Gaussian noise is respectively $0.03$, $0.05$, $0.07$, generated by the Matlab command \texttt{randn}. The average computational time for the reconstructions is respectively 0.81s, 0.81s, 0.80s.}\label{Fig6}
\end{figure}

\begin{figure}[]
\begin{center}
\includegraphics[width=1in]{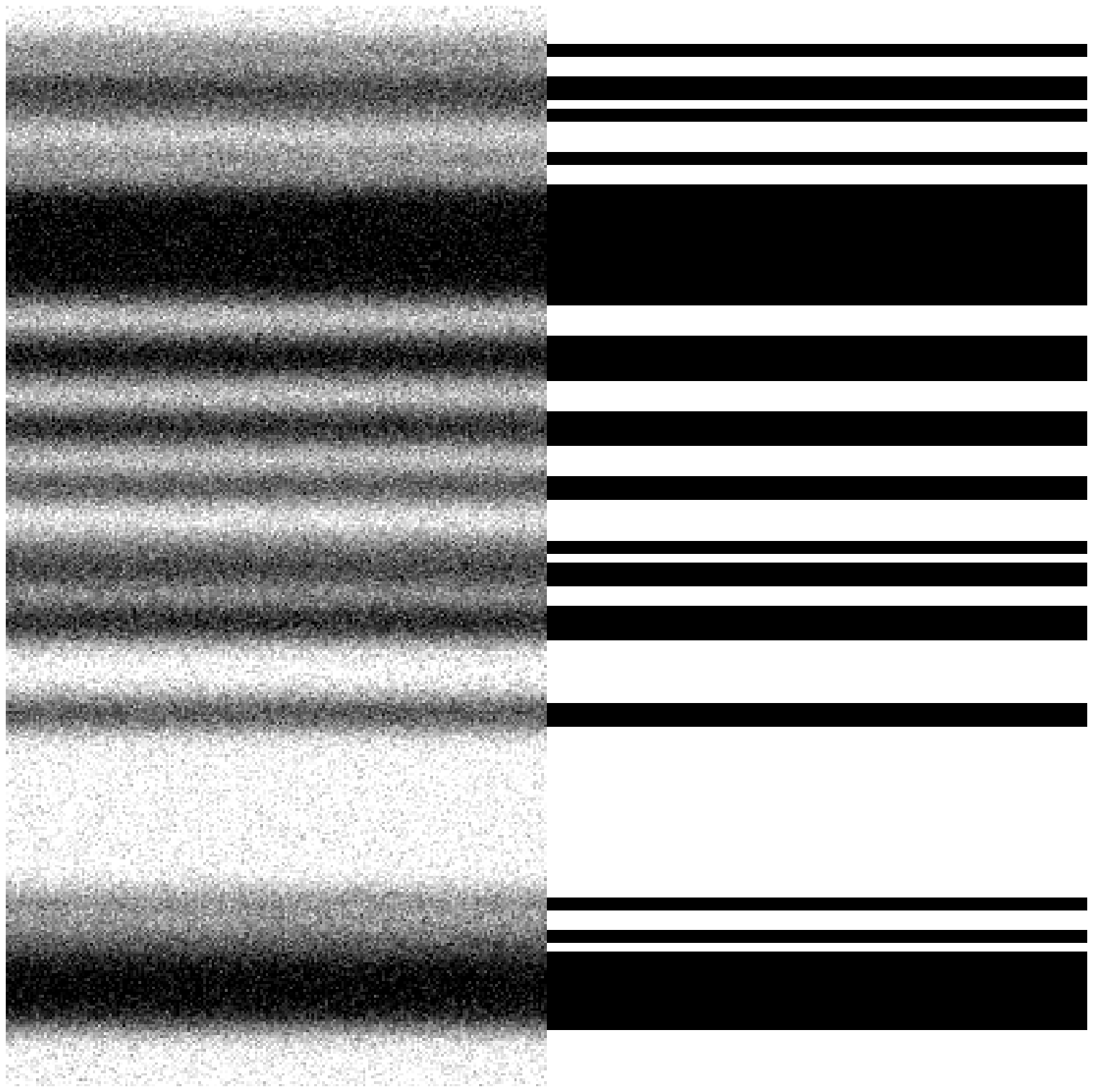}
\includegraphics[width=1in]{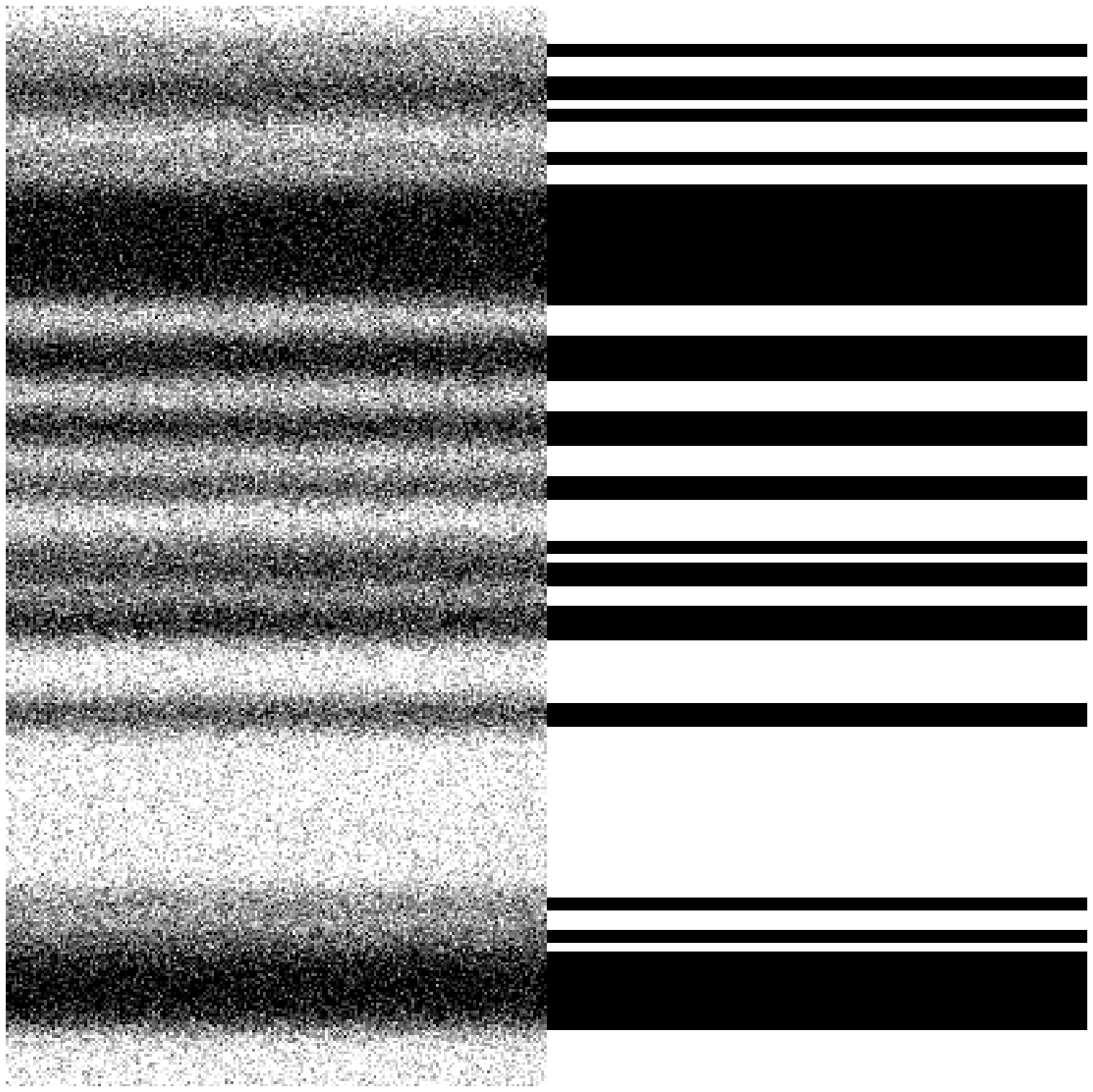}
\includegraphics[width=1in]{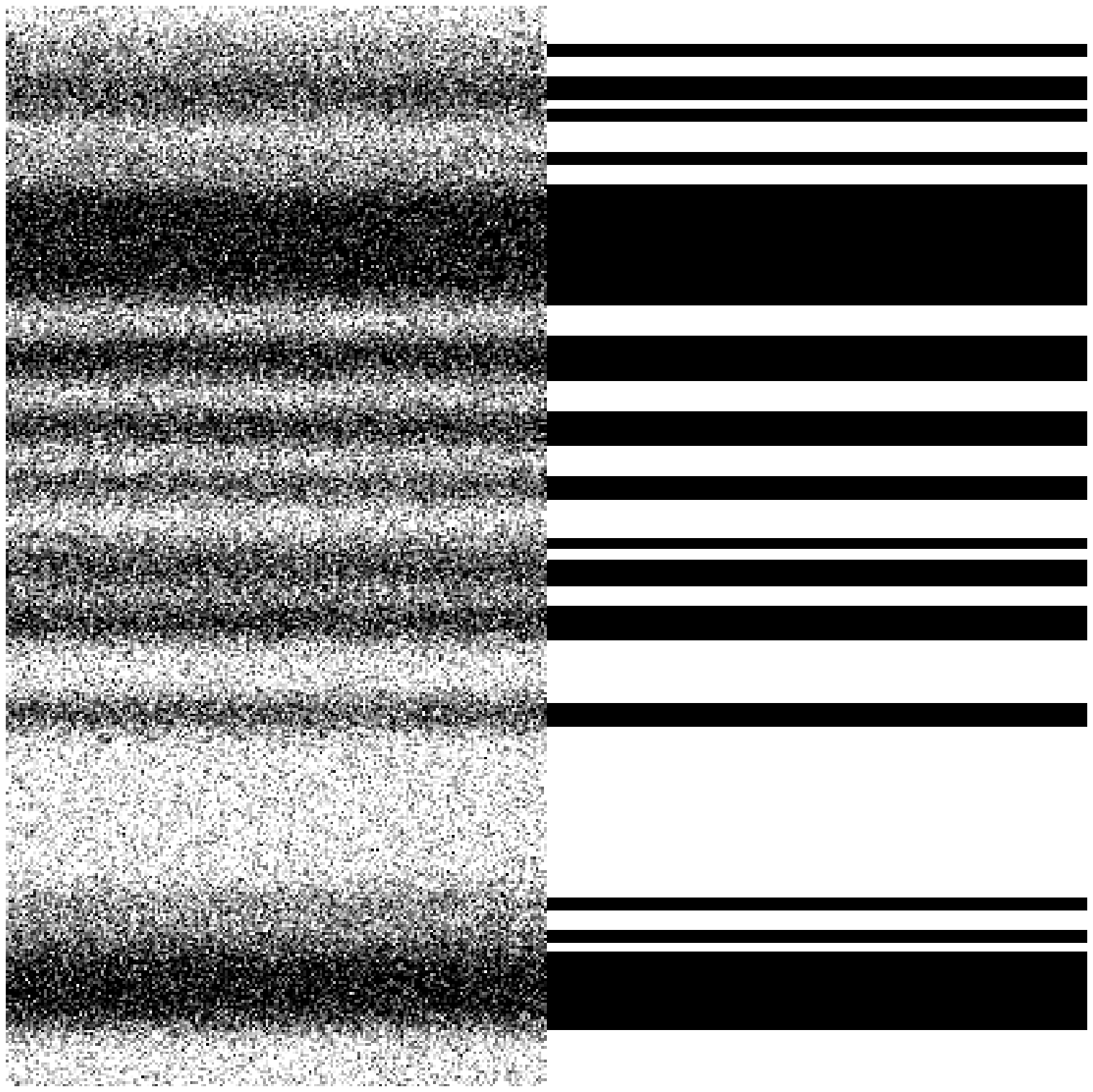}
\end{center}
\caption{The second image blurred with a Gaussian filter with $\sigma=5$, generated by the Matlab command \texttt{fspecial}. The amplitude of Gaussian noise is respectively $0.1$, $0.2$, $0.3$, generated by the Matlab command \texttt{randn}. The noisy image is on the left, the reconstruction is on the right. The average computational time for the reconstructions is respectively 0.04s, 0.03s, 0.03s.}\label{Fig7}
\end{figure}

\begin{figure}[]
\begin{center}
\includegraphics[width=1in]{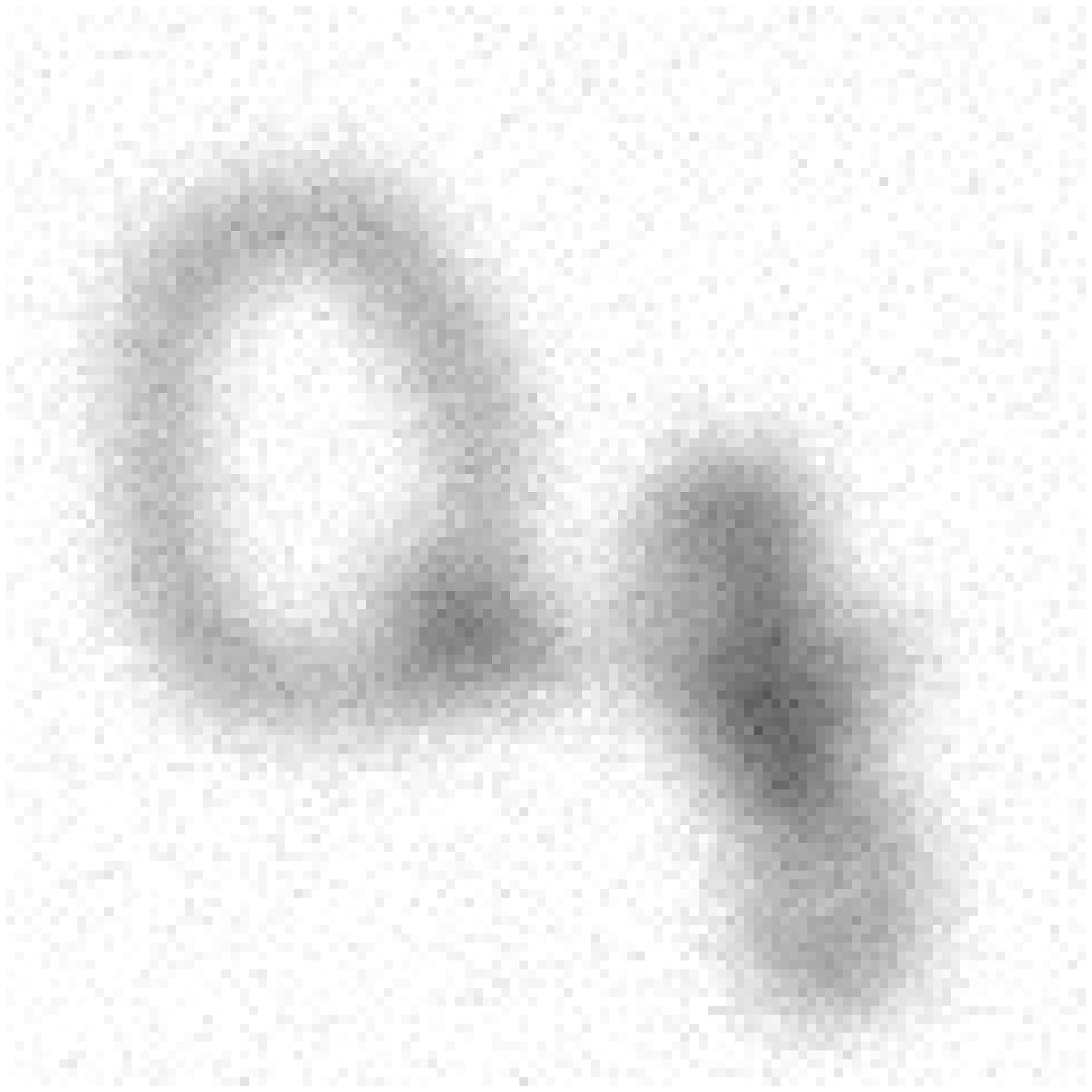}
\includegraphics[width=1in]{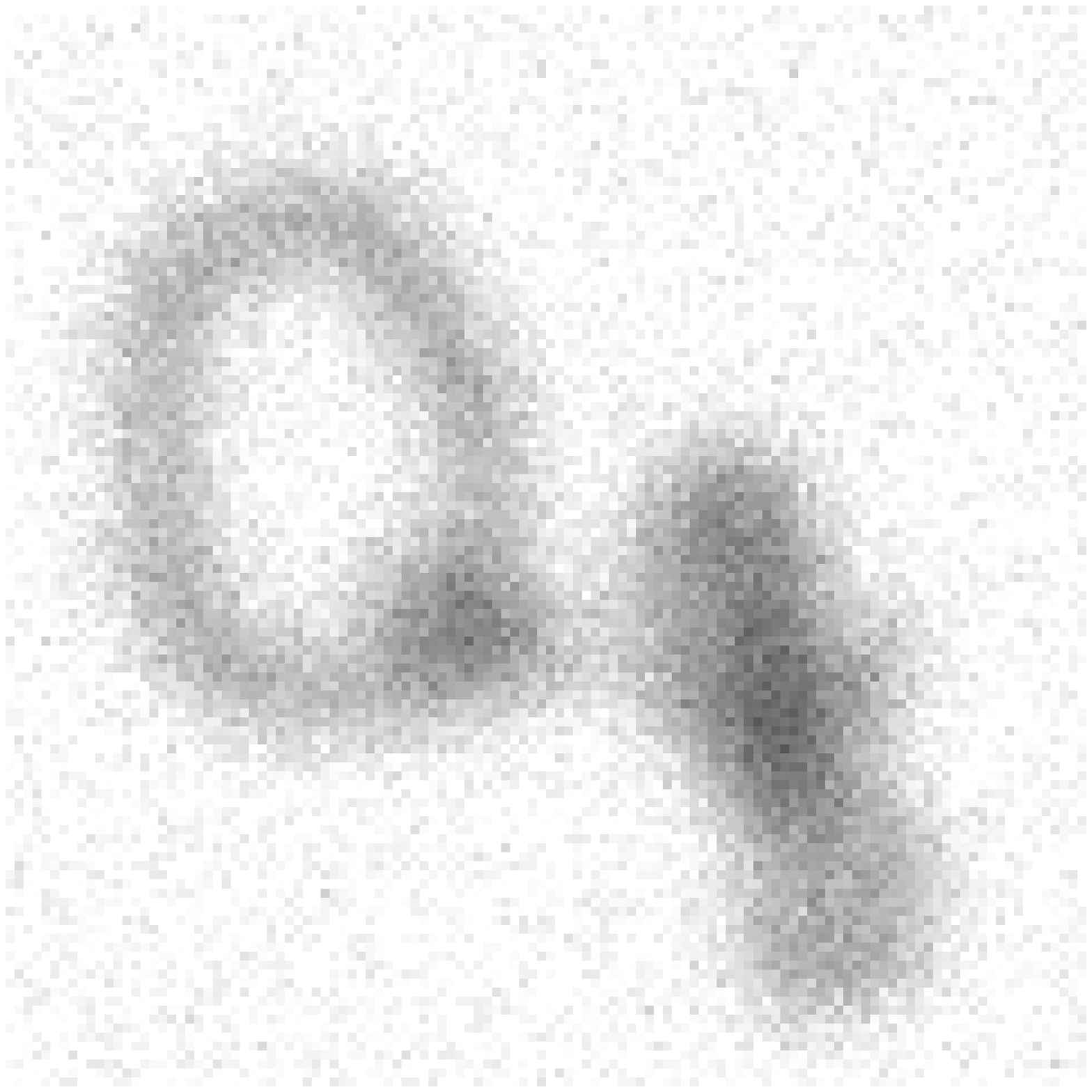}
\includegraphics[width=1in]{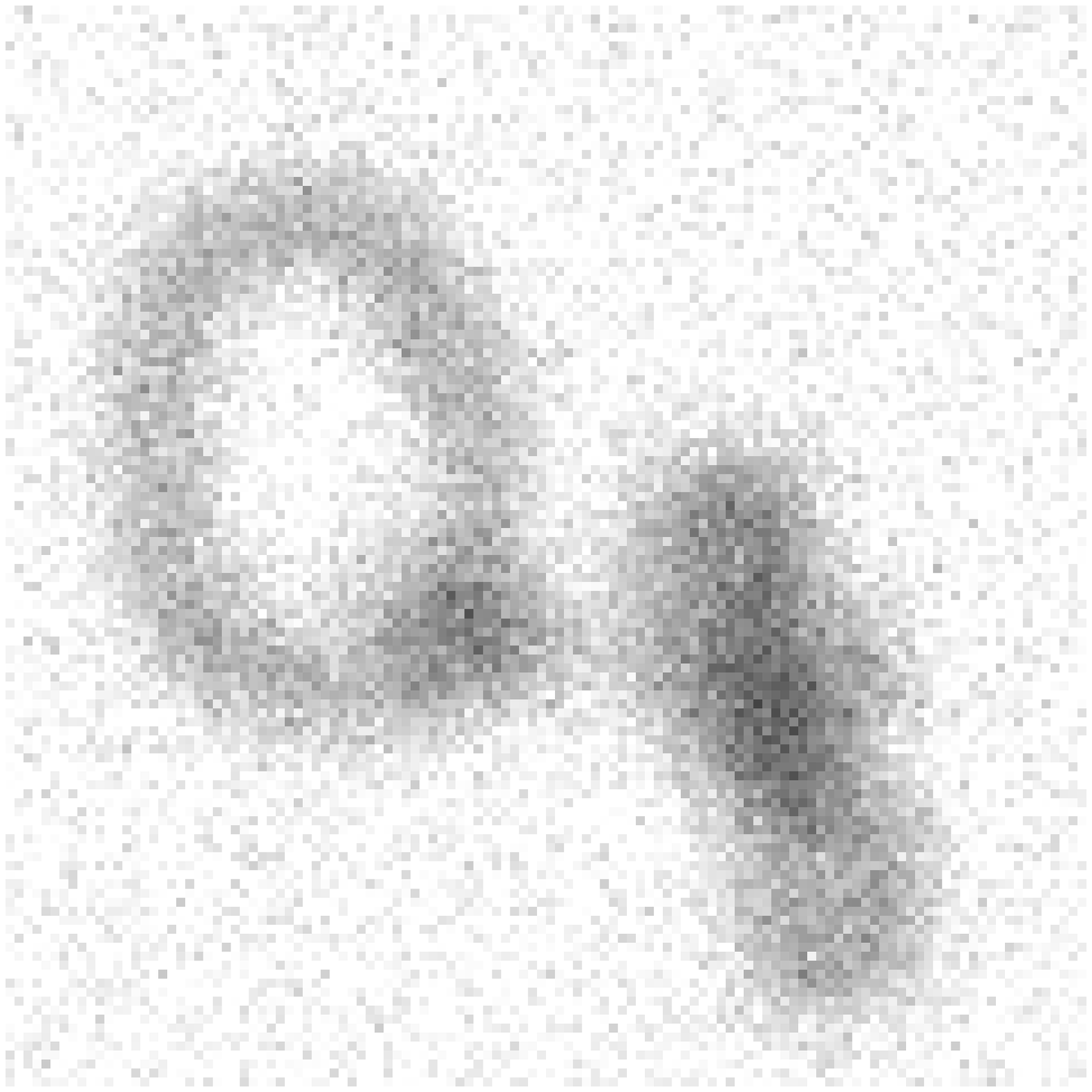}\\
\includegraphics[width=1in]{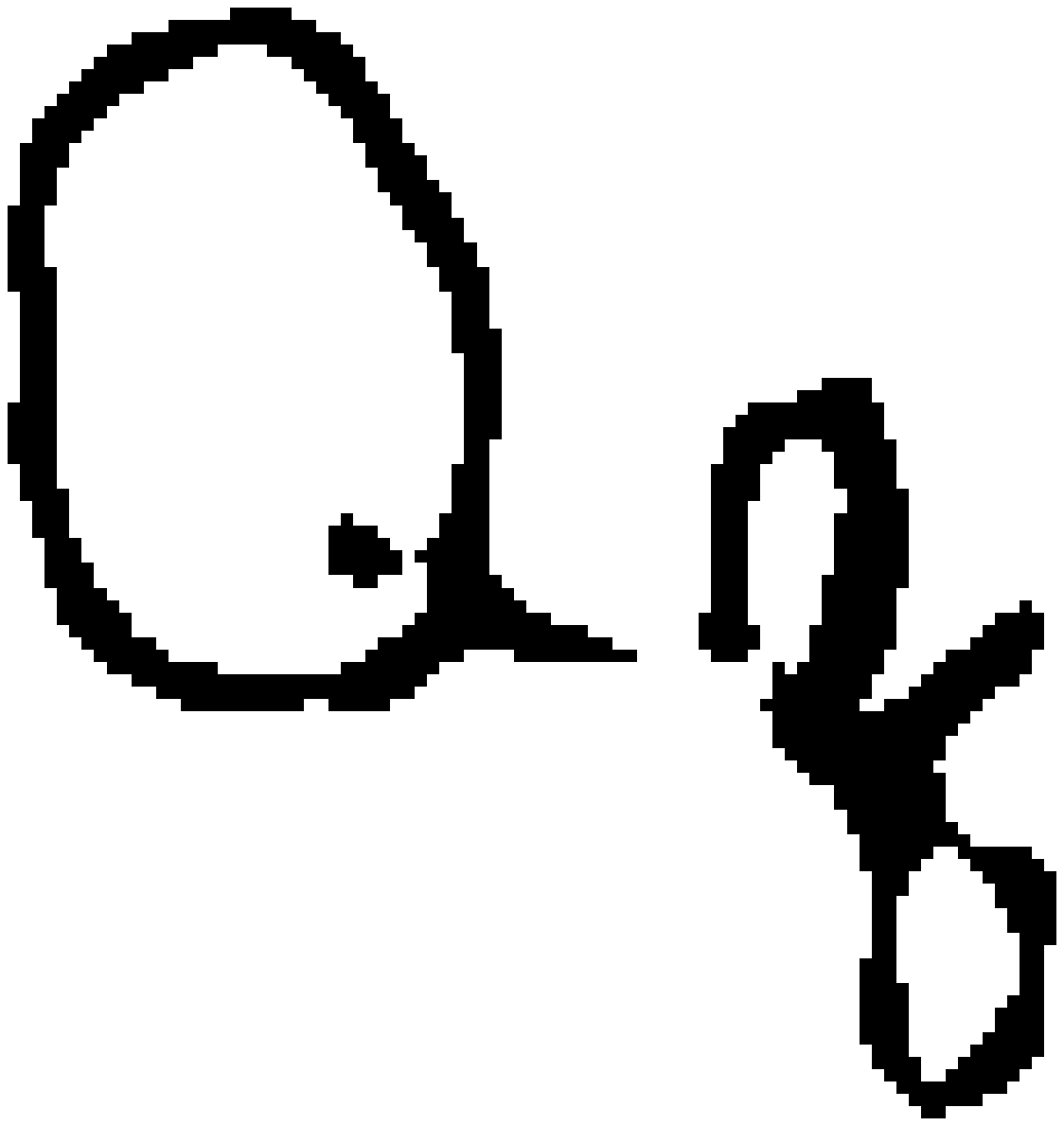}
\includegraphics[width=1in]{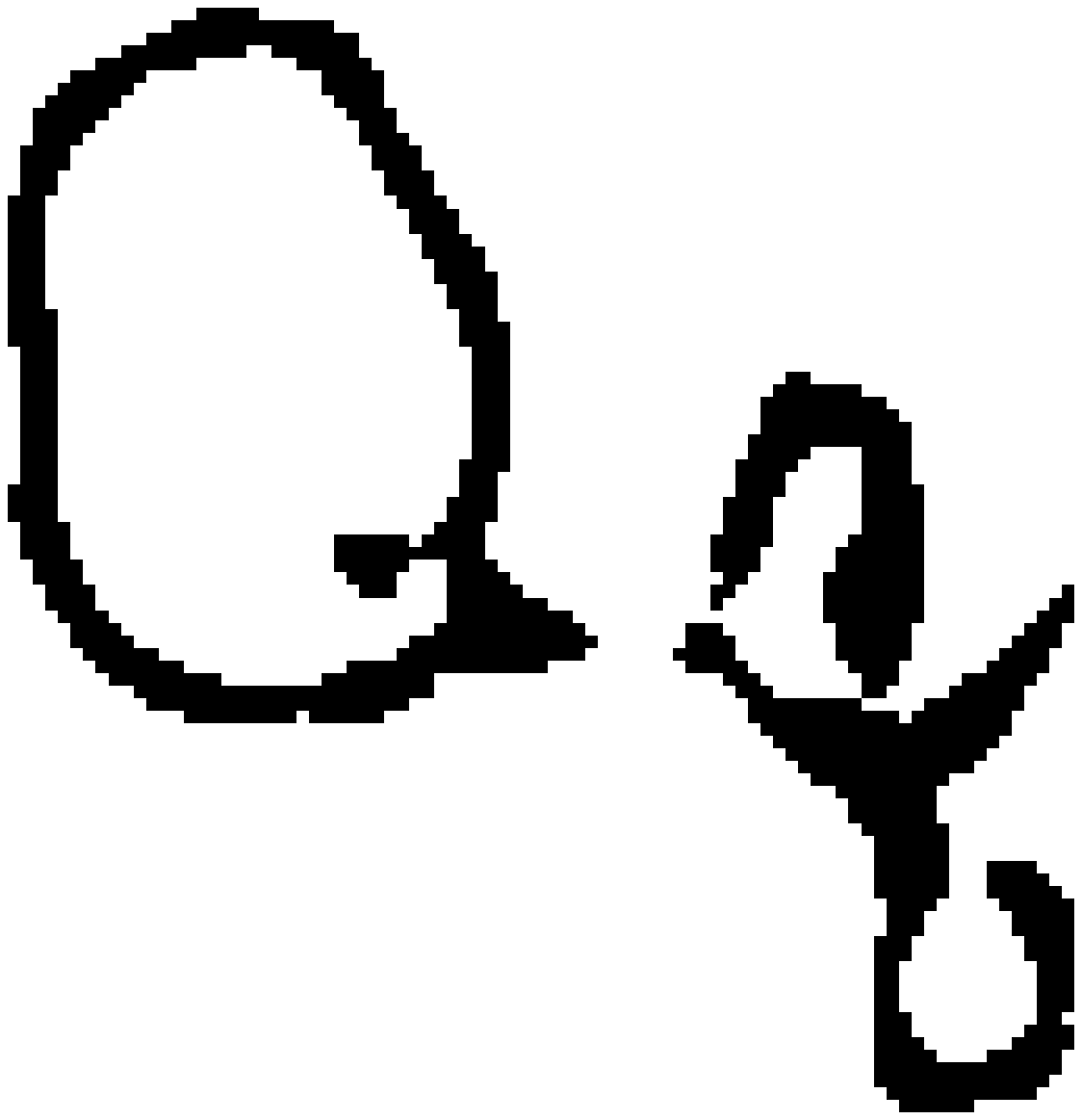}
\includegraphics[width=1in]{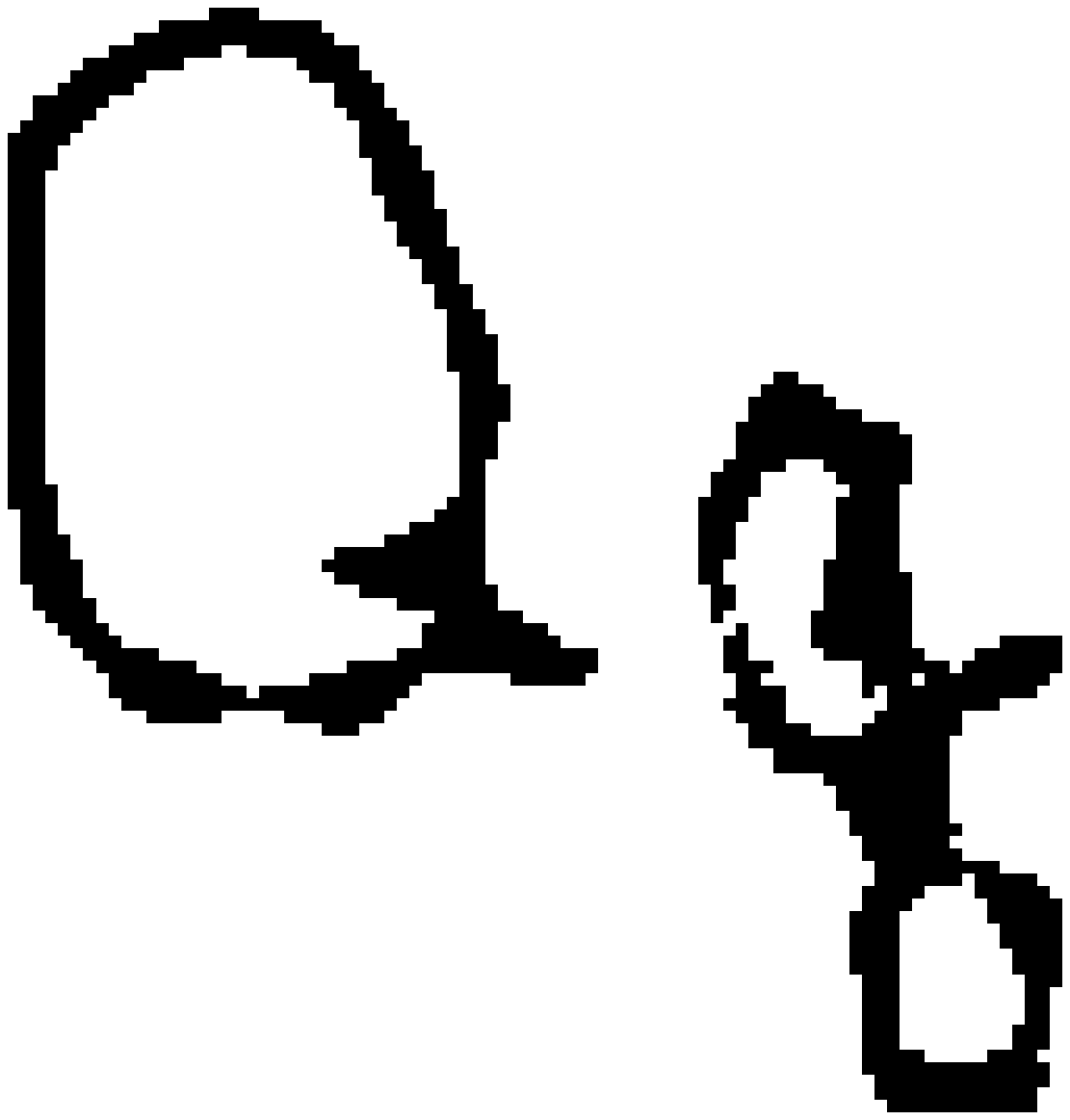}
\end{center}
\caption{The last image blurred with a Gaussian filter with $\sigma=5$, generated by the Matlab command \texttt{fspecial}. The amplitude of the Gaussian noise is respectively $0.03$, $0.05$, $0.07$, generated by the Matlab command \texttt{randn}. The average computational time for the reconstructions is respectively 1.54s, 1.20s, 0.98s.}\label{Fig8}
\end{figure}

\section{Conclusion}\label{conclusion}

The reconstruction of binary functions is difficult because of the nonconvex nature of the problem. In this work we proved that even with very few frequency measurements, binary functions can actually be reconstructed by solving a very simple convex problem. We also discussed a numerical implementation of a solver for this type of convex problem. 

There are several directions for further research. Some of them have been discussed in section \ref{discussion} and \ref{theory2D}. Other potential questions include: How can we investigate more properties of a given binary function by using fewer measurements? Is that possible to even characterize the motion and evolution of a shape by this method? 

\section{Acknowledgement}

The research is supported by the Institute for Mathematics and Its Applications in University of Minnesota. The author wishes to thank Prof. Fadil Santosa from University of Minnesota, Prof. Selim Esedoglu from University of Michigan, Prof. Stanley Osher from UCLA, Prof. Bin Dong from University of Arizona, Prof. Arthur Szlam from the City College of New York and Dr. Jianfeng Lu from the Courant Institute for helpful discussions.

\appendix

\section{Proofs}

\begin{theorem*}\textbf{\emph{\ref{theorem1}.}}
Assume $u_0$ is a binary solution of $Au_0=b$. There exists no nonzero $v\in\{Av=0\}$ such that 
\[
\begin{cases}v(x)\leq0, &\text{when}\quad u_0(x)=1\\
v(x)\geq0, &\text{when}\quad u_0(x)=0
\end{cases}
\]
if and only if $u_0$ is the unique solution of $(P_1)$, i.e. solving $(P_1)$ recovers $u_0$.
\end{theorem*}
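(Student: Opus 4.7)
The theorem is a direct translation between the feasibility conditions of $(P_1)$ and the geometry of the kernel of $A$ relative to the orthant $\mathbb{O}_{u_0}$. My plan is to prove both directions by taking the perturbation $v = u - u_0$ between two feasible points, showing the sign conditions in \eqref{orthant} are precisely what it means for $u = u_0 + v$ to lie in $[0,1]^N$.

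For the \emph{only if} direction (uniqueness $\Rightarrow$ no such $v$), I would argue by contrapositive. Suppose a nonzero $v \in \ker A$ satisfies the sign conditions in \eqref{orthant}. Since $v$ lies in a bounded orthant-shape around the origin and $u_0 \in \{0,1\}^N$, I can choose $\epsilon > 0$ small enough that $\epsilon \|v\|_\infty \leq 1$. Setting $u := u_0 + \epsilon v$, the sign conditions guarantee $u(x) \leq 1$ wherever $u_0(x) = 1$ and $u(x) \geq 0$ wherever $u_0(x) = 0$, while the magnitude bound on $\epsilon v$ takes care of the opposite inequalities. Thus $u \in [0,1]^N$, and $Au = Au_0 + \epsilon A v = b$, so $u$ is a feasible solution of $(P_1)$ distinct from $u_0$, contradicting uniqueness.

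For the \emph{if} direction, suppose $u_0$ is \emph{not} the unique solution of $(P_1)$; then there exists $u \neq u_0$ with $Au = b$ and $0 \leq u \leq 1$. Define $v := u - u_0$. Then $v \neq 0$ and $Av = Au - Au_0 = 0$. Wherever $u_0(x) = 1$, the constraint $u(x) \leq 1$ gives $v(x) \leq 0$; wherever $u_0(x) = 0$, the constraint $u(x) \geq 0$ gives $v(x) \geq 0$. Hence $v$ satisfies \eqref{orthant}, contradicting the nonexistence assumption.

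There is no real obstacle here — the argument is essentially a one-line identification of the tangent cone of $[0,1]^N$ at the binary point $u_0$ with the orthant $\mathbb{O}_{u_0}$. The only care needed is to rescale $v$ in the first direction so that the perturbed point remains in the hypercube, which uses that $u_0$ takes values in the extreme set $\{0,1\}$ so that the sign conditions \eqref{orthant} are indeed the active constraints at $u_0$.
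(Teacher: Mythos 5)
Your proof is correct and follows essentially the same route as the paper: one direction is the identification $v = u - u_0$ with the box constraints forcing the sign pattern, and the other is the perturbation $u_0 + \epsilon v$. The only difference is that you spell out the $\epsilon$-rescaling needed to keep $u_0 + \epsilon v$ inside $[0,1]^N$, a detail the paper dismisses with ``the other direction is similar.''
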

\begin{proof} If $(P_1)$ has another solution other than $u_0$, denoted as $u'$, then let $v=u'-u_0$, we have $Av=Au_0-Au'=0$. Moreover, since $u'(x)\in[0,1], \forall x$, then when $u_0(x)=0$, $v(x)=u'(x)-u_0(x)\geq0$; when $u_0(x)=1$, $v(x)=u'(x)-u_0(x)\leq0$, which contradicts the given condition on $v$. The other direction is similar. 
\end{proof}

\begin{lemma*}\textbf{\emph{\ref{trigonointerp}.}}
Let $\T=[0,1]$ where $0$ and $1$ are identified, i.e. $\T\cong S^1=\{z:|z|=1\}$. Given $2n$ points on $\T$ who define $2n$ intervals on $\T$, there exists a real trigonometric polynomial, whose spectrum is limited in $[-n,n]$, vanishing only at those points and changes signs alternatively on those intervals.
\end{lemma*}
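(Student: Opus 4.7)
The plan is to construct the required trigonometric polynomial explicitly as a product of sine factors, one for each prescribed zero, and then verify that this product has the right spectral support, vanishes exactly where required, and changes sign across each zero.

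Let the $2n$ prescribed points on $\T = [0,1]$ be $\theta_1 < \theta_2 < \cdots < \theta_{2n}$, and define
\[
p(\theta) = \prod_{j=1}^{2n} \sin\bigl(\pi(\theta - \theta_j)\bigr).
\]
The function $p$ is manifestly real-valued on $\R$ (it is a product of real-valued functions) and $1$-periodic. Its zero set on $\T$ is exactly $\{\theta_1,\ldots,\theta_{2n}\}$, because $\sin(\pi(\theta-\theta_j))$ has a simple zero at $\theta = \theta_j \bmod 1$ and no other zero on $\T$. Moreover, at each $\theta_j$ exactly one factor changes sign simply while the others remain nonzero and continuous, so $p$ flips sign across every $\theta_j$; consequently it is sign-alternating on the $2n$ arcs they define.

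The key point I need to verify is that $p$ is a trigonometric polynomial with integer spectrum contained in $[-n,n]$. Writing each factor as
\[
\sin\bigl(\pi(\theta - \theta_j)\bigr) \;=\; \frac{e^{-\pi i \theta_j}}{2i}\,e^{\pi i \theta} \;-\; \frac{e^{\pi i \theta_j}}{2i}\,e^{-\pi i \theta},
\]
we see that each factor's ``spectrum'' is supported at the half-integer frequencies $\{-\tfrac12,+\tfrac12\}$. Multiplying $2n$ such factors, the Fourier support of $p$ lies in the sumset
\[
\underbrace{\{-\tfrac12,\tfrac12\}+\cdots+\{-\tfrac12,\tfrac12\}}_{2n\text{ copies}} \;=\; \{-n,-n+1,\ldots,n-1,n\},
\]
which is exactly the set of integer frequencies in $[-n,n]$. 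Thus $p$ is a real trigonometric polynomial with spectrum in $[-n,n]$ having all the required properties.

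The only point requiring care is the spectral bookkeeping; everything else (reality, zero set, sign change) is immediate from the product structure. One could alternatively derive the same polynomial by factoring $q(z)=\prod_{j=1}^{2n}(z-e^{2\pi i\theta_j})$ on the unit circle and multiplying by $e^{-2\pi i n \theta}$ together with a suitable unimodular constant to eliminate the phase, but the sine-product form has the advantage of making reality and sign alternation transparent with no further computation.
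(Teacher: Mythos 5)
Your proof is correct and is essentially the paper's construction in a different guise: the paper's normalized factorization $C z^{-n}\prod_{k=1}^{2n}(z-c_k)$ restricted to the unit circle equals $(-4)^n\prod_{j=1}^{2n}\sin\bigl(\pi(\theta-\theta_j)\bigr)$, so your sine product is the same trigonometric polynomial up to a nonzero constant. Your half-integer frequency bookkeeping simply replaces the paper's conjugation argument for reality and its expansion $\sum_{k=-n}^{n}a_k z^k$, so the substance of the two proofs is identical.
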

\begin{proof} This conclusion is natural in the context of trigonometric interpolation. We denote the $2n$ points by
\[\{c_k=e^{2\pi i\alpha_k}\}_{k=1}^{2n}\subset S^1=\{z:|z|=1\}.
\] 
Let 
\[
u(z)=\frac{C}{z^n}\prod_{k=1}^{2n}(z-c_k)
\]
where $C=\prod_{k=1}^{2n}c_k^{-1/2}$. 
Then $u(z)$ can be written as 
\[u(z)=\sum_{k=-n}^n a_kz^k,
\] i.e. $u$ is a trigonometric polynomial with spectrum limited in $[-n,n]$. Since $\{c_i\}\subset S^1$, $C\in S^1$, for $z\in S^1$ we have 
\[
\overline{u(z)}=\frac{\bar C} {\bar z^n}\prod_{k=1}^{2n}(\bar z-\bar c_k)=\frac{C}{z^n}\prod_{k=1}^{2n}(z-c_k)=u(z)
\]
therefore $u(z)$ is real on $S^1$. If we treat $u(z)=u(e^{2\pi it})$ as a function defined on $[0,1]$, it is easy to check that $u$ vanishes on and only on $\{\alpha_k\}$ and $\frac{d}{dt}u(e^{2\pi it})$ does not vanish on $\{\alpha_k\}$, and the conclusion follows.
\end{proof}

\begin{theorem*}\textbf{\emph{\ref{lowfreq1D}.}}
If $u_0(x)$ is a 1-D binary signal that can be represented as in \eqref{blockwise} with $2d$ consecutive intervals of ones and zeros, then by knowing the Fourier coefficients $\{a_k\}$ for $|k|\leq d$, we can recover $u_0$ through the convex problem $(P_1)$. (Notice that $u_0(x)\in\R, \forall x$ implies $a_k=\overline{a_{-k}}, \forall k$, so essentially we only need to know $\{a_k\}$ for $0\leq k\leq d$.) This result is optimal, i.e. precise reconstruction via solving $(P_1)$ is impossible if knowing even less.
\end{theorem*}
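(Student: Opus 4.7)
By Theorem \ref{theorem1p}, the sufficiency half of the statement reduces to producing, for each binary $u_0$ with $2d$ alternating intervals, a vector $v=A^\top\eta$ such that $v(x)<0$ wherever $u_0(x)=1$ and $v(x)>0$ wherever $u_0(x)=0$. Since $A=S\F$ selects the frequencies $\{|k|\leq d\}$, the image of $A^\top$ consists exactly of the discrete signals whose DFT is supported in $\{|k|\leq d\}$; equivalently, such a $v$ is the restriction to the grid $\{x/N\}\subset\T$ of a real trigonometric polynomial on $\T$ of degree at most $d$.

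Identify $\{1,\ldots,N\}$ with $\{x/N:x=1,\ldots,N\}\subset\T$; then the $2d$ jumps of $u_0$ live at half-integer sites $\{(\ell_j+\tfrac{1}{2})/N\}_{j=1}^{2d}$ and cut $\T$ into $2d$ arcs on which $u_0$ takes the values $0$ and $1$ in alternation. Apply Lemma \ref{trigonointerp} with $n=d$ to these $2d$ points to obtain a real trigonometric polynomial $p$ of degree $d$ that vanishes exactly at the jumps and alternates sign on the resulting $2d$ arcs. Replacing $p$ by $-p$ if necessary, $p<0$ on every arc where $u_0=1$ and $p>0$ on every arc where $u_0=0$. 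Because the sampling grid stays at distance at least $1/(2N)$ from every jump and $p$ has only simple zeros there, $v(x):=p(x/N)$ inherits the strict sign conditions. Since $p$ has spectrum in $[-d,d]\subset[-N/2,N/2-1]$, no aliasing occurs when sampling, so the DFT of $v$ is supported in $\{|k|\leq d\}$ and $v$ lies in the image of $A^\top$. Theorem \ref{theorem1p} then gives that $u_0$ is the unique $(P_1)$-solution.

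For optimality, suppose instead that only $\{a_k\}_{|k|\leq d-1}$ are known, so $A^\top\eta$ now ranges over samples of real trigonometric polynomials of degree at most $d-1$. If reconstruction via $(P_1)$ were still possible, Theorem \ref{theorem1p} would supply such a polynomial $v$ that is strictly negative at the sample points of every $u_0=1$ arc and strictly positive at those of every $u_0=0$ arc. Since $u_0$ alternates on the $2d$ arcs, the intermediate value theorem forces $v$ to have at least one zero between each pair of consecutive arcs, producing at least $2d$ zeros on $\T$. On the other hand, writing $z=e^{2\pi it}$ makes $z^{d-1}v(t)$ an algebraic polynomial of degree $2d-2$ in $z$, so $v$ has at most $2d-2$ zeros on $\T$. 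Since $2d>2d-2$, no such $v$ can exist, and reconstruction fails.

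The only delicate point is bridging the continuous construction of Lemma \ref{trigonointerp} with the discrete setting of $(P_1)$: the placement of jumps at half-integer offsets is precisely what keeps the sampling grid uniformly bounded away from the zero set of $p$, preserving the strict inequalities demanded by Theorem \ref{theorem1p}. The optimality argument, by contrast, is a clean zero-counting estimate, and it is in fact uniform in $u_0$, showing that whenever the top frequency band $|k|=d$ is discarded, $(P_1)$ fails to recover \emph{every} binary signal with $2d$ alternating intervals, not merely some pathological one.
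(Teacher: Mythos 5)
Your proposal is correct and follows essentially the same route as the paper: reduce via Theorem \ref{theorem1p} to constructing a band-limited dual certificate, invoke Lemma \ref{trigonointerp} with the zeros placed at half-integer offsets between grid points so the sampled signal keeps strict signs, and prove optimality by counting zeros of a trigonometric polynomial of degree at most $d-1$ (at most $2d-2$, while alternation would force $2d$). Your write-up merely makes explicit some steps the paper leaves implicit (the no-aliasing remark and the intermediate-value-theorem zero count), so no further comment is needed.
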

\begin{proof}
By theorem \ref{theorem1p}, we only need to construct a discrete signal $v=A^\top\eta=\F^{-1}(S\eta)$ that changes sign only at the endpoints of the intervals, where $S$ is the sampling operator selecting $\{a_k:|k|\leq d\}$. Let the starting points of each interval be $\{s_i\}$, by lemma \ref{trigonointerp} we can find a trigonometric polynomial $\sum_{k=-d}^{d}a_ke^{2\pi ikt}$ that changes sign only at $\left\{\frac{1}{N}\left(s_i-\frac{1}{2}\right)\right\}.$ Let 
\[v(x)=\sum_{k=-d}^{d}a_ke^{2\pi ik\frac{x}{N}},\quad x\in\{1,\ldots,N\},
\] then $v$ (or $-v$) satisfies the requirement. To show that this result is optimal, we only need to notice that a trigonometric polynomial with order less than $d$ cannot have $2d$ zeros due to the fundamental algebraic theorem. 
\end{proof}

\begin{theorem*}\textbf{\emph{\ref{levelsetcondition}.}}
Assume $u(x,y)$ is a 2D binary function with analytic jump curve and the average directional number of zero-crossings of $u$ along the angle $\theta$ is denoted as  $K_\theta$. If there exists a band-limited real function $v(x,y)=\sum_{(j,k)\in\Omega} a_{jk}e^{2\pi i(jx+ky)}$ defined on $\T^2$, where $\Omega=\{(j,k):\sqrt{j^2+k^2}\leq d\}$, such that the jump set of $u(x,y)$ corresponds to the zero levelset of $v(x,y)$, then  $K_\theta\leq 2d,\forall \theta$.
\end{theorem*}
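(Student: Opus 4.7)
The plan is to first bound $K_\theta$ for directions with rational tangent via a direct reduction to a one-dimensional trigonometric polynomial on a closed geodesic of $\T^2$, and then extend to every $\theta$ by continuity. For concreteness I treat $\theta\in(-\pi/4,\pi/4]$; the complementary range $(\pi/4,3\pi/4]$ is symmetric under swapping the roles of $x$ and $y$.

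Fix $\tan\theta = p/q$ with $\gcd(p,q)=1$ and $q>0$. On $\T^2$ the closed geodesic through $(0,s)$ in direction $e_\theta=(\cos\theta,\sin\theta)$ admits the parameterization $(qt\bmod 1,\,s+pt\bmod 1)$ for $t\in[0,1]$, and the restriction of $v$ to it is
\[
g_s(t)=\sum_{(j,k)\in\Omega} a_{jk}\, e^{2\pi i k s}\, e^{2\pi i (jq+kp) t},
\]
a real trigonometric polynomial whose degree is bounded by $\max_{(j,k)\in\Omega}|jq+kp|\le d\sqrt{p^2+q^2}$ via the Cauchy--Schwarz inequality $|(j,k)\cdot(q,p)|\le\sqrt{j^2+k^2}\sqrt{p^2+q^2}$. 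Hence $g_s$ has at most $2d\sqrt{p^2+q^2}$ zeros in $[0,1)$. Since $\gcd(p,q)=1$, this closed geodesic is the concatenation of the $q$ segments $L_{s+j/q\bmod 1,\theta}$ for $j=0,\dots,q-1$, so $\sum_{j=0}^{q-1}\#L_{s+j/q,\theta}\le 2d\sqrt{p^2+q^2}$. Integrating in $s$ over the fundamental domain $[0,1/q)$ of the shift $s\mapsto s+1/q$ and using $\cos\theta=q/\sqrt{p^2+q^2}$,
\[
K_\theta=\cos\theta\int_0^1\#L_{s,\theta}\,ds=\cos\theta\int_0^{1/q}\sum_{j=0}^{q-1}\#L_{s+j/q,\theta}\,ds\le\cos\theta\cdot\frac{2d\sqrt{p^2+q^2}}{q}=2d.
\]

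For $\tan\theta$ irrational, choose rationals $\tan\theta_n\to\tan\theta$; the claim will follow once we show $\theta\mapsto K_\theta$ is continuous. The change of variables $(s,t)\mapsto(t,s+t\tan\theta)$ has unit Jacobian on $[0,1)^2\to\T^2$, so applying the coarea formula to $F(x,y)=y-x\tan\theta\bmod 1$ on $\Gamma$ yields the intrinsic identity
\[
K_\theta=\int_\Gamma |\langle \tau(x),e_\theta^\perp\rangle|\,dH^1(x),
\]
where $\tau$ is the unit tangent to $\Gamma$ and $e_\theta^\perp=(-\sin\theta,\cos\theta)$. Since $\Gamma$ is a finite union of analytic arcs of finite total length and the integrand is uniformly bounded, $K_\theta$ is continuous in $\theta$ by dominated convergence, and the claim follows.

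The main obstacle is precisely the irrational case: a single line of irrational slope traces a non-closed arc in $\T^2$ on which $v$ restricts to a quasiperiodic function, and there is no direct pointwise bound on the number of its zero-crossings in a unit interval. The averaging and rational-approximation trick sidesteps this issue, while the multiplier $\cos\theta$ built into the definition of $K_\theta$ is exactly what makes the Cauchy--Schwarz bound $|jq+kp|\le d\sqrt{p^2+q^2}$ saturate to give the sharp constant $2d$ uniformly in the direction $\theta$.
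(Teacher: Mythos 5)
Your proof is correct and follows essentially the same route as the paper: reduce to directions with rational slope $\tan\theta=p/q$, concatenate the segments $L_{s+j/q,\theta}$ into a closed geodesic on which $v$ restricts to a real trigonometric polynomial of degree at most $d\sqrt{p^2+q^2}$, bound its zeros by twice the degree, and average over $s\in[0,1/q)$ with $\cos\theta=q/\sqrt{p^2+q^2}$ to get $K_\theta\le 2d$. Your Crofton-type identity and dominated-convergence argument for irrational slopes is simply a more explicit justification of the density/approximation step that the paper only sketches.
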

\begin{proof}
In this proof we are discussing the problem on a torus $\T^2$, so every coordinates are automatically mod by $1$ without explicitly written to make the notations clear. 

Without loss of generality, we only prove the case when $\theta\in(-\pi/4,\pi/4]$. The other case can be automatically proved by switching the $x$ and $y$ coordinates. 

Since the zero levelset of $v(x,y)$ are analytic curves, $K_{\theta}(s)$ is a piece-wise constant function with respect to both $s$ and $\theta$ with finitely many jumps, and it is easy to see that we can only prove the theorem for $\theta$ in $\{\theta:\tan\theta\in \Q,\theta\neq 0,1\}$ because this set is dense in $(-\pi/4,\pi/4]$.

For any $\theta\neq 0$ s.t. $\tan\theta\in \Q$, assume 
\[
\tan\theta=p/q,\quad p<q\in\Z\text{ are coprime.}
\]Let 
\[
F_{s,\theta}(t)=(t,s+t\tan\theta)\mod 1,\quad t\in\R 
\]
be a linear flow on $\T^2$, then $F_{s,\theta}(t)$ is a periodic function with period $q$ since $F_{s,\theta}(t+q)\equiv (t+q,s+(t+q)\tan\theta)\equiv (t,s+t\tan\theta)\equiv F_{s,\theta}(t)\mod 1$. 

The main idea of the proof is based on the following observation: a whole period of  $F_{s,\theta}$ can be split to $q$ segments $L_{s_m,\theta}$, $m=0,1,2,\ldots, q-1$. Therefore, counting the average intersection along the line segments $L_{s,\theta}$ can be replaced by counting the intersection along $F$. The latter is easier because it can be reduced to counting the 1D zero crossings of a trigonometric polynomial inside a period. 

Recall that
\[
L_{s,\theta}(t)=(t, s+t\tan\theta)\mod 1, \quad s\in[0,1], t\in[0,1]
\]
We will first show that $F_{s,\theta}(t),\quad t\in[0,q]$ can be split to $q$ segments $L_{s_m,\theta}$, $m=0,1,2,\ldots, q-1$. Indeed, when $t\in[m,m+1]$ for $m=0,1,2,\ldots, q-1$, it is easy to check 
\[F_{s,\theta}(t)=L_{s+mp/q,\theta}(t-m),
\] so when $t$ goes from $0$ to $q$, $F_{s,\theta}(t)$ can be seen as the connected version of $q$ line segments $\{L_{s+mp/q,\theta}: m=0,\ldots,q-1\}$. Since $p$ and $q$ are coprime, elementary number theory tells us that 
\[
\left\{\text{mod}\left(mp/q,1\right)\right\}_{m=0,1,2,\ldots, q-1}= \left\{\text{mod}\left(m/q,1\right)\right\}_{m=0,1,2,\ldots, q-1}.
\]
Similar to the notation of $\#L_{s,\theta}$, we use $\#F_{s,\theta}$ to denote the intersection of $F_{s,\theta}$, $t\in[0,q]$ with the jump set of $u_0$, then 
\[
\#F_{s,\theta}=\sum_{m=0}^{q-1}\#L_{s+mp/q,\theta}=\sum_{m=0}^{q-1}\#L_{s+m/q,\theta}
\]
Therefore, we have
\begin{align}\nonumber
K_\theta&=\cos\theta\int_0^1 \#L_{s,\theta}ds=\cos\theta\sum_{m=0}^{q-1}\int_{\frac{m}{q}}^{\frac{m+1}{q}}\#L_{s,\theta}ds\\ \nonumber
&= \cos\theta\sum_{m=0}^{q-1}\int_0^{\frac{1}{q}}\#L_{s+m/q,\theta}ds\\ \nonumber
&=\cos\theta\int_0^{\frac{1}{q}}\left(\sum_{m=0}^{q-1}\#L_{s+m/q,\theta}\right)ds\\
&=\cos\theta\int_0^{\frac{1}{q}}\#F_{s,\theta}ds \label{K_theta_F}
\end{align}
That is to say, we replace $\#L_{s,\theta}$ in the definition of $K_\theta$ by $\#F_{s,\theta}$. Now we start to evaluate $\#F_{s,\theta}$. Since  $u_0(x,y)$ is corresponding to the zero levelset of $v(x,y)$, $\#F_{s,\theta}$ is equal to the number of zero-crossings of $v(x,y)$ along $F_{s,\theta}$, and we need to count the zero-crossings of $v(F_{s,\theta}(t))$ when $t\in[0,q]$. Let $\tilde v(t)=v(F_{s,\theta}(t))$, recall 
\[v(x,y)=\sum_{(j,k)\in\Omega} a_{jk}e^{2\pi i(jx+ky)},
\] we have
\[
\tilde v(t)=\sum_{(j,k)\in\Omega} a_{jk}e^{2\pi i(jt+k(s+tp/q))}
\]so
\be\label{vqt}
\tilde v(qt)=\sum_{(j,k)\in\Omega} a_{jk}e^{2\pi iks}e^{2\pi i(jq+kp)t}.
\ee
Since $F_{s,\theta}(t)$ is periodic with period $q$, so is $\tilde v(t)$, thus $\tilde v(qt)$ as a function of $t$ is periodic with period $1$. Since $\sqrt{j^2+k^2}\leq d$ in $\Omega$, we have
\[
|jq+kp|=\sqrt{(j^2+k^2)(p^2+q^2)-(jp-kq)^2}\leq d\sqrt{p^2+q^2}
\]
Then \eqref{vqt} tells us that $\tilde v(qt)$ as a function of $t$ can be expanded as a trigonometric polynomial with order no more than $d\sqrt{p^2+q^2}$, therefore the zero-crossing of $\tilde v(qt)$ is no more than $2 d\sqrt{p^2+q^2}$, i.e. $\#F_{s,\theta}\leq 2 d\sqrt{p^2+q^2}$. Plug it into \eqref{K_theta_F}, we have $K_\theta\leq2d\cos\theta\sqrt{p^2+q^2}/q=2d$.

\end{proof}

\begin{theorem*}\textbf{\emph{\ref{robustness}.}}
If $u_0$ is the unique solution of $(P_1)$, $\tilde b=b+ \epsilon $ is the corrupted measurement,  $\tilde u$ is the minimizer of the optimization problem \[
\min_u \|Au-\tilde b\|_2^2\quad\st\quad 0\leq u\leq 1,\]
 then when $\|\epsilon\|< h(A, \mathbb{O}_{u_0})$ where $h>0$ is a small amount depending only on $A$ and $\mathbb{O}_{u_0}$ (see details in the proof), $B(\tilde u)=u_0$. 
\end{theorem*}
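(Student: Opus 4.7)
The plan is to promote the strict separation guaranteed by Theorem \ref{theorem1p} into a quantitative margin, and then read $\tilde u - u_0$ as a bounded perturbation sitting inside that orthant. First I would set $w = \tilde u - u_0$. Because $u_0$ is binary and $0 \le \tilde u \le 1$, at every $x$ with $u_0(x) = 1$ we have $w(x) = \tilde u(x) - 1 \le 0$, and at every $x$ with $u_0(x) = 0$ we have $w(x) = \tilde u(x) \ge 0$; hence $w \in \mathbb{O}_{u_0}$. Optimality of $\tilde u$ compared with the feasible point $u_0$ gives $\|A\tilde u - \tilde b\|_2^2 \le \|A u_0 - \tilde b\|_2^2 = \|\epsilon\|_2^2$, and substituting $\tilde u = u_0 + w$ together with $A u_0 = b$ collapses this to $\|Aw - \epsilon\|_2^2 \le \|\epsilon\|_2^2$, which expands to $\|Aw\|_2 \le 2\|\epsilon\|_2$.

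Next I would extract a strict positive lower bound for $\|Aw\|_2$ over $\mathbb{O}_{u_0}$ by a compactness argument. Define
\[
\mu(A, \mathbb{O}_{u_0}) := \min\{\|Aw\|_2 :\ w \in \mathbb{O}_{u_0},\ \|w\|_\infty = 1\}.
\]
The minimum is attained because the feasible set is the intersection of the closed cone $\mathbb{O}_{u_0}$ with the compact boundary of the $\ell_\infty$ unit ball (and that intersection is nonempty, e.g.\ contains $\one - 2u_0$). Since $u_0$ is the unique solution of $(P_1)$, Theorem \ref{theorem1} gives $\mathbb{O}_{u_0} \cap \ker A = \{0\}$, so $Aw \neq 0$ for every nonzero $w$ in the cone and hence $\mu > 0$. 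By homogeneity this yields the linear lower bound $\|Aw\|_2 \ge \mu \|w\|_\infty$ for every $w \in \mathbb{O}_{u_0}$.

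Combining the two bounds at our specific $w = \tilde u - u_0$ I would conclude $\|w\|_\infty \le 2\|\epsilon\|_2 / \mu$. Setting $h(A, \mathbb{O}_{u_0}) := \mu/4$, whenever $\|\epsilon\|_2 < h$ every entry of $w$ is strictly within $1/2$ of zero, so $\tilde u(x)$ is strictly within $1/2$ of $u_0(x) \in \{0,1\}$ and the threshold operator returns $B(\tilde u) = u_0$. The only nontrivial step is the existence of the positive margin $\mu$; the rest is simply using the box constraint to locate $w$ in the orthant and using optimality to control $\|Aw\|_2$.
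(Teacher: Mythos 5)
Your argument is correct, but it reaches the conclusion by a different route than the paper. You use only the feasibility of $u_0$ in the perturbed problem to get $\|A\tilde u-\tilde b\|\leq\|\epsilon\|$, hence $\|A(\tilde u-u_0)\|\leq 2\|\epsilon\|$, and then convert the qualitative fact $\mathbb{O}_{u_0}\cap\ker A=\{0\}$ (from theorem \ref{theorem1}) into a quantitative margin $\mu=\min\{\|Aw\|_2:w\in\mathbb{O}_{u_0},\|w\|_\infty=1\}>0$ by compactness and homogeneity, which yields $\|\tilde u-u_0\|_\infty<1/2$ once $\|\epsilon\|<\mu/4$. The paper instead proves a lemma about the minimizer $w^*$ of $\min_w\|Aw-\epsilon\|$ over the whole orthant $\mathbb{O}_{u_0}$: it uses the KKT/complementary-slackness conditions to get the orthogonality $\langle Aw^*,Aw^*-\epsilon\rangle=0$ (hence $\|Aw^*\|\leq\|\epsilon\|$), and then invokes the Gordan--Stiemke alternative (lemma \ref{alternative}) to produce an explicit dual certificate $v=A^\top\eta\in\mathrm{int}(\mathbb{O}_{u_0})$ with $\|\eta\|=1$, pairing $w^*$ against $v$ to bound each $|w^*_i|$; its $h$ is half the smallest component of that certificate. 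The trade-off: the paper's duality route makes $h$ explicit in terms of a certificate and gives the sharper constant $\|Aw^*\|\leq\|\epsilon\|$ rather than $2\|\epsilon\|$, whereas your compactness constant $\mu$ is variational and non-constructive. On the other hand, your argument is more elementary (no LP duality or KKT needed), applies verbatim to any feasible $\tilde u$ whose objective does not exceed $\|\epsilon\|^2$ (so minimizer uniqueness is irrelevant), and sidesteps the step in the paper where $\tilde u-u_0$, which a priori minimizes only over the box $\mathbb{O}_{u_0}\cap\{\|w\|_\infty\leq1\}$, is identified with the minimizer over the full orthant --- a point the paper passes over as ``easy to see'' but which your approach simply never needs.
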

\begin{proof}
\noindent We introduce the following lemma: \\
\\
\textbf{Lemma:} \emph{Let $\mathbb{O}$ be an orthant such that $\mathbb{O}\cap\ker A=\{0\}$. When $\|\epsilon\|< h(A,\mathbb{O})$ where $h>0$ is a small amount depending only on $A$ and $\mathbb{O}$, the solution $w^*$ of the linear programming problem 
\[\min_w \|Aw-\epsilon\|\quad \st\quad w\in\mathbb{O}
\]satisfies $|w^*_i|< 1/2,\forall i$.}

At first we demonstrate that this lemma implies the conclusion we need. Because $u_0$ is the unique solution of $(P_1)$, theorem \ref{theorem1} tells us that $\mathbb{O}_{u_0}\cap\ker A=\{0\}$. If $\tilde u$ solves 
\[\min_u \|Au-\tilde b\|_2^2\quad \st\quad0\leq u\leq 1,
\]
then because $\tilde b=b+\epsilon=Au_0+\epsilon$, it is easy to see that $w^*=\tilde u-u_0$ solves 
\[\min_w \|Aw-\epsilon\|\quad \st\quad w\in\mathbb{O}_{u_0}
\]
From the above lemma, since $\|\epsilon\|$ is small enough, we have $|w^*_i|<1/2,\forall i$. Therefore $B(\tilde u)=u_0$ and the conclusion follows.

Now we go back to prove the lemma. 

Consider the linear programming problem
\[\min_w \|Aw-\epsilon\|\quad \st\quad w\in\mathbb{O}
\]
The duality problem is:
\[\max_{\mu}\, -\mu^\top\epsilon \quad \st\quad A^\top\mu\in \mathbb{O},\,\|\mu\|\leq 1\]
The Karush-Kuhn-Tucker conditions of the optimal variables of the primal and duality problems are
\be
w^*\in\mathbb{O},\,A^\top\mu^*\in \mathbb{O},\|\mu^*\|\leq 1
\ee
\be\label{compslack1}
w^*_i\cdot (A^\top\mu^*)_i=0,\forall i
\ee
\be\label{compslack2}
\mu=\frac{Aw^*-\epsilon}{\|Aw^*-\epsilon\|} \text{ if } Aw^*-\epsilon\neq 0
\ee
where the latter two are the complementary slackness conditions. From \eqref{compslack1} we see that $\langle w^*,A^\top \mu^*\rangle=0$, so when $Aw^*-\epsilon\neq 0$, by \eqref{compslack2} we get
\begin{align}\nonumber
\langle Aw^*,Aw^*-\epsilon \rangle&=\|Aw^*-\epsilon\|\langle Aw^*,\mu \rangle\\\label{orthogonality}
&=\|Aw^*-\epsilon\|\langle w^*,A^\top \mu^*\rangle=0.
\end{align}
Apparently \eqref{orthogonality} also holds when $Aw^*-\epsilon=0$, so it is satisfied anyway. Therefore, Pythagorean's theorem shows \[\|Aw^*\|^2=\|\epsilon\|^2-\|Aw^*-\epsilon\|^2\leq\|\epsilon\|^2,\] i.e. $\|Aw^*\|\leq \|\epsilon\|$.

By the alternative theorem \ref{alternative}, from $\mathbb{O}\cap\ker A=\{0\}$ we know that there exists $v=A^\top \eta \in\text{int}(\mathbb{O})$. Without loss of generality we assume $\|\eta\|=1$. Let $h$ be a positive number such that $|v_i|=|(A^\top\eta)_i|\geq 2h,\forall i$, then $h$ depends on $A$ and $\mathbb{O}$ only. When $\|\epsilon\|<h$ we have
\begin{align*}
\sum_iw^*_iv_i&=\langle w^*,v\rangle=\langle w^*,A^\top\eta\rangle\\
&=\langle Aw^*,\eta\rangle\leq \|Aw^*\|\|h\|=\|Aw^*\|\leq\|\epsilon\|< h.
\end{align*}
Because $w^*\in\mathbb{O}$ and $v\in\text{int}(\mathbb{O})$, $w^*_iv_i\geq 0,\forall i$. Therefore, $w^*_iv_i\leq\sum_iw^*_iv_i< h,\forall i$. From $|v_i|\geq 2h,\forall i$ we have $|w^*_i|< h/|v_i|\leq 1/2,\forall i$. That finishes the proof of the lemma.
\end{proof}

\IEEEpeerreviewmaketitle

\bibliographystyle{IEEEtranS}
\bibliography{IEEEabrv,biblist}

\begin{thebibliography}{10}
\providecommand{\url}[1]{#1}
\csname url@samestyle\endcsname
\providecommand{\newblock}{\relax}
\providecommand{\bibinfo}[2]{#2}
\providecommand{\BIBentrySTDinterwordspacing}{\spaceskip=0pt\relax}
\providecommand{\BIBentryALTinterwordstretchfactor}{4}
\providecommand{\BIBentryALTinterwordspacing}{\spaceskip=\fontdimen2\font plus
\BIBentryALTinterwordstretchfactor\fontdimen3\font minus
  \fontdimen4\font\relax}
\providecommand{\BIBforeignlanguage}[2]{{%
\expandafter\ifx\csname l@#1\endcsname\relax
\typeout{** WARNING: IEEEtranS.bst: No hyphenation pattern has been}%
\typeout{** loaded for the language `#1'. Using the pattern for}%
\typeout{** the default language instead.}%
\else
\language=\csname l@#1\endcsname
\fi
#2}}
\providecommand{\BIBdecl}{\relax}
\BIBdecl

\bibitem{Alajlan:2008p7326}
N.~Alajlan, M.~Kamel, and G.~Freeman, ``Geometry-based image retrieval in
  binary image databases,'' \emph{Pattern Analysis and Machine Intelligence,
  IEEE Transactions on}, vol.~30, no.~6, pp. 1003 -- 1013, 2008.

\bibitem{Boyd:2004p1827}
S.~Boyd and L.~Vandenberghe, ``Convex optimization,'' 2004.

\bibitem{Bresson:2007p7744}
X.~Bresson, S.~Esedoglu, P.~Vandergheynst, J.-P. Thiran, and S.~Osher, ``Fast
  global minimization of the active contour/snake model,'' \emph{J Math Imaging
  Vis}, vol.~28, no.~2, pp. 151--167, Jan 2007.

\bibitem{Bruckstein:2008p2703}
A.~M. Bruckstein, M.~Elad, and M.~Zibulevsky, ``On the uniqueness of
  nonnegative sparse solutions to underdetermined systems of equations,''
  \emph{IEEE Transactions on Information Theory}, vol.~54, no.~11, pp.
  4813--4820, 2008.

\bibitem{Candes:2006p625}
E.~J. Candes, J.~Romberg, and T.~Tao, ``Robust uncertainty principles: exact
  signal reconstruction from highly incomplete frequency information,''
  \emph{IEEE Transactions on Information Theory}, vol.~52, no.~2, pp. 489--
  509, 2006.

\bibitem{Candes:2006p1807}
------, ``Stable signal recovery from incomplete and inaccurate measurements,''
  \emph{Communications on Pure and Applied Mathematics}, vol.~59, no.~8, 2006.

\bibitem{Chan:2006p2646}
T.~F. Chan, S.~Esedoglu, and M.~Nikolova, ``Algorithms for finding global
  minimizers of image segmentation and denoising models,'' \emph{Siam J Appl
  Math}, vol.~66, no.~5, pp. 1632--1648, Jan 2006.

\bibitem{Cover:1965p7055}
T.~M. Cover, ``Geometrical and statistical properties of systems of linear
  inequalities with applications in pattern recognition,'' \emph{Electronic
  Computers, IEEE Transactions on}, vol. EC-14, no.~3, pp. 326 -- 334, 1965.

\bibitem{CURTIS:1987p7208}
S.~Curtis and A.~Oppenheim, ``Reconstruction of multidimensional signals from
  zero crossings,'' \emph{J Opt Soc Am A}, vol.~4, no.~1, pp. 221--231, Jan
  1987.

\bibitem{Curtis:1985p7206}
S.~Curtis, A.~Oppenheim, and J.~Lim;, ``Signal reconstruction from fourier
  transform sign information,'' \emph{Acoustics, Speech and Signal Processing,
  IEEE Transactions on}, vol.~33, no.~3, pp. 643 -- 657, 1985.

\bibitem{Donoho:2010p5288}
D.~Donoho and J.~Tanner, ``Exponential bounds implying construction of
  compressed sensing matrices, error-correcting codes, and neighborly polytopes
  by random sampling,'' \emph{Information Theory, IEEE Transactions on},
  vol.~56, no.~4, pp. 2002 -- 2016, 2010.

\bibitem{Donoho:2006p668}
D.~L. Donoho, ``Compressed sensing,'' \emph{IEEE Transactions on Information
  Theory}, vol.~52, no.~4, pp. 1289--1306, 2006.

\bibitem{Donoho:1989p3875}
D.~L. Donoho and P.~Stark, ``Uncertainty principles and signal recovery,''
  \emph{Siam J Appl Math}, vol.~49, no.~3, pp. 906--931, Jun 1989.

\bibitem{Donoho:2010p6951}
D.~L. Donoho and J.~Tanner, ``Counting the faces of randomly-projected
  hypercubes and orthants, with applications,'' \emph{Discrete Comput Geom},
  vol.~43, no.~3, pp. 522--541, Jan 2010.

\bibitem{Esedoglu:2004p7220}
S.~Esedoglu, ``Blind deconvolution of bar code signals,'' \emph{Inverse
  Problems}, vol.~20, no.~1, pp. 121--135, Jan 2004.

\bibitem{Fuchs:2005p8032}
J.~Fuchs, ``Sparsity and uniqueness for some specific under-determined linear
  systems,'' \emph{Acoustics, Speech, and Signal Processing, 2005. Proceedings.
  (ICASSP '05). IEEE International Conference on}, vol.~5, pp. v/729 -- v/732
  Vol. 5, 2005.

\bibitem{Galaktionov:2005p7108}
V.~A. Galaktionov and P.~J. Harwin, ``Sturm's theorems on zero sets in
  nonlinear parabolic equations,'' pp. 173--199, 2005.

\bibitem{Goldstein:2009p2473}
T.~Goldstein and S.~Osher, ``The split bregman method for l1 regularized
  problems,'' \emph{SIAM J. Imaging Sci}, vol.~2, no.~2, pp. 323--343, 2009.

\bibitem{Hummel:1989p7209}
R.~Hummel and R.~Moniot, ``Reconstructions from zero crossings in scale
  space,'' \emph{Acoustics, Speech and Signal Processing, IEEE Transactions
  on}, vol.~37, no.~12, pp. 2111 -- 2130, 1989.

\bibitem{Ishikawa:2003p7731}
H.~Ishikawa, ``Exact optimization for markov random fields with convex
  priors,'' \emph{Pattern Analysis and Machine Intelligence, IEEE Transactions
  on}, vol.~25, no.~10, pp. 1333 -- 1336, 2003.

\bibitem{Kedem:1986p7053}
B.~Kedem, ``Spectral analysis and discrimination by zero-crossings,''
  \emph{Proceedings of the IEEE}, vol.~74, no.~11, pp. 1477 -- 1493, 1986.

\bibitem{Kontogiorgis:1998p8034}
S.~Kontogiorgis and R.~Meyer, ``A variable-penalty alternating directions
  method for convex optimization,'' \emph{Math. Programming}, vol.~83, no.~1,
  pp. 29--53, Jan 1998.

\bibitem{Kozma:2002p7049}
G.~Kozma and F.~Oravecz, ``On the gaps between zeros of trigonometric
  polynomials,'' \emph{Real Anal. Exchange}, vol.~28, no.~2, pp. 447--454,
  2002.

\bibitem{Litman:1998p7217}
A.~Litman, D.~Lesselier, and F.~Santosa, ``Reconstruction of a two-dimensional
  binary obstacle by controlled evolution of a level-set,'' \emph{Inverse
  Problems}, vol.~14, no.~3, pp. 685--706, Jan 1998.

\bibitem{Logan:1977p7105}
B.~Logan, ``information in the zero crossings of bandpass signals,'' \emph{Bell
  Syst Tech J}, vol.~56, no.~4, pp. 487--510, Jan 1977.

\bibitem{Nashold:1989p7110}
K.~M. Nashold, J.~A. Bucklew, W.~Rudin, and B.~E.~A. Saleh, ``Synthesis of
  binary images from band-limited functions,'' \emph{J. Opt. Soc. Amer. A},
  vol.~6, no.~6, pp. 852--858, 1989.

\bibitem{Osher:2005p632}
S.~Osher, M.~Burger, D.~Goldfarb, J.~Xu, and W.~Yin, ``An iterative
  regularization method for total variation-based image restoration,''
  \emph{Multiscale Model Sim}, vol.~4, no.~2, pp. 460--489, Jan 2005.

\bibitem{Osher:2001p7372}
S.~Osher and F.~Santosa, ``Level set methods for optimization problems
  involving geometry and constraints i. frequencies of a two-density
  inhomogeneous drum,'' \emph{J Comput Phys}, vol. 171, no.~1, pp. 272--288,
  Jan 2001.

\bibitem{Pock:2010p7690}
T.~Pock, D.~Cremers, and H.~Bischof{\ldots}, ``Global solutions of variational
  models with convex regularization,'' \emph{SIAM J. IMAGING SCIENCES}, Jan
  2010.

\bibitem{Requicha:1980p7107}
A.~Requicha, ``The zeros of entire functions: Theory and engineering
  applications,'' \emph{Proceedings of the IEEE}, vol.~68, no.~3, pp. 308 --
  328, 1980.

\bibitem{Rotem:1986p7178}
D.~Rotem and Y.~Zeevi, ``Image reconstruction from zero crossings,''
  \emph{Acoustics, Speech and Signal Processing, IEEE Transactions on},
  vol.~34, no.~5, pp. 1269 -- 1277, 1986.

\bibitem{Sanz:1989p7201}
J.~Sanz, ``Multidimensional signal representation by zero crossings: An
  algebraic study,'' \emph{SIAM Journal on Applied Mathematics}, vol.~49,
  no.~1, pp. 281--295, Feb 1989.

\bibitem{Sanz:1989p7122}
J.~Sanz and T.~Huang, ``Image representation by sign information,''
  \emph{Pattern Analysis and Machine Intelligence, IEEE Transactions on},
  vol.~11, no.~7, pp. 729 -- 738, 1989.

\bibitem{Szlam:2010p7216}
A.~Szlam, Z.~Guo, and S.~Osher, ``A split bregman method for non-negative
  sparsity penalized least squares with applications to hyperspectral
  demixing,'' \emph{Image Processing (ICIP), 2010 17th IEEE International
  Conference on}, pp. 1917 -- 1920, 2010.

\bibitem{Ulanovskii:2006p7050}
A.~Ulanovskii, ``The sturm-hurwitz theorem and its extensions,'' \emph{J
  Fourier Anal Appl}, vol.~12, no.~6, pp. 629--643, Jan 2006.

\bibitem{Vetterli:2002p7065}
M.~Vetterli, P.~Marziliano, and T.~Blu, ``Sampling signals with finite rate of
  innovation,'' \emph{Signal Processing, IEEE Transactions on}, vol.~50, no.~6,
  pp. 1417 -- 1428, 2002.

\bibitem{Zakhor:1990p7196}
A.~Zakhor and A.~Oppenheim, ``Reconstruction of two-dimensional signals from
  level crossings,'' \emph{Proceedings of the IEEE}, vol.~78, no.~1, pp. 31 --
  55, 1990.

\end{thebibliography}

\end{document}